\newtheorem{theorem}{Theorem}
\newtheorem{lemma}{Lemma}
\newtheorem{corollary}{Corollary}
\newtheorem{definition}{Definition}
\newtheorem{remark}{Remark}
\newtheorem{condition}{Condition}
\newcommand\ddfrac[2]{\frac{\displaystyle #1}{\displaystyle #2}}
\DeclareMathOperator*{\argmin}{argmin}
\begin{document}

\title{Minimizing Age of Incorrect Information over a Channel with Random Delay} 

\author{
\IEEEauthorblockN{Yutao Chen and Anthony Ephremides}\\
\IEEEauthorblockA{Department of Electrical and Computer Engineering, University of Maryland}\thanks{Part of this work~\cite{b29} has been accepted to 2023 IEEE INFOCOM WKSHPS: Age of Information Workshop. }
}

\maketitle

\begin{abstract}
We consider a transmitter-receiver pair in a slotted-time system. The transmitter observes a dynamic source and sends updates to a remote receiver through an error-free communication channel that suffers a random delay. We consider two cases. In the first case, the update is guaranteed to be delivered within a certain number of time slots. In the second case, the update is immediately discarded once the transmission time exceeds a predetermined value. The receiver estimates the state of the dynamic source using the received updates. In this paper, we adopt the Age of Incorrect Information (AoII) as the performance metric and investigate the problem of optimizing the transmitter's action in each time slot to minimize AoII. We first characterize the optimization problem using the Markov decision process and investigate the performance of the threshold policy, under which the transmitter transmits updates only when the transmission is allowed and the AoII exceeds the threshold $\tau$. By delving into the characteristics of the system evolution, we precisely compute the expected AoII achieved by the threshold policy using the Markov chain. Then, we prove that the optimal policy exists and provide a computable relative value iteration algorithm to estimate the optimal policy. Furthermore, by leveraging the policy improvement theorem, we theoretically prove that, under an easily verifiable condition, the optimal policy is the threshold policy with $\tau=1$. Finally, numerical results are presented to highlight the performance of the optimal policy.
\end{abstract}

\begin{IEEEkeywords}
Age of Incorrect Information (AoII), information freshness, semantic communication, delay, Markov decision process
\end{IEEEkeywords}

\section{Introduction}
%% Semantice communication and Age of Information
Communication systems are used in all aspects of our lives and play an increasingly important role. As a result, communication systems are being asked to play more roles than just disseminating words, sounds, and images. With the proliferation of communication systems and the continuous expansion of their purposes, we have to demand higher performance from the communication systems. Meanwhile, we wonder whether traditional metrics such as throughput and latency can continue to meet such demands. One of the major drawbacks of such traditional metrics is that they treat every update equally and ignore that not every update can provide equally important information to the receiver for communication purposes. Because of this, researchers are trying to rethink existing communication paradigms and look for new ones, among which semantic communication is an important attempt. The semantics of information is formally defined in~\cite{b1} as the significance of the messages relative to the purpose of the data exchange. Then, semantic communication is regarded as \textit{"the provisioning of the right and significant piece of information to the right point of computation (or actuation) at the right point in time"}. Different from the classical metrics in data communication, semantic metrics incorporate the freshness of information, which is becoming increasingly important as real-time monitoring systems become ubiquitous in modern society. Typically, in such systems, a monitor monitors one or more events simultaneously and transmits status updates so that one or more remote receivers can have a good knowledge of the events. Therefore, the timeliness of information is often one of the most important performance indicators. The Age of Information (AoI), first introduced in~\cite{b1}, is one of the most successful examples of capturing information freshness. AoI tracks the time elapsed since the generation of the last received update, resulting in different treatments for different updates. For example, if the status update is significantly fresher than the information at the receiver, it will be more important and worth the extra resources to transmit. Let $V(t)$ be the generation time of the last update received up to time $t$. Then, AoI at time $t$ is defined by $\Delta_{AoI}(t) = t-V(t)$. After its introduction, AoI has attracted extensive attention~\cite{b3,b4,b5,b25}. However, AoI assumes that the age of each update always increases with time, ignoring the information content of the update. Such neglect is not always desirable. For example, in a remote monitoring system, the updates that provide the remote monitor with accurate information about the source process should be considered fresh, even if the update was generated earlier. This limitation leads to its poor performance in the problem of remote estimation. For example, we want to remotely estimate a rapidly changing event. In this case, a small AoI does not necessarily mean the receiver has accurate information about the event. Likewise, the receiver can make relatively accurate estimates without timely information if the event changes slowly.

%% Age of Incorrect Information
Inspired by the above limitation, the Age of Incorrect Information (AoII) is introduced in~\cite{b6}, which combines the timeliness of updates and the information content they convey. More specifically, AoII combines the degree of information mismatch between the receiver and the source and the aging process of the mismatched information. As defined in~\cite{b6}, AoII captures the aging process of conflicting information through a time penalty function that quantifies the time elapsed since the last time the receiver had the perfect information about the source. The mismatch between the receiver's information and the source is captured by the information penalty function, which quantifies the degree of information mismatch between the two. Because of the flexibility of the penalty functions, AoII can be adapted to different systems and communication objectives by choosing different penalty functions.

%% Related Papers
Since the introduction of AoII, much work has been done to reveal its fundamental nature and performance in various communication systems. AoII minimization under resource constraints is investigated first. In~\cite{b6}, the authors investigate the minimization of AoII when there is a limit on the average number of transmissions allowed. Then, in~\cite{b7}, the authors extend the results to the generic time penalty function case. However, in both papers, the measure of the information mismatch is binary, either true or false. In~\cite{b8}, the authors investigate a similar system setting, but the AoII considers the quantified information mismatch between the source and the receiver. AoII in the context of scheduling is another critical problem. In scheduling problems, a base station observes multiple events and needs to select a subset of the users to update. Under these general settings, \cite{b9} investigates the problem of minimizing AoII when the channel state information is available and the time penalty function is generic. The authors of~\cite{b10} consider a similar system, but the base station cannot know the states of the events before the transmission decision is made. In real-life applications, we usually have no knowledge of the statistical model of the source process. Therefore, the authors in~\cite{b27} investigate the problem of minimizing AoII for an unknown Markovian source. The relationship between the estimation error and AoII is studied in~\cite{b26}. Moreover, a variant of AoII - Age of Incorrect Estimates is introduced and studied in~\cite{b28}. Although the above research covers a wide range of applications, one problem still has yet to receive much attention: minimizing AoII under delay. Communication channels usually suffer random delays in real-world applications due to various influences. Under this system setup, the authors of~\cite{b11} compare the performances of AoII, AoI, and real-time error through extensive numerical simulations. This paper considers a similar system setup, but we investigate the problem from a theoretical perspective. We accurately calculate the expected AoII achieved by some canonical policies, which enables us to theoretically solve the problem of minimizing AoII over a channel with random delay. Communication channel with a random delay has also been studied in the context of remote estimation and AoI~\cite{b13,b14,b15,b16}. However, the problem considered in this paper is very different, as AoII is a combination of age-based metric frameworks and error-based metric frameworks.

%% Contribution
The main contributions of this paper can be summarized as follows. 1) We investigate the AoII minimization problem in a system where the communication channel suffers a random delay and characterize the optimization problem using the Markov decision process. 2) We derive the analytical expression of the expected AoII achieved by the threshold policy, under which the transmitter initiates transmission only when the transmission is allowed and AoII exceeds the threshold. 4) We prove the existence of the optimal policy and introduce a computable value iteration algorithm to estimate the optimal policy. 5) We theoretically find the optimal policy using the policy improvement theorem. 

%% Structure
The remainder of this paper is organized as follows. We introduce the system model and the optimization problem in Section~\ref{sec-SystemOverview}. Then, Section~\ref{sec-MDP} characterizes the problem using the Markov decision process. In Section~\ref{sec-PolicyPerformance}, we derive the analytical expression of the expected AoII achieved by the threshold policy. Then, we show the existence of the optimal policy, provide the value iteration algorithm to estimate the optimal policy, and theoretically find the optimal policy using the policy improvement theorem in Section~\ref{sec-OptimalPolicy}. Finally, Section~\ref{sec-NumericalResults} concludes the paper with numerical results that highlight the performance of the optimal policy.

\section{System Overview}\label{sec-SystemOverview}
\subsection{System Model}\label{sec-SystemModel}
We consider a slotted-time system in which a transmitter observes a dynamic source and needs to decide when to send status updates to a remote receiver so that the receiver can have a good knowledge of the current state of the dynamic source. The dynamic source is modeled by a two-state symmetric Markov chain with state transition probability $p$. The transmitter receives an update from the dynamic source at the beginning of each time slot. The update at time slot $k$ is denoted by $X_k$. The old update is discarded upon the arrival of a new one. Then, the transmitter decides whether to transmit the new update based on the current system status. When the channel is idle, the transmitter chooses between transmitting the new update and staying idle. When the channel is busy, the transmitter has no choice but to stay idle. The updates will be transmitted over an error-free communication channel that suffers a random delay. In other words, the update will not be corrupted during the transmission, but each transmission will take a random amount of time $T\in\mathbb{N}^*$. We denote the probability mass function (PMF) by $p_t\triangleq Pr(T=t)$ and assume that $T$ is independent and identically distributed for each update. When a transmission finishes, the communication channel is immediately available for the subsequent transmission.

The receiver maintains an estimate of the current state of the dynamic source and modifies its estimate each time a new update is received. We denote by $\hat{X}_k$ the receiver's estimate at time slot $k$. According to~\cite{b16}, the best estimator when $p\leq\frac{1}{2}$ is the last received update. When $p>\frac{1}{2}$, the optimal estimator depends on the realization of transmission time. In this paper, we only consider the case of $0<p\leq\frac{1}{2}$. In this case, the receiver uses the last received update as the estimate. For the case of $p>\frac{1}{2}$, the results can be extended using the corresponding best estimator. The receiver uses $ACK/NACK$ packets to inform the transmitter of its reception of the new update. As is assumed in~\cite{b6}, the transmitter receives the $ACK/NACK$ packets reliably and instantaneously because the packets are generally very small compared to the size of the status updates. When $ACK$ is received, the transmitter knows that the receiver's estimate changes to the last sent update. When $NACK$ is received, the transmitter knows that the receiver's estimate does not change. In this way, the transmitter always knows the current estimate on the receiver side.

An illustration of the system model is shown in Fig.~\ref{fig-SystemModel}.
\begin{figure}[!t]
\centering
\includegraphics[width=3in]{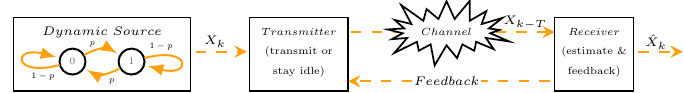}
\caption{An illustration of the system model, where $X_k$ and $\hat{X}_k$ are the state of the dynamic source and the receiver's estimate at time slot $k$, respectively.}
\label{fig-SystemModel}
\end{figure}
At the beginning of time slot $k$, the transmitter receives the update $X_k$ from the dynamic source. Then, the transmitter decides whether to transmit this update based on the system status. When the transmitter decides not to start transmission, it will stay idle. Otherwise, the transmitter will transmit the update through the communication channel, where the transmission of the update takes a random amount of time. Thus, the update received by the receiver has a delay of several time slots (i.e., $X_{k-T}$). Then, the receiver will modify its estimation $\hat{X_k}$ based on the received update and send an $ACK$ packet to inform the transmitter of its reception of the update.

\subsection{Age of Incorrect Information}\label{sec-AoII}
The system adopts the Age of Incorrect Information (AoII) as the performance metric. We first define $U_k$ as the last time slot up to time slot $k$ in which the receiver's estimate is correct. Mathematically,
\[
U_k \triangleq \max\{h:h\leq k, X_h = \hat{X}_h\}.
\]
Then, in a slotted-time system, AoII at time slot $k$ can be written as
\begin{equation}\label{eq-AoII}
\Delta_{AoII}(X_k,\hat{X}_k,k) = \sum_{h=U_k+1}^k\bigg(g(X_h,\hat{X}_h) F(h-U_k)\bigg),
\end{equation}
where $g(X_k,\hat{X}_k)$ is the information penalty function. $F(k) \triangleq f(k) - f(k-1)$ where $f(k)$ is the time penalty function. In this paper, we choose $g(X_k,\hat{X}_k) = |X_k-\hat{X}_k|$ and $f(k) = k$. Hence, $F(k)=1$ and $g(X_k,\hat{X}_k)\in\{0,1\}$ as the dynamic source has two states. Then, equation \eqref{eq-AoII} can be simplified as
\[
\Delta_{AoII}(X_k,\hat{X}_k,k) = k-U_k\triangleq\Delta_k.
\]
We can easily conclude from the simplified expression that, under the chosen penalty functions, AoII increases at the rate of $1$ per time slot when the receiver's estimate is incorrect. Otherwise, AoII is $0$. Next, we characterize the evolution of $\Delta_{k}$. To this end, we divide the evolution into the following cases.
\begin{itemize}
\item When $X_{k+1}=\hat{X}_{k+1}$, we have $U_{k+1} = k + 1$. Then, by definition, $\Delta_{k+1} = 0$.
\item When $X_{k+1}\neq\hat{X}_{k+1}$, we have $U_{k+1} = U_k$. Then, by definition, $\Delta_{k+1}=k+1-U_k=\Delta_k +1$.
\end{itemize}
Combining together, we have
\begin{equation}\label{eq-AoIIDynamics}
\Delta_{k+1} = \mathbbm{1}\{X_{k+1}\neq\hat{X}_{k+1}\}(\Delta_k+1),
\end{equation}
where $\mathbbm{1}\{A\}$ is the indicator function, whose value is one when event $A$ occurs and zero otherwise. A sample path of $\Delta_k$ is shown in Fig.~\ref{fig-SamplePath}.
\begin{figure}[!t]
\centering
\includegraphics[width=3in]{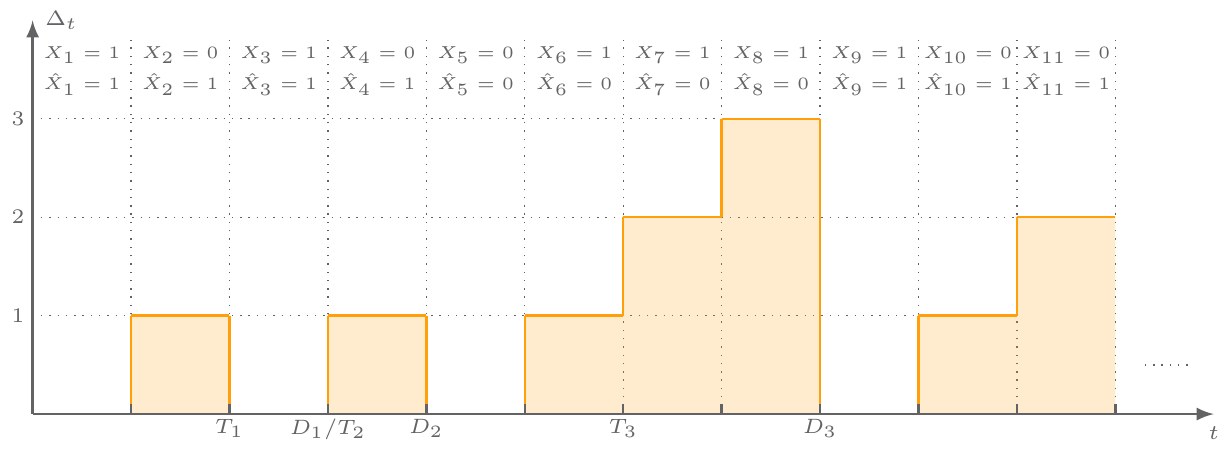}
\caption{A sample path of $\Delta_k$, where $T_i$ and $D_i$ are the transmission start and delivery time of the $i$-th update, respectively. At $T_1$, the transmitted update is $X_3$. Note that the transmission decisions in the plot are taken randomly.}
\label{fig-SamplePath}
\end{figure}
Now that the evolution of AoII has been clarified, we further discuss the system's evolution.

\subsection{System Dynamics}\label{sec-SystemDynamics}
In this subsection, we tackle the system dynamics, which will play a key role in later sections. We notice that the system's status at the beginning of time slot $k$ can be fully captured by the triplet $s_k\triangleq(\Delta_k,t_k,i_k)$ where $t_k\in\mathbbm{N}^0$ indicates the time the current transmission has been in progress. We define $t_k=0$ if there is no transmission in progress. $i_k\in\{-1,0,1\}$ indicates the state of the channel. We define $i_k=-1$ when the channel is idle, $i_k=0$ if the channel is busy and the transmitting update is the same as the receiver's current estimate, and $i_k=1$ when the transmitting update is different from the receiver's current estimate.
\begin{remark}
According to the definitions of $t_k$ and $i_k$, $i_k=-1$ if and only if $t_k=0$. In this case, the channel is idle.
\end{remark}
Then, characterizing the system dynamics is equivalent to characterizing the value of $s_{k+1}$ using $s_k$ and the transmitter's action. We use $a_k\in\{0,1\}$ to denote the transmitter's decision, where $a_k=0$ when the transmitter decides not to initiate a transmission and $a_k=1$ otherwise. Hence, the system dynamics can be fully characterized by $P_{s_k,s_{k+1}}(a_k)$, which is the probability that action $a_k$ at $s_k$ leads to $s_{k+1}$. We will revisit $P_{s_k,s_{k+1}}(a_k)$ with an in-depth analysis later.

\subsection{Problem Formulation}\label{sec-ProblemFormulation}
We define a policy $\phi$ as the one that specifies the transmitter's decision in each time slot. This paper aims to find the policy that minimizes the expected AoII of the system. Mathematically, the problem can be formulated as the following optimization problem.
\begin{argmini}|l|
{\phi \in \Phi} {\lim_{K\to\infty} \frac{1}{K}\mathbb{E}_{\phi}\left(\sum_{k=0}^{K-1}\Delta_k\right),}{\label{eq-goal}}{}
\end{argmini}
where $\mathbb{E}_{\phi}$ is the conditional expectation, given that policy $\phi$ is adopted, and $\Phi$ is the set of all admissible policies.
\begin{definition}[Optimal policy]\label{def-OptimalPolicy}
A policy is said to be optimal if it yields the minimal expected AoII.
\end{definition}
In the next section, we characterize the problem reported in \eqref{eq-goal} using a Markov Decision Process (MDP).

\section{MDP Characterization}\label{sec-MDP}
The minimization problem reported in \eqref{eq-goal} can be characterized by an infinite horizon with average cost MDP $\mathcal{M}$, which consists of the following components.
\begin{itemize}
\item The state space $\mathcal{S}$. The state $s=(\Delta,t,i)$ is the triplet defined in Section~\ref{sec-SystemDynamics} without the time stamp. For the remainder of this paper, we will use $s$ and $(\Delta,t,i)$ to represent the state interchangeably. They will synchronize any superscript or subscript.
\item The action space $\mathcal{A}$. When $i=-1$, the feasible action is $a\in\{0,1\}$ where $a=0$ if the transmitter decides not to initiate a new transmission and $a=1$ otherwise. When $i\neq-1$, the feasible action is $a=0$.
\item The state transition probability $\mathcal{P}$. The probability that the operation of action $a$ at state $s$ leads to state $s'$ is denoted by $P_{s,s'}(a)$, whose value will be discussed in the next subsection.
\item The immediate cost $\mathcal{C}$. The immediate cost for being at state $s$ is $C(s)=\Delta$.
\end{itemize}
Let $V(s)$ be the value function of state $s\in\mathcal{S}$. It is well known that the value function satisfies the Bellman equation~\cite{b19}.
\begin{equation}\label{eq-Bellman}
V(s) + \theta = \min_{a\in\mathcal{A}}\left\lbrace C(s) + \sum_{s'\in\mathcal{S}}P_{s,s'}(a)V(s')\right\rbrace \quad s\in\mathcal{S},
\end{equation}
where $\theta$ is the expected AoII achieved by the optimal policy. We will write $V(s)$ as $V(\Delta,t,i)$ in some parts of this paper to better distinguish between states. We notice that the state transition probability is essential for solving the Bellman equation. Hence, we delve into $P_{s,s'}(a)$ in the following subsection.

\subsection{State Transition Probability}\label{sec-STP}
We recall that $P_{s,s'}(a)$ is the probability that action $a$ at state $s$ will lead to state $s'$. 
\iffalse
To make it easier to follow, we first characterize separately the transitions of the three elements that make up the state $s$.
\begin{itemize}
\item $\Delta'$ can be $0$ or $\Delta+1$, depending on whether the receiver's estimate at state $s'$ is correct. The specific evolution is given by \eqref{eq-AoIIDynamics}. 
\item $t'$ can be $t+1$ or $0$, depending on whether there is a transmission in progress at state $s'$.
\item $i'=-1$ if and only if $t'=0$. Other than that, $i'$ can be $0$ or $1$, depending on whether the transmitting update is the same as the receiver's estimate at state $s'$.
\end{itemize}
With the individual transitions, we proceed to discuss their combined transitions and the corresponding probabilities. To this end, 
\fi
Then, we define $Pr(T>k+1\mid t)$ as the probability that the current transmission will take more than $t+1$ time slots, given that the current transmission has been in progress for $t$ time slots. Hence,
\[
Pr(T>t+1\mid t) = \frac{1-Pr(T\leq t+1)}{Pr(T>t)} = \frac{1-P_{t+1}}{1-P_t},
\]
where $P_t \triangleq \sum_{k=1}^{t}p_k$. Leveraging this, $P_{s,s'}(a)$ can be obtained easily. For the sake of space, the complete state transition probabilities are detailed in Appendix~\ref{app-STP} of the supplementary material.

We notice that we do not impose any restrictions on the update transmission time, which would make the theoretical analysis very difficult and lead to long channel occupancy by a single update. Therefore, to ease the theoretical analysis and be closer to the practice, we consider the following two independent assumptions.\footnote{The results presented in this paper apply to both assumptions unless stated otherwise.}
\begin{itemize}
\item \textbf{Assumption 1}: We assume that the update will always be delivered and the transmission lasts at most $t_{max}$ time slots. More precisely, we assume $1\leq T\leq t_{max}$ and
\[
\sum_{t=1}^{t_{max}}p_t = 1,\quad p_t\geq0,\ 1\leq t\leq t_{max}.
\]
In practice, we can make the probability of the transmission time exceeding $t_{max}$ negligible by choosing a sufficiently large $t_{max}$. 
\item \textbf{Assumption 2}: We assume the transmission can last for a maximum of $t_{max}$ time slots. At the end of the $t_{max}$th time slot, the update will be discarded if not delivered, and the channel will be available for a new transmission immediately. We define $p_{t^+}\triangleq\sum_{t=t_{max}+1}^{\infty}p_t$ as the probability that the update will be discarded. In practice, similar techniques, such as time-to-live (TTL)~\cite{b22}, are used to prevent an update from occupying the channel for too long.
\end{itemize}
\begin{remark}
$t_{max}$ is a predetermined system parameter and is not a parameter to be optimized. When $t_{max}=1$, the system reduces to the one considered in~\cite{b6}, according to which the optimal policy is to transmit a new update whenever possible. Therefore, in the rest of this paper, we focus on the case of $t_{max}>1$.
\end{remark}
Under both assumptions, the transmission will last at most $t_{max}$ time slots, and the channel will be immediately available for a new transmission when the current transmission finishes. Hence, the state space $\mathcal{S}$ is reduced as $t$ is now bounded by $0\leq t\leq t_{max}-1$. Moreover, the state transition probabilities in Appendix~\ref{app-STP} of the supplementary material will be adjusted as follows.
\begin{itemize}
\item Under \textbf{Assumption 1}, updates are bound to be delivered after $t_{max}$ time slots. Hence, $Pr(T>t+1\mid t)=0$ for $t\geq t_{max}-1$.
\item Under \textbf{Assumption 2}, updates will be discarded at the end of the $t_{max}$th time slot if not delivered. Hence, $s' = (\Delta',t_{max},i')$ will be replaced by $s' = (\Delta',0,-1)$.
\end{itemize}
Having clarified the state transition probabilities, we evaluate a canonical policy in terms of the achieved expected AoII in the next section.

\section{Policy Performance Analysis}\label{sec-PolicyPerformance}
As is proved in~\cite{b6,b7,b8}, the AoII-optimal policy often has a threshold structure. Hence, we consider the threshold policy.
\begin{definition}[Threshold policy]\label{def-ThreholdPolicy}
Under threshold policy $\tau$, the transmitter will initiate a transmission only when the current AoII is no less than threshold $\tau\in\mathbbm{N}^0$ and the channel is idle.
\end{definition}
\begin{remark}
We define $\tau\triangleq\infty$ as the policy under which the transmitter never initiates any transmissions.
\end{remark}
We notice that the system dynamics under threshold policy can be characterized by a discrete-time Markov chain (DTMC). Without loss of generality, we assume the DTMC starts at state $(0,0,-1)$. Then, the state space of the Markov chain $\mathcal{S}^{MC}$ consists of all the states accessible from state $(0,0,-1)$. Since state $(0,0,-1)$ is positive recurrent and communicates with each state $s\in\mathcal{S}^{MC}$, the stationary distribution exists. Let $\pi_{s}$ be the steady-state probability of state $s$. Then, $\pi_s$ satisfies the following balance equation.
\[
\pi_s = \sum_{s'\in\mathcal{S}^{MC}}P_{s',s}(a)\pi_{s'} \quad s\in\mathcal{S}^{MC},
\]
where $P_{s',s}(a)$ is the single-step state transition probability as define in Section~\ref{sec-MDP}, and the action $a$ depends on the threshold policy. Then, the first step in calculating the expected AoII achieved by the threshold policy is to calculate the stationary distribution of the induced DTMC. However, the problem arises as the state space $\mathcal{S}^{MC}$ is infinite and intertwined. To simplify the state transitions, we recall that the transmitter can only stay idle (i.e., $a=0$) when the channel is busy. Let $\mathcal{S}^{MC}_{-1}=\{s=(\Delta,t,i):i\neq-1\}$ be the set of the state where the channel is busy. Then, for $s'\in\mathcal{S}^{MC}_{-1}$, $P_{s',s}(a) = P_{s',s}(0)$ and is independent of the threshold policy. Hence, for any threshold policy and each $s\in\mathcal{S}\setminus\mathcal{S}^{MC}_{-1}$, we can repeatedly replace $\pi_{s'}$, where $s'\in\mathcal{S}^{MC}_{-1}$, with the corresponding balance equation until we get the following equation.
\begin{equation}\label{eq-CompactBalanceEq}
\pi_{s} = \sum_{s'\in \mathcal{S}\setminus\mathcal{S}^{MC}_{-1}}P_{\Delta',\Delta}(a)\pi_{s'}\quad s\in\mathcal{S}\setminus\mathcal{S}^{MC}_{-1},
\end{equation}
where $P_{\Delta',\Delta}(a)$ is the multi-step state transition probability from state $s'=(\Delta',0,-1)$ to state $s=(\Delta,0,-1)$ under action $a$. For simplicity, we write \eqref{eq-CompactBalanceEq} as
\begin{equation}\label{eq-CompactBalanceEq2}
\pi_{\Delta} = \sum_{\Delta'\geq0}P_{\Delta',\Delta}(a)\pi_{\Delta'}\quad \Delta\geq0.
\end{equation}
As we will see in the following subsections, $\pi_\Delta$ is sufficient to calculate the expected AoII obtained by any threshold policy.
\begin{remark}
The intuition behind the simplification of the balance equations is as follows. We recall that the system dynamics when the channel is busy are independent of the adopted policy. Hence, we can calculate these dynamics in advance so that the balance equations contain only the states in which the transmitter needs to make decisions. 
\end{remark}
\noindent In the next subsection, we derive the expression of $P_{\Delta,\Delta'}(a)$.

\subsection{Multi-step State Transition Probability}\label{sec-StateTransProb}
We start with the case of $a=0$. In this case, no update will be transmitted, and $P_{\Delta,\Delta'}(0)$ is independent of the transmission delay. Then, according to Appendix~\ref{app-STP} of the supplementary material,
\[
P_{0,\Delta'}(0) = \begin{dcases}
1-p & \Delta'=0,\\
p & \Delta'=1,
\end{dcases}
\]
and for $\Delta>0$,
\[
P_{\Delta,\Delta'}(0) = \begin{dcases}
p & \Delta'=0,\\ 
1-p & \Delta' = \Delta+1.
\end{dcases}
\]
In the sequel, we focus on the case of $a=1$. We define $P^{t}_{\Delta,\Delta'}(a)$ as the probability that action $a$ at state $s=(\Delta,0,-1)$ will lead to state $s'=(\Delta',0,-1)$, given that the transmission takes $t$ time slots. Then, under \textbf{Assumption 1},
\[
P_{\Delta,\Delta'}(1) = \sum_{t=1}^{t_{max}}p_tP^t_{\Delta,\Delta'}(1).
\]
Hence, it is sufficient to obtain the expressions of $P^t_{\Delta,\Delta'}(1)$. To this end, we define $p^{(t)}$ as the probability that the dynamic source will remain in the same state after $t$ time slots. Since the Markov chain is symmetric, $p^{(t)}$ is independent of the state and can be calculated by
\[
p^{(t)} = \left(\begin{bmatrix}
1-p & p\\
p & 1-p
\end{bmatrix}^t\right)_{11},
\]
where the subscript indicates the row number and the column number of the target probability. For the consistency of notation, we define $p^{(0)}\triangleq1$. Then, we have the following lemma.
\begin{lemma}\label{lem-MSTPAss1}
Under \textbf{Assumption 1}, 
\begin{equation}\label{eq-MSTPAss1}
P_{\Delta,\Delta'}(1) = \sum_{t=1}^{t_{max}}p_tP^t_{\Delta,\Delta'}(1),
\end{equation}
where
\[
P^{t}_{0,\Delta'}(1) = 
\begin{dcases}
p^{(t)} & \Delta'=0,\\
p^{(t-k)}p(1-p)^{k-1} & 1\leq\Delta'= k\leq t,\\
0 & otherwise,
\end{dcases}
\]
and for $\Delta>0$,
\begin{multline*}
P^{t}_{\Delta,\Delta'}(1)=\\
\begin{dcases}
p^{(t)} & \Delta'=0,\\
(1-p^{(t-1)})(1-p) & \Delta'=1,\\
(1-p^{(t-k)})p^{2}(1-p)^{k-2} & 2\leq \Delta'=k\leq t-1,\\
p(1-p)^{t-1} & \Delta'=\Delta+t,\\
0 & otherwise.
\end{dcases}
\end{multline*}
Under \textbf{Assumption 1}, equation \eqref{eq-MSTPAss1} can be written equivalently as \eqref{eq-EquivalentEq1}
\begin{figure*}[!t]
\normalsize
\begin{equation}\label{eq-EquivalentEq1}
P_{\Delta,\Delta'}(1) =\begin{dcases}
\sum_{t=\Delta'}^{t_{max}}p_tP^t_{\Delta,\Delta'}(1) & 0\leq\Delta'\leq t_{max}-1,\Delta\geq\Delta',\\
\sum_{t=\Delta'}^{t_{max}}p_tP^t_{\Delta,\Delta'}(1) + p_{t'}P^{t'}_{\Delta,\Delta'}(1) & 0\leq\Delta'\leq t_{max}-1,\Delta<\Delta',\\
p_{t'}P^{t'}_{\Delta,\Delta'}(1) & \Delta'\geq t_{max}.
\end{dcases}
\end{equation}
\hrulefill
\vspace*{4pt}
\end{figure*}
where $t'\triangleq\Delta'-\Delta$ and $P^{t'}_{\Delta,\Delta'}(1)\triangleq 0$ when $t'\leq0$ or when $t'>t_{max}$. Meanwhile, $P_{\Delta,\Delta'}(1)$ possesses the following properties.
\begin{enumerate}
\item $P_{\Delta,\Delta'}(1)$ is independent of $\Delta$ when $0\leq\Delta'\leq t_{max}-1$ and $\Delta\geq\Delta'$.
\item $P_{\Delta,\Delta'}(1) = P_{\Delta+\delta,\Delta'+\delta}(1)$ when $\Delta'\geq t_{max}$ and $\Delta\geq0$ for any $\delta\geq1$.
\item $P_{\Delta,\Delta'}(1)=0$ when $\Delta'>\Delta+t_{max}$ or when $t_{max}-1<\Delta'<\Delta+1$.
\end{enumerate}
\end{lemma}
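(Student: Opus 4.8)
The plan is to prove Lemma~\ref{lem-MSTPAss1} by conditioning on the transmission length and analyzing the AoII dynamics over a single transmission window. Fix a slot $k$ with $s_k=(\Delta,0,-1)$ and suppose the transmitter plays $a=1$; then the fresh sample $X_k$ is sent and, given $T=t$, delivered exactly at slot $k+t$, where the channel becomes idle again. By the law of total probability this gives $P_{\Delta,\Delta'}(1)=\sum_{t=1}^{t_{max}}p_tP^t_{\Delta,\Delta'}(1)$, i.e.\ \eqref{eq-MSTPAss1}, so it remains to compute each $P^t_{\Delta,\Delta'}(1)$. The key observation is that the receiver's estimate is frozen while the transmission is in progress: $\hat{X}_{k+j}=\hat{X}_k$ for $0\le j\le t-1$ and $\hat{X}_{k+t}=X_k$. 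Hence, writing $\Delta_{k+t}=(k+t)-U_{k+t}$ via \eqref{eq-AoIIDynamics}, the quantity $P^t_{\Delta,\Delta'}(1)$ depends only on the law of the symmetric source path $X_k,\dots,X_{k+t}$, and I would split into the regime $\Delta=0$ (where $\hat{X}_k=X_k$, so the estimate equals $X_k$ throughout $k,\dots,k+t$) and the regime $\Delta>0$ (where the estimate is the wrong value on $k,\dots,k+t-1$ and equals $X_k$ at $k+t$).

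Within each regime I would enumerate the possible position of the last correct slot $U_{k+t}$. The case $\Delta'=0$ only needs $X_{k+t}=X_k$, of probability $p^{(t)}$; the case $\Delta'=\Delta+t$ needs no correct slot in $(k,k+t]$, so $U_{k+t}=U_k$, and since the source sits at $X_k$ at slot $k$ and must disagree with the (wrong, then updated) estimate through $k+t$, this amounts to $t-1$ consecutive ``stay'' steps followed by one ``flip'', i.e.\ $p(1-p)^{t-1}$. For an intermediate value $\Delta'=\kappa$ the event $\{U_{k+t}=k+t-\kappa\}$ forces the source to match the then-current estimate at slot $k+t-\kappa$ and to mismatch it at each of the $\kappa$ slots $k+t-\kappa+1,\dots,k+t$; by the Markov property this factors as the marginal probability that the source occupies the required value at slot $k+t-\kappa$ --- equal to $p^{(t-\kappa)}$ when the estimate there is $X_k$ (the $\Delta=0$ case) and to $1-p^{(t-\kappa)}$ when it is the wrong value (the $\Delta>0$ case, which is exactly where $t-\kappa\ge1$) --- times a run of one-step transitions: one ``flip'' of probability $p$ wherever the frozen estimate disagrees with $X_k$, a block of consecutive ``stay'' steps of probability $1-p$, and, only when $\Delta>0$, a final ``flip'' of probability $p$ at slot $k+t$ where the estimate changes. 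Counting block lengths yields $p^{(t-\kappa)}p(1-p)^{\kappa-1}$ for $\Delta=0$ and $(1-p^{(t-\kappa)})p^2(1-p)^{\kappa-2}$ for $\Delta>0$ and $\kappa\ge2$, the subcase $\kappa=1$ degenerating to $(1-p^{(t-1)})(1-p)$; and the monotonicity bound $\Delta_{k+t}\le\Delta+t$ together with the fact that every slot in $(U_k,k]$ is incorrect excludes all remaining values of $\Delta'$. This establishes the piecewise formulas and hence \eqref{eq-MSTPAss1}.

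The equivalent form \eqref{eq-EquivalentEq1} then follows by collecting the nonvanishing summands: for $\Delta\ge\Delta'\le t_{max}-1$ only the branches with $\Delta'\in\{0,1\}\cup\{2,\dots,t-1\}$ survive and they vanish when $t<\Delta'$; for $\Delta<\Delta'\le t_{max}-1$ one additionally picks up the single ``$\Delta'=\Delta+t$'' branch, which fixes $t=t'\triangleq\Delta'-\Delta$; and for $\Delta'\ge t_{max}$ only that last branch remains. Properties (1)--(3) read off directly: in the regime $\Delta\ge\Delta'\le t_{max}-1$ every surviving $P^t_{\Delta,\Delta'}(1)$ is manifestly free of $\Delta$, giving (1); in the regime $\Delta'\ge t_{max}$ one has $P_{\Delta,\Delta'}(1)=p_{t'}\,p(1-p)^{t'-1}$, a function of $t'=\Delta'-\Delta$ alone, giving (2); and (3) follows from $\Delta_{k+t}\le\Delta+t\le\Delta+t_{max}$ together with the observation that whenever $t_{max}-1<\Delta'\le\Delta$ every branch of $P^t_{\Delta,\Delta'}(1)$ is vacuous, since $\Delta'>t-1$ kills the first three branches while $\Delta'\le\Delta<\Delta+t$ kills the last.

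I expect the main obstacle to be the source-path bookkeeping in the second paragraph: keeping straight whether the marginal factor is $p^{(t-\kappa)}$ or $1-p^{(t-\kappa)}$ (equivalently, whether the estimate on the prefix $k,\dots,k+t-\kappa$ agrees with $X_k$), getting the exponent of $(1-p)$ right across the $\Delta=0$ and $\Delta>0$ cases (the discrepancy being exactly the extra terminal flip at $k+t$), and correctly isolating the degenerate subcases $\Delta'=1$ and $\Delta'=\Delta+t$. Everything else --- the total-probability identity \eqref{eq-MSTPAss1}, the rewriting into \eqref{eq-EquivalentEq1}, and the three properties --- is routine once the support of each $P^t_{\Delta,\Delta'}(1)$ has been pinned down.
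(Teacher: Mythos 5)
Your proposal is correct and follows essentially the same route as the paper's proof: condition on $T=t$ via the law of total probability, use the fact that the estimate is frozen during transmission and flips (or not) at delivery depending on whether $\Delta>0$ or $\Delta=0$, and enumerate the position of the last correct slot to recover exactly the piecewise expressions for $P^t_{\Delta,\Delta'}(1)$, from which \eqref{eq-EquivalentEq1} and properties (1)--(3) are read off. The only cosmetic difference is that you phrase the case analysis in terms of $U_{k+t}$ while the paper phrases it in terms of which value $\Delta'$ takes, but the counting of the flip/stay runs and the factors $p^{(t-k)}$ versus $1-p^{(t-k)}$ is identical.
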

\begin{proof}
The expression of $P^t_{\Delta,\Delta'}(1)$ is obtained by analyzing the system dynamics. The complete proof can be found in Appendix~\ref{pf-CompactTrans} of the supplementary material.
\end{proof}
The state transition probabilities under \textbf{Assumption 2} can be obtained similarly. To this end, we define $P^{t^+}_{\Delta,\Delta'}(a)$ as the probability that action $a$ at state $s=(\Delta,0,-1)$ will result in state $s'=(\Delta',0,-1)$, given that the transmission is terminated. Then, we have the following lemma.
\begin{lemma}\label{lem-MSTPAss2}
Under \textbf{Assumption 2},
\begin{equation}\label{eq-MSTPAss2}
P_{\Delta,\Delta'}(1) = \sum_{t=1}^{t_{max}}p_tP^{t}_{\Delta,\Delta'}(1) + p_{t^+}P^{t^+}_{\Delta,\Delta'}(1),
\end{equation}
where
\[
P^{t}_{0,\Delta'}(1) = 
\begin{dcases}
p^{(t)} & \Delta'=0,\\
p^{(t-k)}p(1-p)^{k-1} & 1\leq\Delta'= k\leq t,\\
0 & otherwise,
\end{dcases}
\]
\[
P^{t^+}_{0,\Delta'}(1) = P^{t_{max}}_{0,\Delta'}(1),
\]
and for $\Delta>0$,
\begin{multline*}
P^{t}_{\Delta,\Delta'}(1) =\\
\begin{dcases}
p^{(t)} & \Delta'=0,\\
(1-p^{(t-1)})(1-p) & \Delta'=1,\\
(1-p^{(t-k)})p^{2}(1-p)^{k-2} & 2\leq\Delta'=k\leq t-1,\\
p(1-p)^{t-1} & \Delta'=\Delta+t,\\
0 & otherwise,
\end{dcases}
\end{multline*}
\begin{multline*}
P^{t^+}_{\Delta,\Delta'}(1) =\\
\begin{dcases}
1-p^{(t_{max})} & \Delta'=0,\\
(1-p^{(t_{max}-k)})p(1-p)^{k-1} & 1\leq\Delta'= k\leq t_{max}-1,\\
(1-p)^{t_{max}} & \Delta' = \Delta+t_{max},\\
0 & otherwise.
\end{dcases}
\end{multline*}
Under \textbf{Assumption 2}, equation \eqref{eq-MSTPAss2} can be written equivalently as \eqref{eq-EquivalentEq2}. 
\begin{figure*}[!t]
\normalsize
\begin{equation}\label{eq-EquivalentEq2}
P_{\Delta,\Delta'}(1) =
\begin{dcases}
\sum_{t=\Delta'}^{t_{max}}p_tP^t_{\Delta,\Delta'}(1)+ p_{t^+}P^{t^+}_{\Delta,\Delta'}(1) & 0\leq\Delta'\leq t_{max}-1,\Delta\geq\Delta',\\
\sum_{t=\Delta'}^{t_{max}}p_tP^t_{\Delta,\Delta'}(1) + p_{t'}P^{t'}_{\Delta,\Delta'}(1)+ p_{t^+}P^{t^+}_{\Delta,\Delta'}(1) & 0\leq\Delta'\leq t_{max}-1,\Delta<\Delta',\\
p_{t'}P^{t'}_{\Delta,\Delta'}(1)+ p_{t^+}P^{t^+}_{\Delta,\Delta'}(1) & \Delta'\geq t_{max}.
\end{dcases}
\end{equation}
\hrulefill
\vspace*{4pt}
\end{figure*}
Meanwhile, $P_{\Delta,\Delta'}(1)$ possesses the following properties.
\begin{enumerate}
\item $P_{\Delta,\Delta'}(1)$ is independent of $\Delta$ when $0\leq\Delta'\leq t_{max}-1$ and $\Delta\geq\max\{1,\Delta'\}$.
\item $P_{\Delta,\Delta'}(1) = P_{\Delta+\delta,\Delta'+\delta}(1)$ when $\Delta'\geq t_{max}$ and $\Delta>0$ for any $\delta\geq1$.
\item $P_{\Delta,\Delta'}(1)=0$ when $\Delta'>\Delta+t_{max}$ or when $t_{max}-1<\Delta'<\Delta+1$.
\end{enumerate}
\end{lemma}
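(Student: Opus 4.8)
The plan is to mirror the derivation of Lemma~\ref{lem-MSTPAss1} and then patch in the discard event. Under Assumption~2 the only structural change relative to Assumption~1 is that a transmission which has not completed by the end of slot $t_{max}$ is aborted, returning the channel to idle with the update $X_{k-t_{max}}$ never delivered. I would therefore condition on the transmission outcome: either the transmission completes after exactly $t$ slots for some $1\le t\le t_{max}$ (probability $p_t$), or it is discarded (probability $p_{t^+}$). This gives the decomposition \eqref{eq-MSTPAss2} immediately, and reduces the task to (a) reusing the already-established expressions for $P^t_{\Delta,\Delta'}(1)$ from Lemma~\ref{lem-MSTPAss1} — which are unchanged, since conditioned on completing in $t$ slots the two assumptions describe identical dynamics — and (b) computing the new term $P^{t^+}_{\Delta,\Delta'}(1)$.

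For the discard term, the key observation is that when the update is discarded the receiver's estimate $\hat X$ is exactly what it was $t_{max}$ slots earlier, so over the $t_{max}$ slots of the aborted transmission the AoII evolves autonomously according to \eqref{eq-AoIIDynamics} with no reset from a delivery. First I would treat $\Delta=0$: the estimate starts correct, and $\Delta'=0$ iff the source has not left and come back to a "matching" configuration in a way that keeps $X=\hat X$ at the end — more precisely $\Delta'$ equals the number of slots since the last time $X_h=\hat X_h$. A short induction on the symmetric-source dynamics, tracking the first slot (counting backward from the end) at which source and estimate last agreed, yields $P^{t^+}_{0,\Delta'}(1)=P^{t_{max}}_{0,\Delta'}(1)$: conditioned on being discarded, the estimate is frozen for all $t_{max}$ slots just as it would be if delivery happened to land an (irrelevant, because it is the old estimate) value — the bookkeeping is identical to the completion case with $t=t_{max}$ because the delivered packet $X_{k-t_{max}}$ in that case was generated before the estimate changed. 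For $\Delta>0$ the estimate is initially wrong; I would enumerate, again counting backward from slot $k+t_{max}$, the location $k$ of the last agreement: $\Delta'=0$ requires agreement at the final slot (probability $1-p^{(t_{max})}$ by symmetry of the source walk over $t_{max}$ steps, since "agree at the end" is complementary to "the source equals the frozen estimate's complement"); $\Delta'=j$ for $1\le j\le t_{max}-1$ requires agreement exactly $j$ slots before the end and disagreement since, giving the factor $p(1-p)^{j-1}$ for the short disagreement run times $(1-p^{(t_{max}-j)})$ for never having agreed in the earlier $t_{max}-j$ slots; and $\Delta'=\Delta+t_{max}$ is the event of permanent disagreement over all $t_{max}$ slots, probability $(1-p)^{t_{max}}$. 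Substituting these into \eqref{eq-MSTPAss2} and splitting the range of $\Delta'$ against $t_{max}$ and against $\Delta$ produces \eqref{eq-EquivalentEq2} after rearranging the index of summation exactly as in the Assumption~1 case.

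The three stated properties follow by inspection of \eqref{eq-EquivalentEq2}: for Property~1, when $\Delta\ge\max\{1,\Delta'\}$ and $\Delta'\le t_{max}-1$ the only $\Delta$-dependent candidate term $p_{t'}P^{t'}_{\Delta,\Delta'}(1)$ has $t'=\Delta'-\Delta\le0$ and hence vanishes, while both surviving terms are written purely in terms of $\Delta'$, $p^{(\cdot)}$ and $p$; the restriction to $\Delta\ge1$ (absent in Lemma~\ref{lem-MSTPAss1}) is forced because the $\Delta=0$ row of $P^t_{0,\Delta'}$ and $P^{t^+}_{0,\Delta'}$ uses $p^{(t)}$ and $p^{(t_{max})}$ rather than $1-p^{(t)}$, i.e.\ the source-stay versus source-leave roles are swapped when the estimate starts correct. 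For Property~2, when $\Delta'\ge t_{max}$ only the shift-covariant terms $p_{t'}P^{t'}_{\Delta,\Delta'}(1)$ (with $t'=\Delta'-\Delta$ invariant under $(\Delta,\Delta')\mapsto(\Delta+\delta,\Delta'+\delta)$) and $p_{t^+}P^{t^+}_{\Delta,\Delta'}(1)$ (nonzero only at $\Delta'=\Delta+t_{max}$, again shift-invariant) contribute. Property~3 is the support statement: $P^{t^+}$ and each $P^{t}$ put mass on $\Delta'=0$, on $1\le\Delta'\le t-1$ (or $\le t_{max}-1$), or on the single point $\Delta'=\Delta+t$ (or $\Delta+t_{max}$), so the union of supports over $t\le t_{max}$ together with the discard term is contained in $\{0,\dots,t_{max}-1\}\cup\{\Delta+1,\dots,\Delta+t_{max}\}$, which is exactly the complement of $\{\Delta'>\Delta+t_{max}\}\cup\{t_{max}-1<\Delta'<\Delta+1\}$. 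I expect the main obstacle to be the backward-counting argument for $P^{t^+}_{\Delta,\Delta'}(1)$ with $\Delta>0$ — in particular making rigorous that the "last agreement" decomposition is a genuine partition and that the symmetric two-state walk contributes the clean factors $1-p^{(t_{max}-j)}$ and $p(1-p)^{j-1}$ without double counting; the full bookkeeping is deferred to the supplementary material as in Lemma~\ref{lem-MSTPAss1}.
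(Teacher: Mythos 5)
Your proposal follows essentially the same route as the paper's proof: condition on delivery after exactly $t$ slots versus discard, reuse the Assumption-1 conditional probabilities $P^t_{\Delta,\Delta'}(1)$, derive $P^{t^+}_{\Delta,\Delta'}(1)$ by a last-agreement-time case analysis with the receiver's estimate frozen, and then read off \eqref{eq-EquivalentEq2} and the three properties from the support and $\Delta$-(in)dependence of the surviving terms, exactly as in Appendix~\ref{pf-Case2TransProb}. One small wording correction: the factor $1-p^{(t_{max}-j)}$ is the probability that the frozen estimate is correct at the $(t_{max}-j)$th slot after the transmission starts (the source has flipped relative to its initial, disagreeing state), not the probability of ``never having agreed'' in the earlier slots; since your decomposition is by the \emph{last} agreement time, earlier agreements are irrelevant, so the event you enumerate and the resulting factors coincide with the paper's.
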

\begin{proof}
The proof follows similar  steps as presented in the proofs of Lemma~\ref{lem-MSTPAss1}. The complete proof can be found in Appendix~\ref{pf-Case2TransProb} of the supplementary material.
\end{proof}
As the expressions and properties of $P_{\Delta,\Delta'}(a)$ under both assumptions are clarified, we solve for $\pi_{\Delta}$ in the next subsection.

\subsection{Stationary Distribution}\label{sec-SteadyState}
Let $ET$ be the expected transmission time of an update. Since the channel remains idle if no transmission is initiated and the expected transmission time of an update is $ET$, $\pi_{\Delta}$ satisfies the following equation.
\begin{equation}\label{eq-TotalProbs}
\sum_{\Delta=0}^{\tau-1}\pi_\Delta + ET\sum_{\Delta=\tau}^{\infty}\pi_\Delta = 1,
\end{equation}
where $ET= \sum_{t=1}^{t_{max}}tp_t$ under \textbf{Assumption 1} and $ET= \sum_{t=1}^{t_{max}}tp_t+t_{max}p_{t^+}$ under \textbf{Assumption 2}. We notice that there is still infinitely many $\pi_{\Delta}$ to calculate. To overcome the infinity, we recall that, under threshold policy, the suggested action is $a=1$ for all the state $(\Delta,0,-1)$ with $\Delta\geq\tau$. Hence, we define $\Pi\triangleq \sum_{\Delta=\omega}^{\infty}\pi_\Delta$ where $\omega \triangleq t_{max} + \tau+1$. As we will see in the following subsections, $\Pi$ and $\pi_{\Delta}$ for $0\leq\Delta<\omega-1$ are sufficient for calculating the expected AoII achieved by the threshold policy. With $\Pi$ in mind, we have the following theorem.
\begin{theorem}\label{prop-StationaryDistribution}
For $0<\tau<\infty$, $\Pi$ and $\pi_{\Delta}$ for $0\leq\Delta<\omega-1$ are the solution to the following system of linear equations.
\[
\pi_0 = (1-p)\pi_0 + p\sum_{i=1}^{\tau-1}\pi_i+ P_{1,0}(1)\left(\sum_{i=\tau}^{\omega-1}\pi_i+\Pi\right).
\]
\[
\pi_1= p\pi_0 + P_{1,1}(1)\left(\sum_{i=\tau}^{\omega-1}\pi_i+\Pi\right).
\]
\[
\Pi = \sum_{i=\tau+1}^{\omega-1}\left(\sum_{k=\tau+1}^iP_{i,t_{max}+k}(1)\right)\pi_i + \sum_{i=1}^{t_{max}}\bigg(P_{\omega,\omega+i}(1)\bigg)\Pi.
\]
\[
\sum_{i=0}^{\tau-1}\pi_i + ET\left(\sum_{i=\tau}^{\omega-1}\pi_i+\Pi\right) = 1.
\]
For each $2\leq\Delta\leq t_{max}-1$,
\begin{multline*}
\pi_\Delta =\\
\begin{dcases}
(1-p)\pi_{\Delta-1} + P_{\tau,\Delta}(1)\left(\sum_{i=\tau}^{\omega-1}\pi_i+\Pi\right) & \Delta-1<\tau,\\
\sum_{i=\tau}^{\Delta-1}P_{i,\Delta}(1)\pi_i + P_{\Delta,\Delta}(1)\left(\sum_{i=\Delta}^{\omega-1}\pi_i+\Pi\right) & \Delta-1\geq\tau.
\end{dcases}
\end{multline*}
For each $t_{max}\leq\Delta\leq\omega-1$,
\[
\pi_{\Delta} = \begin{dcases}
(1-p)\pi_{\Delta-1} & \Delta-1<\tau,\\
\sum_{i=\tau}^{\Delta-1}P_{i,\Delta}(1)\pi_i & \Delta-1\geq\tau.
\end{dcases}
\]
\end{theorem}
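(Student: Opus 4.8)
The plan is to derive each displayed equation from the compact balance equations~\eqref{eq-CompactBalanceEq2}, $\pi_\Delta=\sum_{\Delta'\ge 0}P_{\Delta',\Delta}(a)\pi_{\Delta'}$ (for the $\Pi$-equation, by summing these over all targets $\Delta\ge\omega$), together with the total-probability identity~\eqref{eq-TotalProbs}. Under threshold policy $\tau$ the action in a source state $(\Delta',0,-1)$ is $a=\mathbbm{1}\{\Delta'\ge\tau\}$, so I would split the infinite sum over source states into the \emph{idle} sources $0\le\Delta'<\tau$, for which $a=0$ and $P_{\Delta',\Delta}(0)$ is the explicit kernel given at the start of Section~\ref{sec-StateTransProb}, and the \emph{active} sources $\Delta'\ge\tau$, for which $a=1$ and $P_{\Delta',\Delta}(1)$ is given by Lemma~\ref{lem-MSTPAss1} under Assumption~1 (resp.\ Lemma~\ref{lem-MSTPAss2} under Assumption~2). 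The key mechanism is that the structural properties~(1)--(3) of $P_{\Delta,\Delta'}(1)$ in those lemmas let me collapse the infinitely many active-source terms into finitely many, repeatedly using $\sum_{i\ge\tau}\pi_i=\sum_{i=\tau}^{\omega-1}\pi_i+\Pi$.

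I would first treat the region $0\le\Delta\le t_{max}-1$. For target $\Delta=0$ the idle sources contribute $(1-p)\pi_0$ (from $\Delta'=0$) and $p\sum_{i=1}^{\tau-1}\pi_i$ (each $0<i<\tau$ has $P_{i,0}(0)=p$), while every active source has the \emph{same} value $P_{1,0}(1)$ by property~(1), so their total contribution is $P_{1,0}(1)\big(\sum_{i=\tau}^{\omega-1}\pi_i+\Pi\big)$. The $\pi_1$-equation is analogous, with $\Delta'=0$ the only idle source reaching target $1$ (value $p$) and $P_{1,1}(1)$ the shared active-source value. For $2\le\Delta\le t_{max}-1$ the only idle source that can reach $\Delta$ is $\Delta'=\Delta-1$ (value $1-p$), present precisely when $\Delta-1<\tau$; among the active sources, those with index $\ge\Delta$ collapse via property~(1) into a single term --- equal to $P_{\tau,\Delta}(1)$ when $\Delta\le\tau$ and to $P_{\Delta,\Delta}(1)$ when $\Delta>\tau$ --- multiplying the tail $\big(\sum_{i=\max\{\tau,\Delta\}}^{\omega-1}\pi_i+\Pi\big)$, while the forward active sources $\tau\le i<\Delta$ (present only when $\Delta>\tau$) appear individually as $\sum_{i=\tau}^{\Delta-1}P_{i,\Delta}(1)\pi_i$. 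This reproduces the $\pi_\Delta$-equations in that range.

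Next I would handle $t_{max}\le\Delta\le\omega-1$ and the $\Pi$-equation, where property~(3) carries the load. For a target $\Delta\ge t_{max}$, property~(3) ($P_{i,\Delta}(1)=0$ whenever $t_{max}-1<\Delta<i+1$) annihilates every active source with index $\ge\Delta$, so only $\Delta'=\Delta-1$ (idle, value $1-p$, present iff $\Delta-1<\tau$) or the active sources $\tau\le i\le\Delta-1$ survive, which gives the two stated cases. For $\Pi=\sum_{\Delta\ge\omega}\pi_\Delta$, I would sum the balance equations over all $\Delta\ge\omega$ and interchange summations: idle sources and the lone active source $i=\tau$ contribute nothing, since $i+t_{max}<\omega$ puts all their reachable targets below $\omega$ (again by property~(3)); each active source $\tau+1\le i\le\omega-1$ contributes $\sum_{\Delta=\omega}^{i+t_{max}}P_{i,\Delta}(1)=\sum_{k=\tau+1}^{i}P_{i,t_{max}+k}(1)$ after the substitution $\Delta=t_{max}+k$; and the active sources $i\ge\omega$ each contribute the \emph{same} quantity $\sum_{j=1}^{t_{max}}P_{\omega,\omega+j}(1)$ by the shift-invariance property~(2) ($P_{i,i+j}(1)=P_{\omega,\omega+j}(1)$ for $1\le j\le t_{max}$), so their aggregate is $\big(\sum_{i=1}^{t_{max}}P_{\omega,\omega+i}(1)\big)\Pi$. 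Finally, rewriting $\sum_{\Delta\ge\tau}\pi_\Delta=\sum_{i=\tau}^{\omega-1}\pi_i+\Pi$ in~\eqref{eq-TotalProbs} gives the normalization equation and closes a finite linear system; since the induced DTMC is positive recurrent the stationary distribution is unique, so solving this system recovers $\Pi$ and $\{\pi_\Delta\}_{0\le\Delta<\omega-1}$, and Lemma~\ref{lem-MSTPAss1}/\ref{lem-MSTPAss2} then yields the remaining $\pi_\Delta$.

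I expect the main obstacle to be the bookkeeping of exactly which source states contribute to a given target --- i.e.\ reading off the precise ranges of the surviving summation indices from the support conditions packaged in properties~(1)--(3) --- with the $\Pi$-equation being the most delicate, since one must recognize that the entire tail $\sum_{i\ge\omega}\pi_i$ collapses into a single multiple of $\Pi$ after re-indexing. Some additional care is needed for the boundary cases $\tau=1$ and $\Delta=t_{max}$, and, under Assumption~2, for checking that the slightly stronger hypothesis $\Delta\ge\max\{1,\Delta'\}$ in property~(1) of Lemma~\ref{lem-MSTPAss2} still covers every active source invoked above (it does, since active sources have index $\ge\tau\ge1$).
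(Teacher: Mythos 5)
Your proposal is correct and follows essentially the same route as the paper's proof: plug the threshold-policy actions into the compact balance equations \eqref{eq-CompactBalanceEq2} and \eqref{eq-TotalProbs}, collapse the infinite active-source sums via properties (1)--(3) of Lemmas~\ref{lem-MSTPAss1}/\ref{lem-MSTPAss2}, and obtain the $\Pi$-equation by summing the tail balance equations over $\Delta\ge\omega$, re-indexing, and invoking the shift-invariance property (2). The only cosmetic difference is that you organize the tail computation source-by-source while the paper expands the double sum target-by-target and then rearranges; the content is identical.
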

\begin{proof}
We delve into the definition of $\Pi$. By leveraging the structural property of the threshold policy and the properties of $P_{\Delta,\Delta'}(a)$, we obtain the above system of linear equations. The complete proof can be found in Appendix~\ref{pf-StationaryDistribution} of the supplementary material.
\end{proof}
\begin{remark}
The size of the system of linear equations detailed in Theorem~\ref{prop-StationaryDistribution} is $\omega+1$.
\end{remark}
\begin{corollary}\label{cor-StationaryDistributionSpecial}
When $\tau=0$,
\[
\pi_{0} = \frac{P_{1,0}(1)}{ET[1-P_{0,0}(1)+P_{1,0}(1)]}.
\]
For each $1\leq\Delta\leq t_{max}$,
\[
\pi_{\Delta} = \sum_{i=0}^{\Delta-1}P_{i,\Delta}(1)\pi_i + P_{\Delta,\Delta}(1)\left(\frac{1}{ET} - \sum_{i=0}^{\Delta-1}\pi_i\right).
\]
\[
\Pi = \ddfrac{\sum_{i=1}^{t_{max}}\left(\sum_{k=1}^{i}P_{i,t_{max}+k}(1)\right)\pi_i}{1-\sum_{i=1}^{t_{max}}P_{t_{max}+1,t_{max}+1+i}(1)}.
\]

When $\tau=1$,
\[
\pi_0 = \frac{P_{1,0}(1)}{pET+P_{1,0}(1)},\quad \pi_1 = \frac{pP_{1,0}(1)+pP_{1,1}(1)}{pET+P_{1,0}(1)}.
\]
For each $2\leq \Delta\leq t_{max}+1$,
\[
\pi_\Delta = \sum_{i=1}^{\Delta-1}P_{i,\Delta}(1)\pi_i + P_{\Delta,\Delta}(1)\left(\frac{1-\pi_0}{ET} - \sum_{i=1}^{\Delta-1}\pi_i\right).
\]
\[
\Pi = \ddfrac{\sum_{i=2}^{t_{max}+1}\left(\sum_{k=2}^iP_{i,t_{max}+k}(1)\right)\pi_i}{1-\sum_{i=1}^{t_{max}}P_{t_{max}+2,t_{max}+2+i}(1)}.
\]
\end{corollary}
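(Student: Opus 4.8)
The plan is to handle the two cases by different routes. The case $\tau=1$ is obtained by direct substitution into Theorem~\ref{prop-StationaryDistribution}, which applies since $0<\tau<\infty$. The case $\tau=0$ lies outside the hypotheses of Theorem~\ref{prop-StationaryDistribution}, so it must be re-derived from the compact balance equation~\eqref{eq-CompactBalanceEq2} together with the normalization~\eqref{eq-TotalProbs}; the derivation mirrors the one underlying Theorem~\ref{prop-StationaryDistribution} but is simpler because every idle state takes the action $a=1$.

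For $\tau=1$ we have $\omega=t_{max}+2$. First I would substitute $\tau=1$ into the linear system of Theorem~\ref{prop-StationaryDistribution}, so that the sums $\sum_{i=1}^{\tau-1}(\cdot)$ become empty. Introducing the abbreviation $S\triangleq\sum_{i=1}^{\omega-1}\pi_i+\Pi$, the normalization equation of Theorem~\ref{prop-StationaryDistribution} reads $\pi_0+ET\cdot S=1$, hence $S=(1-\pi_0)/ET$. Plugging this into the first equation $\pi_0=(1-p)\pi_0+P_{1,0}(1)S$ yields a single linear equation with the stated solution for $\pi_0$, and plugging $\pi_0$ and $S$ into the second equation yields $\pi_1$. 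For $2\leq\Delta\leq t_{max}+1$ I would use $\sum_{i=\Delta}^{\omega-1}\pi_i+\Pi=S-\sum_{i=1}^{\Delta-1}\pi_i=(1-\pi_0)/ET-\sum_{i=1}^{\Delta-1}\pi_i$, which turns the $\Delta-1\geq\tau$ branch of the $\pi_\Delta$-recursion in Theorem~\ref{prop-StationaryDistribution} into the displayed formula; and for $t_{max}\leq\Delta\leq t_{max}+1$ the coefficient $P_{\Delta,\Delta}(1)$ vanishes (property~3 of Lemma~\ref{lem-MSTPAss1}, and its counterpart in Lemma~\ref{lem-MSTPAss2}, since $t_{max}-1<\Delta<\Delta+1$), so the $\Delta\leq t_{max}-1$ formula persists and the two branches of Theorem~\ref{prop-StationaryDistribution} coincide. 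Finally, the $\Pi$-equation of Theorem~\ref{prop-StationaryDistribution} specialized to $\tau=1$ is linear in $\Pi$ and is solved to give the stated fraction.

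For $\tau=0$ the threshold policy selects $a=1$ at every idle state $(\Delta,0,-1)$, so~\eqref{eq-CompactBalanceEq2} becomes $\pi_\Delta=\sum_{\Delta'\geq0}P_{\Delta',\Delta}(1)\pi_{\Delta'}$ and~\eqref{eq-TotalProbs} gives $\sum_{\Delta\geq0}\pi_\Delta=1/ET$. For $\pi_0$ I would split the balance equation at $\Delta'=0$ versus $\Delta'\geq1$ and use property~1 ($P_{\Delta',0}(1)=P_{1,0}(1)$ for all $\Delta'\geq1$; this is exactly the spot where, under Assumption~2, the state $\Delta'=0$ must be treated separately, which is why both $P_{0,0}(1)$ and $P_{1,0}(1)$ appear), obtaining $\pi_0=P_{0,0}(1)\pi_0+P_{1,0}(1)(1/ET-\pi_0)$ and hence the stated $\pi_0$. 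For $1\leq\Delta\leq t_{max}$ I would split the balance equation at $\Delta'<\Delta$ versus $\Delta'\geq\Delta$ and again use property~1 ($P_{\Delta',\Delta}(1)=P_{\Delta,\Delta}(1)$ for $\Delta'\geq\Delta$, with $P_{\Delta,\Delta}(1)=0$ once $\Delta\geq t_{max}$ by property~3), which gives $\pi_\Delta=\sum_{i=0}^{\Delta-1}P_{i,\Delta}(1)\pi_i+P_{\Delta,\Delta}(1)(1/ET-\sum_{i=0}^{\Delta-1}\pi_i)$. For $\Pi=\sum_{\Delta\geq t_{max}+1}\pi_\Delta$ I would first use property~3 to restrict each balance equation to the finite window $\Delta-t_{max}\leq\Delta'\leq\Delta-1$, interchange the order of summation, split the source index into $1\leq i\leq t_{max}$ (contributing $\sum_{k=1}^{i}P_{i,t_{max}+k}(1)\pi_i$, the numerator) and $i\geq t_{max}+1$ (which lies inside $\Pi$), and apply property~2 ($P_{i,i+j}(1)=P_{t_{max}+1,t_{max}+1+j}(1)$ for $i\geq t_{max}+1$) to collapse the latter into $\big(\sum_{j=1}^{t_{max}}P_{t_{max}+1,t_{max}+1+j}(1)\big)\Pi$; solving for $\Pi$ gives the stated fraction.

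The main obstacle is the $\Pi$-bookkeeping in the $\tau=0$ case: one must identify precisely which source indices feed each tail state, reindex the double sum correctly, and invoke property~2 with its (assumption-dependent) range of validity. A secondary pitfall is that one cannot simply set $\tau=0$ in Theorem~\ref{prop-StationaryDistribution}, since that theorem's derivation uses $\tau\geq1$ in an essential way (the treatment of the state $(0,0,-1)$ differs), so the $\tau=0$ formulas genuinely require the fresh derivation sketched above. Everything else is routine algebra, and uniqueness of the solution follows from the positive recurrence of the induced DTMC already established in Section~\ref{sec-PolicyPerformance}.
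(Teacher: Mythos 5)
Your proposal is correct and matches the paper's own route: the paper likewise solves the $\tau=1$ case by specializing the finite system of Theorem~\ref{prop-StationaryDistribution} (combining the normalization with the $\pi_0$, $\pi_1$, and $\pi_\Delta$ equations and using property~3 to merge the two branches for $t_{max}\leq\Delta\leq t_{max}+1$), and handles $\tau=0$ by a fresh specialization of the balance equations in which the state $(0,0,-1)$ takes $a=1$, yielding the $P_{0,0}(1)\pi_0+P_{1,0}(1)(1/ET-\pi_0)$ equation and the same tail bookkeeping for $\Pi$ via properties~1--3. Your observation that Theorem~\ref{prop-StationaryDistribution} cannot simply be evaluated at $\tau=0$, and that $P_{0,0}(1)$ must be kept distinct from $P_{1,0}(1)$ (which matters under \textbf{Assumption 2}), is exactly the point the paper's derivation relies on.
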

\begin{proof}
The calculations follow similar steps as detailed in the proof of Theorem~\ref{prop-StationaryDistribution}. The complete proof can be found in Appendix~\ref{pf-AoIISpecialCase1} of the supplementary material.
\end{proof}
We will calculate the expected AoII in the next subsection based on the above results.

\subsection{Expected AoII}\label{sec-ExpectedAoII}
Let $\bar{\Delta}_{\tau}$ be the expected AoII achieved by threshold policy $\tau$. Then,
\begin{equation}\label{eq-ExpectedAoII}
\bar{\Delta}_{\tau} = \sum_{\Delta=0}^{\tau-1}C(\Delta,0)\pi_\Delta + \sum_{\Delta=\tau}^{\infty}C(\Delta,1)\pi_\Delta,
\end{equation}
where $C(\Delta,a)$ is the expected sum of AoII during the transmission of the update caused by the operation of $a$ at state $(\Delta,0,-1)$. Note that $C(\Delta,a)$ includes the AoII for being at state $(\Delta,0,-1)$.
\begin{remark}\label{rem-CandP}
In order to have a more intuitive understanding of the definition of  $C(\Delta,a)$, we use $\eta$ to denote a possible path of the state during the transmission of the update and let $H$ be the set of all possible paths. Moreover, we denote by $C_{\eta}$ and $P_{\eta}$ the sum of AoII and the probability associated with path $\eta$, respectively. Then,
\[
C(\Delta,a) = \sum_{\eta\in H}P_{\eta}C_{\eta}.
\]
For example, we consider the case of $p_2=1$, where the transmission takes $2$ time slots to be delivered. Also, action $a=1$ is taken at state $(2,0,-1)$. Then, a sample path $\eta$ of the state during the transmission can be the following.
\[
(2,0,-1)\rightarrow(3,1,1)\rightarrow(4,0,-1).
\]
By our definition, $C_{\eta}=2+3=5$ and $P_{\eta} = Pr[(3,1,1)\mid (2,0,-1),a=1]\cdot Pr[(4,0,-1)\mid (3,1,1),a=1]$ for the above sample path.
\end{remark}
In the following, we calculate $C(\Delta,a)$. Similar to Section~\ref{sec-StateTransProb}, we define $C^{t}(\Delta,a)$ as the expected sum of AoII during the transmission of the update caused by action $a$ at state $(\Delta,0,-1)$, given that the transmission takes $t$ time slots. Then, under \textbf{Assumption 1},
\begin{equation}\label{eq-CompactCostAss1}
C(\Delta,a) = \begin{dcases}
\Delta & a=0,\\
\sum_{t=1}^{t_{max}}p_tC^{t}(\Delta,1) & a=1,
\end{dcases}
\end{equation}
and, under \textbf{Assumption 2},
\begin{equation}\label{eq-CompactCostAss2}
C(\Delta,a) = \begin{dcases}
\Delta & a=0,\\
\sum_{t=1}^{t_{max}}p_tC^{t}(\Delta,1) + p_{t^+}C^{t_{max}}(\Delta,1) & a=1.
\end{dcases}
\end{equation}
Hence, obtaining the expressions of $C^{t}(\Delta,1)$ is sufficient. To this end, we define $C^k(\Delta)$ as the expected AoII $k$ time slots after the transmission starts at state $(\Delta,0,-1)$, given that the transmission is still in progress. Then, we have the following lemma.
\begin{lemma}\label{lem-CompactCost}
$C^{t}(\Delta,1)$ is given by
\[
C^{t}(\Delta,1) = \sum_{k=0}^{t-1}C^k(\Delta),
\]
where $C^k(\Delta)$ is given by \eqref{eq-EquivalentEq3}.
\begin{figure*}[!t]
\normalsize
\begin{equation}\label{eq-EquivalentEq3}
C^k(\Delta) = \begin{dcases}
\sum_{h=1}^{k} hp^{(k-h)}p(1-p)^{h-1} & \Delta=0,\\
\sum_{h=1}^{k-1} h(1-p^{(k-h)})p(1-p)^{h-1} + (\Delta+k)(1-p)^k & \Delta>0.
\end{dcases}
\end{equation}
\hrulefill
\vspace*{4pt}
\end{figure*}
\end{lemma}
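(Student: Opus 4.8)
The plan is to prove Lemma~\ref{lem-CompactCost} by unwinding the definitions of $C^t(\Delta,1)$ and $C^k(\Delta)$ and tracking the AoII evolution during a transmission that is known to last exactly $t$ time slots. First I would recall that $C^t(\Delta,1)$ is the expected sum of AoII over the time slots $(\Delta,0,-1),(\cdot,1,\cdot),\dots$ up to and including the slot just before delivery, i.e. over the $t$ time slots indexed $k=0,1,\dots,t-1$ counted from when the transmission starts. Since the AoII at the slot that is $k$ steps after the start is a random variable with expectation $C^k(\Delta)$ \emph{conditioned on the transmission still being in progress} (which it is, for $k\le t-1$, given $T=t$), linearity of expectation immediately gives $C^t(\Delta,1)=\sum_{k=0}^{t-1}C^k(\Delta)$. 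The substance of the lemma is therefore the formula~\eqref{eq-EquivalentEq3} for $C^k(\Delta)$.

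To derive $C^k(\Delta)$ I would condition on the first time slot $h\in\{1,\dots,k\}$ at which the receiver's estimate becomes correct during the window (if ever), recalling that while a transmission is in progress the receiver's estimate is frozen at its pre-transmission value, so ``estimate correct'' is governed purely by whether the source has returned to that frozen value. Consider the case $\Delta>0$ first: the frozen estimate $\hat X$ disagrees with the current source state $X_\Delta$. By \eqref{eq-AoIIDynamics}, AoII keeps incrementing by $1$ each slot until the source state first coincides with $\hat X$, at which point AoII resets to $0$ and (since $\hat X$ stays frozen and the source is symmetric) the subsequent evolution restarts from a ``$\Delta=0$''-type situation. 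So I would split: with probability $(1-p)^k$ the source never matches $\hat X$ in the first $k$ slots, contributing $(\Delta+k)(1-p)^k$; otherwise, letting $h\le k$ be the first matching slot — probability $p(1-p)^{h-1}$ of first match at step $h$, using that the source must have stayed mismatched for $h-1$ steps (probability $(1-p)^{h-1}$) then flipped (probability $p$) — the AoII at step $k$ equals the AoII of a fresh ``$\Delta=0$'' chain run for $k-h$ further steps, whose expectation is exactly $C^{k-h}(0)$; but more directly, the contribution to $C^k(\Delta)$ from that branch is $h(1-p^{(k-h)})$ summed appropriately. Here $1-p^{(k-h)}$ is the probability that, having matched at step $h$, the source is again mismatched at step $k$ (so AoII at step $k$ is $k-h$, but its \emph{expected} value is $h\cdot 0 + \dots$); I would need to be careful to express the expected AoII at step $k$ on the branch ``first match at $h$'' as $(k-h)$ times the probability the source is mismatched at step $k$ having been matched at step $h$ — wait, re-examining, the summand in~\eqref{eq-EquivalentEq3} is $h(1-p^{(k-h)})p(1-p)^{h-1}$, so the coefficient is $h$, not $k-h$; this reflects that the bracket structure of the recursion groups terms so that the ``age so far'' part factors as $h$ — I would reconcile this by writing the recursion $C^k(\Delta) = (1-p)(C^{k-1}(\Delta)+\text{(increment)}) + p\cdot C^{k-1}(0)\!\!\upharpoonright$ carefully and then solving it, or equivalently by a direct path-counting argument summing, over all ways the source can evolve, the value $k - (\text{last match time})$, and then re-indexing. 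The case $\Delta=0$ is the same argument with the frozen estimate initially \emph{correct}: AoII starts at $0$, and at step $k$ its expected value is $\sum_{h=1}^k (k-h+?)\cdots$; matching the stated $\sum_{h=1}^k h\,p^{(k-h)}p(1-p)^{h-1}$ requires identifying $h$ as the number of slots since the source \emph{left} $\hat X$... I would pin down the correct indexing by testing $k=1,2$ against \eqref{eq-AoIIDynamics} directly.

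The main steps in order: (1) justify $C^t(\Delta,1)=\sum_{k=0}^{t-1}C^k(\Delta)$ by linearity of expectation over the $t$ in-transmission slots, noting the conditioning ``transmission takes $t$ slots'' makes each of those slots deterministically in-transmission; (2) set up the evolution of AoII during a transmission as a function only of the source chain and the frozen estimate, using~\eqref{eq-AoIIDynamics}; (3) for $\Delta>0$, condition on whether the source returns to $\hat X$ within $k$ steps and on the first return time $h$, obtaining the two pieces $(\Delta+k)(1-p)^k$ and $\sum_{h} h(1-p^{(k-h)})p(1-p)^{h-1}$; (4) for $\Delta=0$, condition on the first time the source leaves $\hat X$, obtaining $\sum_h h\,p^{(k-h)}p(1-p)^{h-1}$; (5) verify base cases $C^0(\Delta)=\Delta$ and small $k$ to fix all indices. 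The step I expect to be the real obstacle is (3)–(4): getting the combinatorial bookkeeping exactly right so that the coefficient is $h$ (the time of the relevant transition) multiplied by $1-p^{(k-h)}$ or $p^{(k-h)}$ (the probability of the correct mismatch/match status at step $k$), rather than the more naive $k-h$; this requires either a clean recursive derivation of $C^k(\Delta)$ in terms of $C^{k-1}(\cdot)$ and then an induction, or a careful ``sum over last renewal'' decomposition, and matching it to the precise grouping in~\eqref{eq-EquivalentEq3}. Everything else is routine. For brevity in the paper I would present the recursion-plus-induction route and defer the full algebra to the supplementary material, mirroring how Lemmas~\ref{lem-MSTPAss1} and~\ref{lem-MSTPAss2} are handled.
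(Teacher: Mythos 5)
Your step (1) (linearity over the $t$ in-transmission slots) and your setup in step (2) are fine and match the paper, but the crux of the lemma --- the closed form \eqref{eq-EquivalentEq3} --- is exactly the step you leave unresolved, and your one concrete attempt at it goes wrong. In your first-passage decomposition you assert that, on the branch ``first match at step $h$, mismatched again at step $k$,'' the AoII at step $k$ is $k-h$; this is false, because after the first match the source can match and mismatch repeatedly, so the AoII at step $k$ is $k$ minus the \emph{last} matched slot, not the first. You notice the resulting mismatch with the coefficient $h$ in \eqref{eq-EquivalentEq3} and then defer the fix to ``a recursion plus induction or a careful sum over last renewal,'' without carrying either out. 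Since the entire content of the lemma is that formula, this is a genuine gap as written. (Your first-passage route is salvageable --- it yields the valid recursion $C^k(\Delta)=(\Delta+k)(1-p)^k+\sum_{h=1}^{k}p(1-p)^{h-1}C^{k-h}(0)$ for $\Delta>0$ --- but you would still need to solve it and show it reproduces \eqref{eq-EquivalentEq3}, which you do not do.)

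The paper's proof avoids all of this bookkeeping by computing the distribution of the AoII at slot $k$ directly, i.e.\ by conditioning on the \emph{value} $h$ of the AoII (equivalently, on the last slot $k-h$ at which the frozen estimate was correct). Because the estimate is frozen during transmission and the source is a two-state symmetric chain, the event $\{\text{AoII at slot }k = h\}$ is precisely: the source matches the estimate at slot $k-h$ --- probability $p^{(k-h)}$ when $\Delta=0$ and $1-p^{(k-h)}$ when $\Delta>0$, with no first-passage conditioning needed thanks to the Markov property --- and then flips and holds the unique mismatched state for the remaining $h$ slots, probability $p(1-p)^{h-1}$. For $\Delta>0$ the leftover event ``never correct during the window'' forces the source to sit in its initial state for all $k$ slots, giving the term $(\Delta+k)(1-p)^k$ and restricting $h$ to $1\le h\le k-1$. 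Then $C^k(\Delta)=\sum_h h\,\Pr[\text{AoII}=h]+(\Delta+k)(1-p)^k$ is \eqref{eq-EquivalentEq3} immediately; the coefficient $h$ is simply the AoII value in the expectation, which dissolves the $h$ versus $k-h$ puzzle you flagged. If you rewrite your steps (3)--(4) as this last-match (rather than first-match) decomposition, your outline becomes the paper's proof.
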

\begin{proof}
The expression of $C^k(\Delta)$ is obtained by analyzing the system dynamics. The complete proof can be found in Appendix~\ref{pf-ExpectedCost} of the supplementary material.
\end{proof}
Next, we calculate the expected AoII achieved by the threshold policy. We start with the case of $\tau=\infty$.
\begin{theorem}\label{prop-LazyPerformance}
The expected AoII achieved by the threshold policy with $\tau=\infty$ is
\[
\bar{\Delta}_\infty = \frac{1}{2p}.
\]
\end{theorem}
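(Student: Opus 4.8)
The plan is to exploit the fact that, under $\tau=\infty$, the transmitter is silent for all time, so the channel is always idle ($t_k\equiv 0$, $i_k\equiv-1$) and the receiver's estimate is frozen at its initial value, which is correct by our choice of the starting state $(0,0,-1)$. Consequently the state collapses to $\Delta_k$ alone, and $\{\Delta_k\}$ is a Markov chain on $\mathbbm{N}^0$ whose one-step transition probabilities are exactly the $a=0$ probabilities recalled in Section~\ref{sec-StateTransProb}: from $0$ it stays at $0$ with probability $1-p$ and jumps to $1$ with probability $p$; from $\Delta>0$ it returns to $0$ with probability $p$ and climbs to $\Delta+1$ with probability $1-p$.

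First I would note that this chain is irreducible (from $0$ one can reach every state, and every state reaches $0$) and positive recurrent (state $0$ is reached from any state in one step with probability at least $p$, so the mean return time to $0$ is finite), hence it admits a unique stationary distribution $\{\pi_\Delta\}$; provided $\sum_\Delta \Delta\,\pi_\Delta<\infty$, the long-run average cost in \eqref{eq-goal} equals $\sum_{\Delta\ge 0}\Delta\,\pi_\Delta$ by the ergodic theorem for Markov chains. I would then solve the balance equations, which give $\pi_1=p\,\pi_0$ and $\pi_{\Delta+1}=(1-p)\pi_\Delta$ for $\Delta\ge 1$, so $\pi_\Delta=p(1-p)^{\Delta-1}\pi_0$ for $\Delta\ge 1$; normalization $\pi_0\bigl(1+p\sum_{\Delta\ge 1}(1-p)^{\Delta-1}\bigr)=2\pi_0=1$ forces $\pi_0=\tfrac12$ and $\pi_\Delta=\tfrac12 p(1-p)^{\Delta-1}$.

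Finally, $\bar\Delta_\infty=\sum_{\Delta\ge 1}\Delta\cdot\tfrac12 p(1-p)^{\Delta-1}=\tfrac{p}{2}\sum_{\Delta\ge 1}\Delta(1-p)^{\Delta-1}=\tfrac{p}{2}\cdot\tfrac{1}{p^{2}}=\tfrac{1}{2p}$, using $\sum_{n\ge 1}n x^{n-1}=(1-x)^{-2}$ with $x=1-p$ (which simultaneously confirms that the stationary mean is finite). An equivalent route bypasses the stationary distribution and applies renewal--reward to the two-state source chain: a cycle consists of a Geometric($p$) sojourn in the state equal to the frozen estimate (during which $\Delta=0$) followed by a Geometric($p$) sojourn of length $L$ in the other state (during which $\Delta$ runs through $1,\dots,L$); the mean cycle length is $2/p$ and the mean accumulated AoII per cycle is $\mathbb{E}[\tfrac12 L(L+1)]=1/p^{2}$, whose ratio is again $1/(2p)$. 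There is essentially no serious obstacle here; the only step meriting explicit justification is that the Cesàro limit in \eqref{eq-goal} coincides with the stationary mean and that this mean is finite, which the positive-recurrence argument and the explicit geometric sum settle, and the rest is a short computation.
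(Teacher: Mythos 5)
Your proposal is correct and follows essentially the same route as the paper: under $\tau=\infty$ the state reduces to $\Delta$ alone with the $a=0$ transition probabilities, and solving the balance equations with normalization gives $\pi_0=\tfrac12$, $\pi_\Delta=\tfrac12 p(1-p)^{\Delta-1}$ for $\Delta\geq1$, whence $\bar{\Delta}_\infty=\sum_{\Delta\geq1}\Delta\,\pi_\Delta=\tfrac{1}{2p}$, exactly as in Appendix~\ref{pf-LazyPerformance}. Your extra care in justifying positive recurrence, finiteness of the stationary mean, and the equality of the Ces\`aro limit with the stationary average (plus the renewal--reward cross-check) is sound and slightly more explicit than the paper, but not a different method.
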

\begin{proof}
In this case, the transmitter will never initiate any transmissions. Hence, the state transitions are straightforward. The complete proof can be found in Appendix~\ref{pf-LazyPerformance} of the supplementary material.
\end{proof}
In the following, we focus on the case where $\tau$ is finite. We recall that the expected AoII is given by \eqref{eq-ExpectedAoII}. The problem arises because of the infinite sum. To overcome this, we adopt a similar approach as proposed in Section~\ref{sec-SteadyState}. More precisely, we leverage the structural property of the threshold policy and define $\Sigma\triangleq \sum_{\Delta=\omega}^{\infty}C(\Delta,1)\pi_\Delta$. Then, equation \eqref{eq-ExpectedAoII} can be written as
\[
\bar{\Delta}_{\tau} = \sum_{i=0}^{\tau-1}C(i,0)\pi_i + \sum_{i=\tau}^{\omega-1}C(i,1)\pi_i + \Sigma.
\]
As we have obtained the expressions of $\pi_\Delta$ and $C(\Delta,a)$ in previous subsections, it is sufficient to obtain the expression of $\Sigma$.
\begin{theorem}\label{thm-Sigma}
Under \textbf{Assumption 1} and for $0\leq\tau<\infty$,
\[
\Sigma = \ddfrac{\sum_{t=1}^{t_{max}}\left[p_tP^t_{1,1+t}(1)\left(\sum_{i=\omega-t}^{\omega-1}C(i,1)\pi_i\right) + \Delta_t'\Pi_t\right]}{1-\sum_{t=1}^{t_{max}}\bigg(p_tP^t_{1,1+t}(1)\bigg)},
\]
where
\[
\Pi_t= p_{t}P^{t}_{1,1+t}(1)\left(\sum_{i=\omega-t}^{\omega-1}\pi_i + \Pi\right),
\]
\[
\Delta_t' = \sum_{i=1}^{t_{max}}p_i\left(\frac{t-t(1-p)^i}{p}\right).
\]
Under \textbf{Assumption 2} and for $0\leq\tau<\infty$,
\[
\Sigma = \ddfrac{\sum_{t=1}^{t_{max}}\left[\left(\sum_{i=\omega-t}^{\omega-1}\Upsilon(i+t,t)C(i,1)\pi_i\right) + \Delta_t'\Pi_t\right]}{1-\sum_{t=1}^{t_{max}}\Upsilon(\omega+t,t)},
\]
where
\[
\Upsilon(\Delta,t)= p_tP^t_{\Delta-t,\Delta}(1) + p_{t^+}P^{t^+}_{\Delta-t,\Delta}(1),
\]
\[
\Pi_t = \sum_{i=\omega-t}^{\omega-1}\Upsilon(i+t,t)\pi_i + \Upsilon(\omega+t,t)\Pi,
\]
\[
\Delta_t' = \sum_{i=1}^{t_{max}}p_i\left(\frac{t-t(1-p)^i}{p}\right) + p_{t^+}\left(\frac{t-t(1-p)^{t_{max}}}{p}\right).
\]
\end{theorem}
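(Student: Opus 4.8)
The plan is to compute $\Sigma \triangleq \sum_{\Delta=\omega}^{\infty}C(\Delta,1)\pi_\Delta$ by deriving a self-referential (fixed-point) equation for $\Sigma$ that isolates the tail contribution. The key observation is that for $\Delta \geq \omega = t_{max}+\tau+1$, the suggested action is $a=1$, and by Property 2 of Lemma~\ref{lem-MSTPAss1} (resp. Lemma~\ref{lem-MSTPAss2}) the transition probabilities $P_{\Delta,\Delta'}(1)$ are shift-invariant in the relevant regime: a state $\Delta \geq \omega$ can only transition to $\Delta' \leq t_{max}-1$ (the ``reset'' region) or to $\Delta' = \Delta + t$ for $1 \leq t \leq t_{max}$ (the ``overshoot'' region, since the source stayed wrong throughout a $t$-slot transmission), with $P_{\Delta,\Delta+t}(1) = p_t P^t_{1,1+t}(1)$ under Assumption~1. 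The states $\Delta' \geq \omega$ appearing on the right-hand side of the balance equation for $\pi_{\Delta'}$, $\Delta' \geq \omega$, therefore come only from the overshoot mechanism applied to states $i$ with $\omega - t \leq i \leq \Delta' - t$, so I can split $\pi_{\Delta'}$ into a part inherited from $\{\omega-t,\dots,\omega-1\}$ and a part inherited from $\Pi$ itself.

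Concretely, I would first write, using \eqref{eq-CompactBalanceEq2} and the shift-invariance, $\Pi = \sum_{\Delta \geq \omega}\pi_\Delta = \sum_{t=1}^{t_{max}} p_t P^t_{1,1+t}(1)\big(\sum_{i=\omega-t}^{\omega-1}\pi_i + \Pi\big)$ — this is exactly the structure already recorded as $\Pi_t$ in the statement. Then I would do the same bookkeeping weighted by cost: $\Sigma = \sum_{\Delta\geq\omega} C(\Delta,1)\pi_\Delta$, and each such $\pi_\Delta$ is fed by an overshoot from some $i$ with value $C(\Delta,1) = C(i+t,1)$. The nontrivial point is to express $C(i+t,1)$ in terms of $C(i,1)$: since at state $i$ (with $i > 0$) the only way to reach $i+t$ is for the source to remain mismatched for all $t$ slots, a telescoping of Lemma~\ref{lem-CompactCost} (using $C^k(\Delta) = C^k(0) + (\Delta+k)(1-p)^k$ for $\Delta>0$ and hence the linear dependence of $C(\Delta,1)$ on $\Delta$) gives $C(\Delta+t,1) = C(\Delta,1) \cdot(\text{something}) + (\text{an affine shift})$; collecting the shift terms over the transmission-length distribution yields precisely $\Delta_t'$. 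This separates $\Sigma$ into a closed part (sums over the finite window $i \in \{\omega-t,\dots,\omega-1\}$, which involve already-known $\pi_i$ and $C(i,1)$) plus a term proportional to $\Sigma$, and solving the resulting scalar linear equation produces the displayed formula with denominator $1-\sum_{t=1}^{t_{max}} p_t P^t_{1,1+t}(1)$.

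For Assumption~2 the argument is identical in spirit but the overshoot weight is $\Upsilon(\Delta,t) = p_t P^t_{\Delta-t,\Delta}(1) + p_{t^+}P^{t^+}_{\Delta-t,\Delta}(1)$, since a discarded update ($t^+$) during which the source stayed mismatched also causes an overshoot of $t_{max}$; the $\Delta_t'$ correction acquires the extra $p_{t^+}\big(\frac{t - t(1-p)^{t_{max}}}{p}\big)$ term for the same reason, and the denominator becomes $1-\sum_{t=1}^{t_{max}}\Upsilon(\omega+t,t)$. The plan is to present the Assumption~1 derivation in detail and remark that the Assumption~2 case follows by substituting $\Upsilon$ for $p_t P^t_{1,1+t}(1)$ and updating $\Delta_t'$ accordingly; the bulk of the writing can be deferred to the supplementary appendix.

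I expect the main obstacle to be the cost-shift identity $C(\Delta+t,1) \leftrightarrow C(\Delta,1)$ and the precise derivation of $\Delta_t'$. One must be careful that $C(\Delta,1)$ already bundles the full transmission cost (averaged over the transmission length) rather than a single step, so the ``shift'' is not simply $(\Delta+t) - \Delta = t$ per slot but must be averaged: for a conditional transmission length $i$, the extra AoII incurred because we started at $\Delta+t$ instead of $\Delta$ accumulates only while the source remains mismatched, giving a geometric sum $\sum_{h=0}^{i-1} t(1-p)^h = \frac{t - t(1-p)^i}{p}$, and then averaging over $i\sim p_i$ yields $\Delta_t'$. Getting the indexing of the finite window $\{\omega-t,\dots,\omega-1\}$ exactly right — i.e., confirming that $\omega = t_{max}+\tau+1$ is large enough that no state below $\tau$ can overshoot into $[\omega,\infty)$ and that states in $[\tau,\omega-1]$ are already accounted for in the ``closed'' sums of $\bar\Delta_\tau$ — is the other place where an off-by-one error could creep in, and I would verify it against Property~3 of the relevant lemma ($P_{\Delta,\Delta'}(1)=0$ when $\Delta' > \Delta + t_{max}$).
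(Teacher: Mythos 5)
Your proposal follows essentially the same route as the paper's proof: you decompose each tail probability $\pi_\Delta$, $\Delta\geq\omega$, into its overshoot contributions $p_tP^t_{\Delta-t,\Delta}(1)\pi_{\Delta-t}$ (resp.\ $\Upsilon(\Delta,t)\pi_{\Delta-t}$ under \textbf{Assumption 2}), exploit the $\Delta$-independence of the cost shift $C(\Delta,1)-C(\Delta-t,1)=\Delta_t'$ via exactly the geometric sum $\sum_{h=0}^{i-1}t(1-p)^h=\frac{t-t(1-p)^i}{p}$, and solve the resulting scalar fixed-point equation for $\Sigma$, which is precisely the paper's derivation. Your parenthetical identity $C^k(\Delta)=C^k(0)+(\Delta+k)(1-p)^k$ is not literally correct (the $h$-sums in Lemma~\ref{lem-CompactCost} differ between the $\Delta=0$ and $\Delta>0$ cases), but this is immaterial since only the affine dependence of $C^k(\Delta)$ on $\Delta$ with slope $(1-p)^k$ is used, which is what the paper uses as well.
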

\begin{proof}
We delve into the definition of $\Sigma$ and repeatedly use the properties of $C(\Delta,a)$ and $P_{\Delta,\Delta'}(a)$. The complete proof can be found in Appendix~\ref{pf-Performance} of the supplementary material.
\end{proof}

\section{Optimal Policy}\label{sec-OptimalPolicy}
In this section, we find the optimal policy for $\mathcal{M}$ theoretically. First of all, we prove that the optimal policy exists. 
\subsection{Existence of  Optimal Policy}\label{sec-OptimalPolicyExistence}
We introduce the infinite horizon $\gamma$-discounted cost of $\mathcal{M}$, where $0 < \gamma < 1$ is a discount factor. The expected $\gamma$-discounted cost under policy $\phi$ is
\begin{equation}\label{eq-GammaDiscount}
V_{\phi,\gamma}(s) = \mathbb{E}_{\phi}\left[\sum_{t=0}^{\infty}\gamma^tC(s_t)\mid s_0\right],
\end{equation}
where $s_t$ is the state of $\mathcal{M}$ at time slot $t$. We define $V_{\gamma}(s) \triangleq \inf_{\phi} V_{\phi,\gamma}(s)$ as the best that can be achieved. Equivalently, $V_{\gamma}(s)$ is  the value function associated with the $\gamma$-discounted version of $\mathcal{M}$. Hence, $V_{\gamma}(s)$ satisfies the corresponding Bellman equation.
\[
V_{\gamma}(s) = \min_{a\in\mathcal{A}}\left\lbrace C(s)+\gamma\sum_{s'\in\mathcal{S}}P_{s,s'}(a)V_{\gamma}(s')\right\rbrace.
\]
Value iteration algorithm is a canonical algorithm to calculate $V_{\gamma}(s)$. Let $V_{\gamma,\nu}(s)$ be the estimated value function at iteration $\nu$. Then, the estimated value function is updated in the following way.
\begin{equation}\label{eq-ValueIterationGammaDiscount}
V_{\gamma,\nu+1}(s) = \min_{a\in\mathcal{A}}\left\lbrace C(s)+\gamma\sum_{s'\in\mathcal{S}}P_{s,s'}(a)V_{\gamma,\nu}(s')\right\rbrace.
\end{equation}
\begin{lemma}\label{lem-Converge}
When updated following \eqref{eq-ValueIterationGammaDiscount}, $\lim_{\nu\rightarrow\infty}V_{\gamma,\nu}(s)= V_{\gamma}(s)$.
\end{lemma}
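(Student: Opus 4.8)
The plan is to read Lemma~\ref{lem-Converge} as the standard convergence statement for value iteration on a $\gamma$-discounted MDP, but to handle it carefully: the state space $\mathcal{S}$ is countably infinite and the immediate cost $C(s)=\Delta$ is unbounded, so the textbook sup-norm Banach-contraction argument does not apply verbatim. I would instead use the monotone-convergence route. Initialize $V_{\gamma,0}\equiv 0$ (I will note at the end why the limit is insensitive to the choice of a bounded initialization). Since $C(s)\ge 0$, the first iterate is $V_{\gamma,1}(s)=C(s)\ge 0=V_{\gamma,0}(s)$, and the operator in \eqref{eq-ValueIterationGammaDiscount} is monotone, so $\{V_{\gamma,\nu}(s)\}_\nu$ is nondecreasing for every $s$. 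It is also bounded above by $V_\gamma(s)$: for any policy $\phi$, the pathwise bound $\Delta_t\le\Delta_0+t$ (AoII grows by at most one per slot, by \eqref{eq-AoIIDynamics}) gives $V_{\phi,\gamma}(s)\le\sum_{t\ge 0}\gamma^t(\Delta+t)=\frac{\Delta}{1-\gamma}+\frac{\gamma}{(1-\gamma)^2}<\infty$, hence $V_\gamma(s)<\infty$, and a one-line induction using the Bellman equation for $V_\gamma$ shows $V_{\gamma,\nu}(s)\le V_\gamma(s)$ for all $\nu$. Therefore $V_{\gamma,\nu}(s)\uparrow V_\infty(s)$ pointwise for some finite $V_\infty\le V_\gamma$.

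Next I would show $V_\infty$ solves the discounted Bellman equation. Because $P_{s,s'}(a)\ge 0$ and $V_{\gamma,\nu}(s')\ge 0$ is nondecreasing in $\nu$, the monotone convergence theorem permits interchanging the limit with the countable sum, i.e. $\lim_{\nu}\sum_{s'}P_{s,s'}(a)V_{\gamma,\nu}(s')=\sum_{s'}P_{s,s'}(a)V_\infty(s')$; and since $\mathcal{A}$ contains at most two actions, $\min$ commutes with the limit. Passing $\nu\to\infty$ in \eqref{eq-ValueIterationGammaDiscount} then yields $V_\infty(s)=\min_{a\in\mathcal{A}}\{C(s)+\gamma\sum_{s'}P_{s,s'}(a)V_\infty(s')\}$.

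Finally I would upgrade $V_\infty\le V_\gamma$ to equality. Let $\phi^{\ast}$ be the stationary policy that is greedy with respect to $V_\infty$ (attained because the action set is finite). Unrolling the Bellman equation along $\phi^{\ast}$ for $N$ steps gives $V_\infty(s)=\mathbb{E}_{\phi^{\ast}}\!\left[\sum_{t=0}^{N-1}\gamma^t C(s_t)\mid s_0=s\right]+\gamma^N\mathbb{E}_{\phi^{\ast}}\!\left[V_\infty(s_N)\mid s_0=s\right]$. Using $V_\infty\le V_\gamma\le\frac{\Delta}{1-\gamma}+\frac{\gamma}{(1-\gamma)^2}$ together with $\mathbb{E}_{\phi^{\ast}}[\Delta_N\mid s_0=s]\le\Delta+N$, the residual term is $O\!\left(\gamma^N(\Delta+N)\right)\to 0$, so $V_\infty(s)=V_{\phi^{\ast},\gamma}(s)\ge V_\gamma(s)$. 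Hence $V_\infty=V_\gamma$ and $\lim_{\nu\to\infty}V_{\gamma,\nu}(s)=V_\gamma(s)$. For a general initialization $V_{\gamma,0}'$ with $\|V_{\gamma,0}'\|_\infty<\infty$, the operator in \eqref{eq-ValueIterationGammaDiscount} is nonexpansive in sup-norm, so $\|V_{\gamma,\nu}'-V_{\gamma,\nu}\|_\infty\le\gamma^\nu\|V_{\gamma,0}'\|_\infty\to 0$, giving the same limit.

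The main obstacle is exactly the unboundedness of the cost and of the state space, which precludes the one-line fixed-point proof and forces the monotone-plus-growth-bound argument above. The technical linchpin is the linear-in-$\Delta$ a priori estimate $V_\gamma(s)=O(\Delta)$: it is what makes the value-iteration limit finite and what drives the tail term $\gamma^N\mathbb{E}_{\phi^{\ast}}[V_\infty(s_N)]$ to zero. Establishing it cleanly is the only place where the specific structure of $\mathcal{M}$ (namely $\Delta_{t}\le\Delta_0+t$) is really used; everything else is generic discounted-MDP machinery.
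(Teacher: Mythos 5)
Your proof is correct, and it reaches the result by a genuinely more self-contained route than the paper. The paper's own proof only verifies the hypothesis needed to invoke known results: it bounds the discounted cost of the never-transmit policy ($\tau=\infty$) by $\frac{\Delta_0}{1-\gamma}+\frac{\gamma}{(1-\gamma)^2}$ using the pathwise bound $\Delta_t\le\Delta_0+t$, concludes $V_{\gamma}(s)<\infty$, and then cites Propositions 1 and 3 of~\cite{b17} for the convergence of value iteration in countable-state discounted MDPs with nonnegative unbounded costs. You instead re-prove that convergence: monotone iterates from $V_{\gamma,0}\equiv 0$ bounded above by $V_{\gamma}$, passage to the limit in the recursion via monotone convergence plus the two-element action set, and the greedy-policy unrolling in which the same linear-in-$\Delta$ estimate forces $\gamma^N\mathbb{E}_{\phi^{\ast}}[V_\infty(s_N)]\to 0$. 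So the technical linchpin is identical (the $O(\Delta)$ a priori bound, obtained from $\Delta_t\le\Delta_0+t$), but your argument carries the MDP machinery explicitly where the paper delegates it to the literature; the trade-off is length versus having to check that the cited propositions' hypotheses (nonnegative costs, finite actions, finite $V_\gamma$) match this model, which you effectively verify along the way. Two minor remarks: your induction $V_{\gamma,\nu}\le V_{\gamma}$ presupposes that $V_{\gamma}$ satisfies the discounted Bellman equation, which the paper asserts just before the lemma (it also follows from noting that $V_{\gamma,\nu}$ is the optimal $\nu$-horizon cost, which is at most $V_{\phi,\gamma}$ for every $\phi$ since costs are nonnegative); and in your closing remark the operator is in fact a $\gamma$-contraction with respect to bounded perturbations of the initialization, which is exactly what your $\gamma^{\nu}\lVert V_{\gamma,0}'\rVert_\infty$ bound uses, so ``nonexpansive'' understates the property you invoke.
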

\begin{proof}
According to~\cite[Propositions 1 and 3]{b17}, it is sufficient to show that $V_{\gamma}(s)$ is finite. To this end, we consider the policy $\phi$ being the threshold policy with $\tau=\infty$. According to \eqref{eq-GammaDiscount}, we have
\begin{align*}
V_{\phi,\gamma}(s) & = \mathbb{E}_{\phi}\left[\sum_{t=0}^{\infty}\gamma^tC(s_t)\ |\ s_0\right]\\
& \leq \sum_{t = 0}^{\infty}\gamma^t (\Delta_0+t) = \frac{\Delta_0}{1-\gamma} + \frac{\gamma}{(1-\gamma)^2}\\
& < \infty.
\end{align*}
Then, by definition, we have $V_{\gamma}(s)\leq V_{\gamma,\phi}(s)<\infty$. Hence, the value iteration reported in \eqref{eq-ValueIterationGammaDiscount} will converge to the value function.
\end{proof}
Leveraging the convergence of the value iteration algorithm, we can prove the following structural property of $V_{\gamma}(s)$.
\begin{lemma}\label{lem-Monotone}
$V_{\gamma}(s)$ is non-decreasing in $\Delta$ when $\Delta>0$.
\end{lemma}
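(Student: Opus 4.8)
The plan is to prove the monotonicity of $V_\gamma(\cdot)$ in $\Delta$ by induction on the value iteration index $\nu$, exploiting the convergence established in Lemma~\ref{lem-Converge}. Concretely, I would show that each iterate $V_{\gamma,\nu}(\Delta,t,i)$ is non-decreasing in $\Delta$ for $\Delta>0$ (with a fixed pair $(t,i)$), starting from $V_{\gamma,0}\equiv 0$, which trivially satisfies the property; then the property passes to the limit $V_\gamma$ since pointwise limits preserve the inequality $V_{\gamma,\nu}(\Delta,t,i)\le V_{\gamma,\nu}(\Delta+1,t,i)$. The induction step requires comparing, for two adjacent states $s=(\Delta,t,i)$ and $\tilde s=(\Delta+1,t,i)$, the right-hand side of the value iteration \eqref{eq-ValueIterationGammaDiscount}. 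The immediate cost contributes $C(s)=\Delta\le\Delta+1=C(\tilde s)$, which is the easy part. The crux is the term $\sum_{s'}P_{s,s'}(a)V_{\gamma,\nu}(s')$.

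The key structural observation I would use is a \emph{coupling} of the state evolution from $s$ and from $\tilde s$ under the same action $a$ and the same underlying randomness (the source transition, the ACK/NACK outcome, and the transmission-time realization). Because the AoII component evolves by \eqref{eq-AoIIDynamics}, namely $\Delta_{k+1}=\mathbbm{1}\{X_{k+1}\ne\hat X_{k+1}\}(\Delta_k+1)$, the two coupled chains experience identical transitions of the $(t,i)$-coordinates and identical ``reset-to-$0$'' events; whenever neither resets, the AoII values stay exactly one apart, and once a reset occurs both are $0$. Hence at every future time the successor state reached from $\tilde s$ has AoII component greater than or equal to that reached from $s$, with the $(t,i)$-coordinates identical. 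Feeding this into the inductive hypothesis that $V_{\gamma,\nu}$ is non-decreasing in $\Delta$ (for $\Delta>0$) and non-negative, one gets $\sum_{s'}P_{s,s'}(a)V_{\gamma,\nu}(s')\le\sum_{s'}P_{\tilde s,s'}(a)V_{\gamma,\nu}(s')$ for each feasible $a$; taking the minimum over $a$ preserves the inequality, and adding the ordered immediate costs completes the step. Note that the feasible action sets of $s$ and $\tilde s$ coincide because they share the same $i$, so the minimization is over the same set.

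There is one boundary subtlety to handle carefully: the inductive hypothesis only asserts monotonicity for $\Delta>0$, but a coupled successor may land at AoII value $0$ (after a reset) versus a successor at AoII value $0$ as well — so in fact both land at $0$ simultaneously, and no comparison across the $\Delta=0$ boundary in the ``wrong direction'' is ever needed. The only place one compares $V_{\gamma,\nu}$ at AoII $0$ against AoII $1$ is when starting from $s$ with $\Delta=1$: there one needs $V_{\gamma,\nu}(0,t',i')\le V_{\gamma,\nu}(1,t',i')$, which I would obtain as a separate, easy sub-claim (intuitively, a correct estimate is never worse than an incorrect one), again proved by the same induction since $C(0,t',i')=0\le 1=C(1,t',i')$ and the coupled dynamics from $(0,t',i')$ and $(1,t',i')$ keep the AoII ordering. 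I expect this bookkeeping around the $\Delta=0$ versus $\Delta>0$ cases, together with verifying that the coupling is legitimate under both Assumption~1 and Assumption~2 (under Assumption~2 a discarded update sends the state to $(\Delta',0,-1)$, but this happens identically in both coupled chains), to be the main technical obstacle; the rest is a routine monotone-induction argument. An alternative, essentially equivalent route would be to argue directly on the discounted cost $V_{\phi,\gamma}$: for any policy $\phi$ there is a matching policy from $\tilde s$ achieving a cost that is larger by at most the discounted sum of the AoII gaps, which are all non-negative — but the value-iteration induction keeps the argument self-contained and aligned with Lemma~\ref{lem-Converge}.
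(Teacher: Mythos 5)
Your main line of argument is essentially the paper's: induct on the value-iteration index from $V_{\gamma,0}\equiv 0$, use the fact that for $\Delta>0$ the one-step transition structure (the probabilities of ``reset to $0$'' versus ``increment'', together with the $(t,i)$ evolution) does not depend on $\Delta$, compare the two Bellman backups action by action, take the minimum over the common action set, and pass to the limit via Lemma~\ref{lem-Converge}. The paper does this algebraically (its difference formula \eqref{eq-EquivalentEq6} is exactly your coupling written out: the reset terms cancel and the increment terms pair $V_{\gamma,\nu}(\Delta+2,\cdot)$ against $V_{\gamma,\nu}(\Delta+1,\cdot)$), so up to phrasing the approaches coincide and your induction step is sound.

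However, your ``boundary subtlety'' is misdiagnosed, and the fix you propose for it is not correct. Starting from $s=(1,t,i)$ and $\tilde s=(2,t,i)$, the coupled successors are either both at AoII $0$ (the reset events coincide because both states have an incorrect estimate and the same $(t,i)$) or at AoII $2$ and $3$; a comparison of $V_{\gamma,\nu}(0,t',i')$ with $V_{\gamma,\nu}(1,t',i')$ never arises, which is precisely why the lemma is stated only for $\Delta>0$. Moreover, the sub-claim $V_{\gamma,\nu}(0,t',i')\le V_{\gamma,\nu}(1,t',i')$ cannot be obtained ``by the same induction'': the transition laws from a correct-estimate state and an incorrect-estimate state differ (e.g., from $(0,t,0)$ with no delivery the next estimate is correct with probability $1-p$, while from $(1,t,0)$ it is correct only with probability $p$), and under your shared-randomness coupling a source flip sends $(0,t,0)$ to AoII $1$ but $(1,t,0)$ to AoII $0$, reversing the order, so the pathwise ordering argument breaks across the $\Delta=0$ boundary. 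Since that sub-claim is never needed, simply delete it; with it removed your proof is correct and matches the paper's.
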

\begin{proof}
We recall that $V_{\gamma}(s)$ can be calculated using the value iteration algorithm. Hence, the monotonicity of $V_{\gamma}(s)$ can be proved via mathematical induction. The complete proof can be found in Appendix~\ref{pf-Monotone} of the supplementary material.
\end{proof}
Now, we proceed with showing the existence of the optimal policy. To this end, we first define the stationary policy.
\begin{definition}[Stationary policy]
A stationary policy specifies a single action in each time slot.
\end{definition}
\begin{theorem}\label{thm-optimalexist}
There exists a stationary policy that is optimal for $\mathcal{M}$. Moreover, the minimum expected AoII is independent of the initial state.
\end{theorem}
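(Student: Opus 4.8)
The plan is to invoke the standard theory of average-cost MDPs with countably infinite state space, verifying the sufficient conditions that guarantee the existence of a stationary optimal policy and a constant optimal average cost. The natural route is the vanishing-discount approach: relate the $\gamma$-discounted problem (already set up in the excerpt, with $V_\gamma(s)$ finite by Lemma~\ref{lem-Converge}) to the average-cost problem by letting $\gamma \uparrow 1$. Concretely, I would fix a reference state, say $s_0 = (0,0,-1)$, and define the relative value function $h_\gamma(s) \triangleq V_\gamma(s) - V_\gamma(s_0)$. The goal is to show that along some sequence $\gamma_n \uparrow 1$, the quantities $(1-\gamma_n)V_{\gamma_n}(s_0)$ converge to a constant $\theta$ and $h_{\gamma_n}(s)$ converges pointwise to some function $h(s)$, and that the limit pair $(\theta, h)$ satisfies the average-cost optimality equation \eqref{eq-Bellman}. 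A measurable selector achieving the minimum in that equation is then a stationary optimal policy, and since $\theta$ does not depend on $s$, the minimum expected AoII is independent of the initial state.

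The key steps, in order, are: (1) recall from Lemma~\ref{lem-Converge} that $V_\gamma(s)$ is finite for every $s$, with the explicit bound $V_\gamma(s) \le \Delta/(1-\gamma) + \gamma/(1-\gamma)^2$ coming from the $\tau=\infty$ policy; (2) establish the uniform bound that makes the vanishing-discount argument work, namely that $h_\gamma(s) = V_\gamma(s) - V_\gamma(s_0)$ is bounded above and below by constants independent of $\gamma$ — the lower bound $h_\gamma(s) \ge 0$ (for $s$ reachable) follows from $s_0$ having the minimal immediate cost, and the upper bound follows by bounding the expected discounted cost of first driving the system to $s_0$ under a fixed reference policy (e.g. ``always transmit''), using that the hitting time of $s_0$ has finite expectation and finite second moment because $t_{max}$ is finite and $p \in (0,\tfrac12]$; (3) monotonicity from Lemma~\ref{lem-Monotone} helps control $h_\gamma$ in the $\Delta$ direction, while the $t$ and $i$ coordinates live in a finite set; (4) apply a diagonal/compactness argument to extract $\gamma_n \uparrow 1$ with $h_{\gamma_n} \to h$ pointwise and $(1-\gamma_n)V_{\gamma_n}(s_0) \to \theta$; (5) pass to the limit in the discounted Bellman equation $V_\gamma(s) = \min_a\{C(s) + \gamma\sum_{s'}P_{s,s'}(a)V_\gamma(s')\}$, using Fatou / dominated convergence on the finitely-supported transition sums to recover \eqref{eq-Bellman}; (6) conclude via the standard verification theorem (e.g.\ \cite{b17} or Sennott-type conditions) that the stationary policy given by the minimizing action is average-cost optimal and that $\theta$ is the optimal value, independent of the starting state.

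The main obstacle is step (2): obtaining a bound on $h_\gamma(s)$ that is uniform in $\gamma$. This is exactly the delicate point in countable-state average-cost MDP theory, where discounted value functions blow up like $1/(1-\gamma)$ but their differences should stay bounded. The workhorse here is a ``regeneration'' or ``first-passage'' estimate: starting from any state $s=(\Delta,t,i)$, under the always-transmit policy the channel clears within at most $t_{max}$ slots, after which the AoII process is a positive-recurrent chain that returns to $\Delta=0$ (hence to $s_0$) with a return time whose mean and variance are finite and can be bounded uniformly. Summing the incurred AoII over this excursion — which grows at most linearly per slot — gives $V_\gamma(s) - V_\gamma(s_0) \le \mathbb{E}[\sum_{k=0}^{\sigma-1}\gamma^k \Delta_k] \le \mathbb{E}[\sum_{k=0}^{\sigma-1}(\Delta + k)]$ for the hitting time $\sigma$ of $s_0$, which is finite and independent of $\gamma$; monotonicity in $\Delta$ then ensures the bound is well-behaved. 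Once this uniform bound is in hand, the remaining steps are routine applications of Sennott's conditions (SEN) / \cite[Proposition~5]{b17}-type results, and the existence of a stationary optimal policy together with state-independence of $\theta$ follows immediately.
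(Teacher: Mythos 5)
Your overall architecture is essentially the paper's: the vanishing-discount/Sennott route of \cite{b17}, with finiteness of $V_\gamma$ from Lemma~\ref{lem-Converge}, an upper bound on $h_\gamma(s)=V_\gamma(s)-V_\gamma(s_0)$ via the expected first-passage cost to $s_0=(0,0,-1)$ under a fixed ``always transmit'' (threshold $\tau=0$) reference policy, whose return time indeed has geometrically decaying tails since every transmission ends within $t_{max}$ slots and terminates at AoII zero with probability bounded away from zero, and a conclusion via the standard average-cost verification theorem. Steps (1), (3)--(6) and the upper-bound half of (2) are sound and coincide with what the paper does (it verifies the two conditions of \cite{b17} using the threshold-$0$ policy and then invokes the theorem there).

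The genuine gap is the lower bound in your step (2). You assert $h_\gamma(s)\ge 0$ ``because $s_0$ has the minimal immediate cost.'' Minimal immediate cost does not imply minimal discounted value: in general the min-cost state could lead to expensive successors while a costlier state leads to cheap ones, so the inference is invalid as stated. In this model the inequality $V_\gamma(s)\ge V_\gamma(0,0,-1)$ is plausibly true, but it is exactly the kind of claim that needs its own argument: Lemma~\ref{lem-Monotone} only gives monotonicity in $\Delta$ for $\Delta>0$, so it neither compares $\Delta=0$ states with $\Delta\ge1$ states nor compares busy-channel states such as $(0,t,1)$ --- where the in-flight update differs from the currently correct estimate and its delivery makes the estimate incorrect with probability $1-p\ge p$ --- with the idle state $(0,0,-1)$. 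You would need an additional value-iteration induction or coupling argument for those comparisons. Alternatively, you can sidestep the claim entirely, as the paper does: Sennott's propositions give $h_\gamma(s)\ge -c_{0,s}(\phi)$ for the reference policy $\phi$, and Lemma~\ref{lem-Monotone} reduces the required uniformity in $s$ to the finitely many states with $\Delta\in\{0,1\}$ (over all $(t,i)$), which yields the constant $-N$ demanded by \cite{b17}. With that repair, your argument goes through and is essentially the paper's proof.
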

\begin{proof}
We show that $\mathcal{M}$ verifies the two conditions given in~\cite{b17}. Then, the results in the theorem is guaranteed by~\cite[Theorem]{b17}. The complete proof can be found in Appendix~\ref{pf-optimalexist} of the supplementary material.
\end{proof}
We denote by $\phi^*$ the optimal policy for $\mathcal{M}$. Then, the next problem is how to find $\phi^*$. To solve MDP, the value iteration algorithm and the policy iteration algorithm are two of the most popular. In the value iteration algorithm, the value function $V(s)$ is computed iteratively until convergence. However, since the state space $\mathcal{S}$ is infinite, it is not feasible to compute the value function for all states. To make the calculation feasible, in Section~\ref{sec-VIA}, an approximation method is used to obtain an approximated optimal policy $\hat{\phi}^*$, and we rigorously prove that $\hat{\phi}^*$ converges to $\phi^*$. However, the choice of the approximation parameters can significantly affect the complexity of the algorithm and may even lead to a non-optimal policy. To avoid this problem, in Section~\ref{sec-PIA}, we introduce the policy iteration algorithm and find $\phi^*$ theoretically using the policy improvement theorem. We start with the value iteration algorithm in the following subsection.

\subsection{Value Iteration Algorithm}\label{sec-VIA}
In this subsection, we present the relative value iteration (RVI) algorithm that approximates $\phi^*$. Direct application of RVI becomes impractical as the state space $\mathcal{S}$ is infinite. Hence, we use approximating sequence method (ASM)~\cite{b23}. To this end, we construct another MDP $\mathcal{M}^{(m)} = (\mathcal{S}^{(m)},\mathcal{A},\mathcal{P}^{(m)},\mathcal{C})$ by truncating the value of $\Delta$. More precisely, we impose
\[
\mathcal{S}^{(m)}:\begin{dcases}
\Delta \in \{0,1,...,m\}, &\\
i \in \{-1,0,1\}, &\\
t \in \{0,1,...,t_{max}-1\}, &
\end{dcases}
\]
where $m$ is the predetermined maximal value of $\Delta$. The transition probabilities from $s\in\mathcal{S}^{(m)}$ to $z\in\mathcal{S}\setminus\mathcal{S}^{(m)}$ are redistributed to the states $s'\in\mathcal{S}^{(m)}$ in the following way.
\[
P^{(m)}_{s,s'}(a) = \begin{dcases}
          P_{s,s'}(a) & \Delta'<m, \\
          P_{s,s'}(a) + \sum_{G(z,s')}P_{s,z}(a) & \Delta'=m, 
     \end{dcases}
\]
where $G(z,s') = \{z = (\Delta,t,i):\Delta>m,t=t',i=i'\}$. The action space $\mathcal{A}$ and the immediate cost $\mathcal{C}$ are the same as defined in $\mathcal{M}$.
\begin{theorem}\label{thm-ASMConvergence}
The sequence of optimal policies for $\mathcal{M}^{(m)}$ will converge to the optimal policy for $\mathcal{M}$ as $m\rightarrow\infty$.
\end{theorem}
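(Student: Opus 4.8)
The plan is to establish convergence of the truncated MDPs $\mathcal{M}^{(m)}$ to $\mathcal{M}$ by verifying the hypotheses of the approximating sequence method (ASM) framework of~\cite{b23}. The key observation is that the truncation redistributes the ``escaping'' probability mass $P_{s,z}(a)$ for $z$ with $\Delta>m$ onto the boundary state with $\Delta'=m$, which \emph{underestimates} the true cost (since $C(s)=\Delta$ is non-decreasing in $\Delta$ and, by Lemma~\ref{lem-Monotone}, so is the value function). So first I would show that each $\mathcal{M}^{(m)}$ is a well-defined average-cost MDP with a finite state space, an optimal stationary policy (finite-state communicating MDPs under the standing assumptions admit one), and a value function $V^{(m)}_\gamma(s)$ in the $\gamma$-discounted version; here the finiteness and unichain-type structure are inherited from the reduction already used for $\mathcal{M}$ in Theorem~\ref{thm-optimalexist}.

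Second, I would work at the level of the $\gamma$-discounted problems, proving pointwise convergence $V^{(m)}_\gamma(s)\to V_\gamma(s)$ for every fixed $s$ as $m\to\infty$. The argument is a sandwich: on one side, $V^{(m)}_\gamma(s)\le V_\gamma(s)$ because the redistribution only ever lowers the continuation cost (moving mass from large-$\Delta$ states to $\Delta=m$ cannot increase $\sum_{s'}P_{s,s'}(a)V_\gamma(s')$ by monotonicity); on the other side, the gap is controlled by the tail probability of reaching $\{\Delta>m\}$ before absorption/resetting, which under either Assumption~1 or Assumption~2 decays geometrically — each visit to a state with large $\Delta$ resets to $\Delta=0$ with probability at least $p>0$ per time slot when idle, and transmissions last at most $t_{max}$ slots. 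Combining the $\gamma^t$ discounting with this geometric tail bound gives $0\le V_\gamma(s)-V^{(m)}_\gamma(s)\le \varepsilon(m)$ with $\varepsilon(m)\to 0$, uniformly on compact sets of $s$. This is exactly condition (AC) / the continuity-of-value-function requirement that ASM~\cite{b23} needs.

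Third, I would pass from the discounted convergence to the average-cost convergence and then to convergence of the optimizing policies. Since $\mathcal{M}$ and each $\mathcal{M}^{(m)}$ satisfy the two conditions of~\cite{b17} (verified for $\mathcal{M}$ in the proof of Theorem~\ref{thm-optimalexist}, and inherited by the truncations), the vanishing-discount approach yields that the relative value functions and the optimal average cost $\theta^{(m)}$ converge to those of $\mathcal{M}$. Because the action set at each state is finite, the set of actions achieving the minimum in the truncated Bellman equation~\eqref{eq-Bellman} converges to the minimizing set for $\mathcal{M}$; hence any sequence of optimal stationary policies $\phi^{(m)}$ has all its limit points optimal for $\mathcal{M}$, which is the assertion of the theorem. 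I would state this via~\cite[Theorem/Corollary]{b23} after checking its explicit hypotheses: (i) $\mathcal{S}^{(m)}\uparrow\mathcal{S}$, (ii) local uniform convergence $P^{(m)}_{s,s'}(a)\to P_{s,s'}(a)$ — immediate since for fixed $s,s'$ the redistributed term vanishes once $m$ exceeds the (bounded) reachable increment $\Delta+t_{max}$ — and (iii) the uniform integrability / value-finiteness bound just established.

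The main obstacle I anticipate is step two: making the tail bound genuinely \emph{uniform} enough that ASM applies, i.e.\ showing $V_\gamma(s)-V^{(m)}_\gamma(s)$ is small uniformly over all $s$ in the truncated region (or at least over every finite window of states) rather than just for each fixed $s$. This requires a careful recurrence/drift argument: one must bound the expected discounted time the optimal (or any reasonable reference) policy spends in $\{\Delta>m\}$ starting from a bounded state, exploiting that AoII either resets to $0$ (with per-slot probability $p$ when the channel is idle, or upon a successful matching delivery) or can grow by at most $1$ per slot and by at most $t_{max}$ across a transmission. Once the geometric recurrence to $\{\Delta=0\}$ is quantified — using $p>0$ and the bounded transmission horizon $t_{max}$ — the rest is the standard ASM machinery and is essentially bookkeeping.
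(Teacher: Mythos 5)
You are working within the same framework the paper uses --- verifying the approximating sequence method conditions of \cite{b23} --- and you correctly identify the two structural ingredients that make it work: monotonicity of the value function in $\Delta$ (Lemma~\ref{lem-Monotone}), so that redistributing escaping mass to the boundary state $\Delta'=m$ underestimates, and geometric recurrence to $\Delta=0$ coming from the per-slot reset probability $p$ and the bounded transmission length $t_{max}$. The genuine gap is in your steps two and three. What \cite{b23} needs for the \emph{average-cost} conclusion is not pointwise convergence of the discounted values at a fixed $\gamma$, but a two-sided bound $-L\leq h^{(m)}_{\gamma}(s)\leq M(s)$ on the relative discounted value functions that holds uniformly in $m\geq m_0$ \emph{and} in $\gamma\in(\gamma_0,1)$, together with $\limsup_m J^{(m)}\leq J(s)$. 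Your sandwich $0\leq V_\gamma(s)-V^{(m)}_\gamma(s)\leq\varepsilon(m)$ is obtained by ``combining the $\gamma^t$ discounting with the geometric tail bound,'' so $\varepsilon(m)$ inherits a factor on the order of $(1-\gamma)^{-2}$ (linear costs accumulated on the tail) and blows up as $\gamma\uparrow1$; fixed-$\gamma$ convergence plus an appeal to ``vanishing discount'' and to the conditions of \cite{b17} being ``inherited by the truncations'' does not justify interchanging $m\rightarrow\infty$ with $\gamma\rightarrow1$, which is precisely the point at issue (each $\mathcal{M}^{(m)}$ having an optimal policy says nothing about $\theta^{(m)}\rightarrow\theta$ or about limit points of $\phi^{(m)}$). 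Relatedly, you misplace the main obstacle: the needed uniformity is in the discount factor, not over states in the truncated window.

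The paper supplies the missing uniformity with a $\gamma$-free majorant. It verifies \cite[Assumption 1]{b23} by bounding $h^{(m)}_{\gamma}(s)\leq c^{(m)}_{s,0}(\phi)\leq c_{s,0}(\phi)=M(s)<\infty$, where $c_{s,0}(\phi)$ is the expected first-passage cost to the reference state under the never-transmit policy ($\tau=\infty$); the inequality $c^{(m)}_{s,0}(\phi)\leq c_{s,0}(\phi)$ is proved by showing (via the monotonicity of $c_{\cdot,0}(\phi)$ in $\Delta$, the analogue of Lemma~\ref{lem-Monotone}) that the redistributed transition mass lands on states of no larger first-passage cost, and the lower bound $-L$ comes from monotonicity plus finiteness of first-passage quantities on the finitely many states with $\Delta\in\{0,1\}$. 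It then checks \cite[Assumption 2]{b23} through \cite[Proposition 5.1 and Corollary 5.2]{b23}, using monotonicity of the finite-horizon discounted values $v_{\gamma,n}$ in $\Delta$, and concludes by \cite[Theorem 2.2]{b23}. Your geometric-recurrence observation can in fact deliver exactly this $\gamma$-independent bound (the first-passage cost under $\tau=\infty$ is finite and does not involve $\gamma$), so if you replace your fixed-$\gamma$ sandwich with that first-passage-cost argument your outline becomes the paper's proof; as written, however, the discounted-to-average-cost step does not go through.
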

\begin{proof}
The proof follows the same steps as those in the proof of~\cite[Theorem 1]{b8}. The complete proof can be found in Appendix~\ref{pf-ASMConvergence} of the supplementary material.
\end{proof}
Then, we can apply RVI to $\mathcal{M}^{(m)}$ and treat the resulting policy as an approximation of $\phi^*$. The pseudocode of RVI is given in Algorithm~\ref{alg-RVIA}. 
\begin{algorithm}[!t]
    \begin{algorithmic}[1]
    \Procedure{RVI}{$\mathcal{M}^{(m)}$,$\epsilon$}
        \State $V_0(s)\gets0$ for $s\in\mathcal{S}^{(m)}$; $\nu\gets0$
        \State Choose $s^{ref}\in\mathcal{S}^{(m)}$ arbitrarily
        \Repeat
            \For{$s \in \mathcal{S}^{(m)}$}
                \For{$a \in \mathcal{A}$}
                    \State $H_{s,a} \gets C(s) + \sum_{s'} P^{(m)}_{s,s'}(a) V_{\nu}(s')$
                \EndFor
            	\State $Q_{\nu+1}(s) \gets \min_a \{H_{s,a}\}$
            \EndFor
            \For{$s \in \mathcal{S}^{(m)}$}
                \State $V_{\nu+1}(s) \gets Q_{\nu+1}(s) - Q_{\nu+1}(s^{ref})$
            \EndFor
            \State $\nu \gets \nu + 1$
        \Until{$\max_s\{\left|V_{\nu}(s)-V_{\nu-1}(s)\right|\}\leq\epsilon$}
        \State \Return $\hat{\phi}^* \gets \argmin_a \{H_{s,a}\}$
    \EndProcedure
    \end{algorithmic}
\caption{Relative Value Iteration}
\label{alg-RVIA}
\end{algorithm}
However, the choice of the approximation parameter $m$ is crucial. A large $m$ can add unnecessary computational complexity, while a small $m$ can lead to a non-optimal policy. Therefore, in the following subsections, we use the policy iteration algorithm and the policy improvement theorem to find $\phi^*$ theoretically. We start with introducing the policy iteration algorithm.

\subsection{Policy Iteration Algorithm}\label{sec-PIA}
The policy iteration algorithm is an iterative algorithm that iterates between the following two steps until convergence.\footnote{The convergence happens when two consecutive iterations produce equivalent policies.}
\begin{enumerate}
\item The first step is policy evaluation. In this step, we calculate the value function $V^{\phi}(s)$ and the expected AoII $\theta^{\phi}$ resulting from the adoption of some policy $\phi$. More precisely, the value function and the expected AoII are obtained by solving the following system of linear equations.
\begin{equation}\label{eq-policyevaluate}
V^{\phi}(s) + \theta^{\phi} = C(s) + \sum_{s'\in\mathcal{S}}P^{\phi}_{s,s'}V^{\phi}(s')\quad s\in\mathcal{S},
\end{equation}
where $P^{\phi}_{s,s'}$ is the state transition probability from $s$ to $s'$ when policy $\phi$ is adopted. Note that \eqref{eq-policyevaluate} forms an underdetermined system. Hence, we can select a reference state $s$ arbitrarily and set the corresponding value function to $0$. In this way, we can obtain a unique solution.
\item The second step is policy improvement. In this step, we obtain a new policy $\phi'$ using the $V^{\phi}(s)$ obtained in the first step. More precisely, the action suggested by $\phi'$ at state $s$ is determined by
\[
\phi'(s) = \argmin_{a\in\mathcal{A}}\left\lbrace	C(s) + \sum_{s'\in\mathcal{S}}P_{s,s'}(a)V^{\phi}(s')\right\rbrace.
\]
\end{enumerate}
The pseudocode of the policy iteration algorithm is given in Algorithm~\ref{alg-PIA}.
\begin{algorithm}[!t]
    \begin{algorithmic}[1]
    \Procedure{PI}{$\mathcal{M}$}
        \State Choose $\phi'(s)\in\mathcal{A}$ arbitrarily for all $s\in\mathcal{S}$
        \Repeat
        \State $\phi(s)\gets\phi'(s)$ for all $s\in\mathcal{S}$
        \State $(V^{\phi}(s),\theta^{\phi})\leftarrow$ \textsc{PolicyEvaluation}$(\mathcal{M},\phi(s))$
        \State $\phi'(s)\leftarrow$ \textsc{PolicyImprovement}$(\mathcal{M},V^{\phi}(s))$
       	\Until{$\phi'(s)=\phi(s)$ for all $s\in\mathcal{S}$}
        \State\Return $(\phi^*,\theta)\gets (\phi(s),\theta^{\phi})$
    \EndProcedure
    \end{algorithmic}
\caption{Policy Iteration}
\label{alg-PIA}
\end{algorithm}
With policy iteration algorithm in mind, we can proceed with presenting the policy improvement theorem.
\begin{theorem}[Policy improvement theorem]\label{thm-PIT}
Suppose that we have obtained the value function resulting from the operation of a policy $A$ and that the subsequent policy improvement step has produced a policy $B$, the following results hold.
\begin{itemize}
\item If $B$ is different from $A$, $\theta^A\geq \theta^B$.
\item If $A$ and $B$ are equivalent,\footnote{Policies $A$ and $B$ are equivalent when they yield the same expected AoII.} both policies are optimal.
\end{itemize}
\end{theorem}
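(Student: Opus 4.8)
The plan is to prove Theorem~\ref{thm-PIT} by the classical two-ingredient argument: the fixed-point relation \eqref{eq-policyevaluate} satisfied by $V^A$ and $\theta^A$ after the policy-evaluation step, together with the pointwise inequality produced by the policy-improvement step. Throughout I will rely on two regularity facts already available for $\mathcal{M}$: the Markov chain induced by any stationary policy under consideration is positive recurrent with a single recurrent class (this is part of what is verified in establishing Theorem~\ref{thm-optimalexist}), so that its stationary distribution $\pi^B$ exists and charges every accessible state; and the relative value function $V^A$ grows at most linearly in $\Delta$ (a bound of the same flavour as the one used in the proof of Lemma~\ref{lem-Converge}). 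Combined with the geometric decay of $\pi^B$, the latter justifies the rearrangements of infinite sums below.

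For the first bullet, I would start from \eqref{eq-policyevaluate}, which reads $V^A(s) + \theta^A = C(s) + \sum_{s'} P_{s,s'}(A(s)) V^A(s')$ for all $s$. Since $B(s)$ attains $\min_a\{C(s) + \sum_{s'} P_{s,s'}(a) V^A(s')\}$ by definition of the improvement step, we obtain the pointwise inequality $C(s) + \sum_{s'} P_{s,s'}(B(s)) V^A(s') \le V^A(s) + \theta^A$ for every $s$. Multiplying by $\pi^B_s$, summing over $s\in\mathcal{S}$, and using the balance identities $\sum_s \pi^B_s P_{s,s'}(B(s)) = \pi^B_{s'}$ and $\sum_s \pi^B_s C(s) = \theta^B$, the two copies of $\sum_{s'}\pi^B_{s'}V^A(s')$ cancel, leaving $\theta^B \le \theta^A$, which is the first conclusion.

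For the second bullet, suppose $\theta^A = \theta^B$. I would set $W(s) \triangleq V^A(s) + \theta^A - C(s) - \sum_{s'} P_{s,s'}(B(s)) V^A(s') \ge 0$, where nonnegativity is exactly the inequality from the previous step. Averaging $W$ against $\pi^B$ and reusing the same cancellation gives $\sum_s \pi^B_s W(s) = \theta^A - \theta^B = 0$; since $W\ge 0$ and $\pi^B$ puts positive mass on every state of the single recurrent class, which is all of the accessible state set by irreducibility, $W\equiv 0$ everywhere. Hence $V^A(s) + \theta^A = \min_a\{C(s) + \sum_{s'} P_{s,s'}(a) V^A(s')\}$ for all $s$, i.e.\ $(V^A,\theta^A)$ solves the Bellman optimality equation \eqref{eq-Bellman}. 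By the average-cost optimality characterization that underlies Theorem~\ref{thm-optimalexist} (the results of \cite{b17}), any solution of \eqref{eq-Bellman} has its constant equal to the optimal average AoII, so $\theta^A=\theta$ and $A$ is optimal; and $B$ is optimal as well since $\theta^B=\theta^A=\theta$.

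The main obstacle is not the algebra but the legitimacy of the infinite-state manipulations: I must confirm that $\theta^B = \sum_s \pi^B_s C(s)$ is finite and equals the genuine long-run average cost under $B$, and that $\sum_s \pi^B_s \sum_{s'} P_{s,s'}(B(s)) V^A(s')$ may be reordered to $\sum_{s'} \pi^B_{s'} V^A(s')$. Both hinge on the linear growth of $V^A$ in $\Delta$ versus the geometric tails of $\pi^B$, which is precisely the kind of estimate supplied by the conditions verified for $\mathcal{M}$ in Theorem~\ref{thm-optimalexist}. A secondary point is the propagation of $W\equiv 0$ from the recurrent class to all states, handled by the irreducibility of the induced chain on its accessible set; once these are in place, the theorem follows from the single cancellation identity established above.
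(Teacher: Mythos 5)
Your first bullet is correct and is essentially the paper's own argument in a slightly different dress: the paper subtracts the two policy-specific Bellman equations and invokes $\theta^B-\theta^A=\sum_s\pi^B_s\gamma_s$ with $\gamma_s\le0$, which is exactly your ``multiply the improvement inequality by $\pi^B_s$ and sum'' computation, phrased without ever introducing $V^B$. Both versions share the same unstated regularity needs (existence of $\pi^B$ for the improved policy, which Theorem~\ref{thm-optimalexist} only verifies for specific threshold policies, and the interchange $\sum_s\pi^B_s\sum_{s'}P_{s,s'}(B(s))V^A(s')=\sum_{s'}\pi^B_{s'}V^A(s')$), so on that front you are at the same level of rigor as the paper.

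The second bullet is where you diverge from the paper, and where there is a genuine gap. Your complementary-slackness step only gives $W(s)=0$ on the support of $\pi^B$, and the claim that this support is all of $\mathcal{S}$ ``by irreducibility'' is false in this MDP: if $B$ is the threshold policy with $\tau=1$ (the very policy the theorem is applied to in Theorem~\ref{thm-optimalpolicy}), the busy states $(\Delta,t,0)$ with $t\ge1$ are never visited, and under $\tau=\infty$ no busy state is visited at all. Hence your argument does not establish the optimality equation at those states, and it is exactly at states outside $B$'s recurrent class that a competing policy may spend time, which is what the subsequent verification step must control. (The hole happens to be patchable here, but by a different argument than the one you give: every unvisited state is a busy state with a singleton action set, so $W(s)=0$ there follows trivially from the evaluation equation \eqref{eq-policyevaluate}, while all idle states are charged by $\pi^B$.) A second soft spot is the closing appeal to \cite{b17}: that reference provides existence of a solution to the average-cost optimality inequality and optimality of a minimizing policy, not the statement that any solution pair $(V^A,\theta^A)$ of \eqref{eq-Bellman} has $\theta^A$ equal to the optimal average AoII; that verification direction requires additional control of $V^A$ along trajectories of arbitrary policies (e.g.\ boundedness below), which you have not supplied. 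The paper sidesteps both issues by proving the second bullet with the same machinery as the first: once improvement leaves $A$ unchanged, $\gamma_s\ge0$ holds at every state against any competing stationary policy, so $\theta^B-\theta^A=\sum_s\pi^B_s\gamma_s\ge0$ and no superior policy can exist, with no global optimality equation and no external verification theorem needed.
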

\begin{proof}
The proof is based on~\cite[pp.~42-43]{b18}. The complete proof can be found in Appendix~\ref{pf-PIT} of the supplementary material.
\end{proof}
With the most important theorem proved, we proceed with finding $\phi^*$ theoretically. First, we simplify the Bellman equation shown in \eqref{eq-Bellman} to make the theoretical proof more concise and straightforward.

\subsection{Simplifying the Bellman Equation}
We note that state transitions are complex and intertwined. Consequently, the direct analysis of the Bellman equation \eqref{eq-Bellman} is complicated. In the following, we will simplify the Bellman equation. To this end, we leverage the fact that the action space depends on the state space. More specifically, when the channel is busy (i.e., $i\neq-1$), the feasible action is $a=0$. Hence, the transmitter's actions at these states are fixed, which leads to the fact that for these states, the minimum operators in \eqref{eq-Bellman} are avoided. Let $\mathcal{S}_{-1}\triangleq\{s:i=-1\}$ be the set of states at which the channel is idle. Then,
\begin{equation}\label{eq-BellmanOneAction}
\begin{split}
V(s) + \theta = & \min_{a\in\mathcal{A}}\left\lbrace C(s) + \sum_{s'\in\mathcal{S}}P_{s,s'}(a)V(s')\right\rbrace\\
= & C(s) + \sum_{s'\in\mathcal{S}}P_{s,s'}(0)V(s')\quad s\in\mathcal{S}\setminus\mathcal{S}_{-1}.
\end{split}
\end{equation}
Then, by repeatedly replacing the $V(s)$, where $s\in\mathcal{S}\setminus\mathcal{S}_{-1}$, with the expression given by \eqref{eq-BellmanOneAction}, we can obtain the Bellman equation consists only $V(s)$ where $s\in\mathcal{S}_{-1}$. We know that $s=(\Delta,0,-1)$ for $s\in\mathcal{S}_{-1}$. Hence, we abbreviate $V(\Delta,0,-1)$ as $V(\Delta)$. Then, for each $\Delta\geq0$, we have the following modified Bellman equation.
\begin{multline}\label{eq-CompactBellman}
V(\Delta)+\theta = \\
\min_{a\in\{0,1\}}\left\lbrace C(\Delta,a)  - \theta(a) + \sum_{\Delta'\geq0}P_{\Delta,\Delta'}(a)V(\Delta')\right\rbrace,
\end{multline}
where
\[
\theta(a)= \begin{dcases}
0 & a=0,\\
(ET-1)\theta & a=1.
\end{dcases}
\]
Note that $ET$, $P_{\Delta',\Delta}(a)$, and $C(\Delta,a)$ are those defined and discussion in Section~\ref{sec-PolicyPerformance}. Hence, it is sufficient to use \eqref{eq-CompactBellman} instead of \eqref{eq-Bellman} to determine the optimal action at state $(\Delta,0,-1)$. Although equation \eqref{eq-CompactBellman} may seem complicated at first glance, its advantages will be fully demonstrated in the following subsection.

\subsection{Optimality Proof}
In this subsection, we find $\phi^*$ theoretically. We first introduce the condition that is essential to the analysis later on.
\begin{condition}\label{con}
The condition is the following.
\[
\bar{\Delta}_1\leq\min\left\lbrace\bar{\Delta}_0, \frac{1+(1-p)\sigma}{2}\right\rbrace,
\]
where, for \textbf{Assumption 1},
\[
\sigma = \ddfrac{\sum_{t=1}^{t_{max}}p_t\left(\frac{1-(1-p)^{t}}{p}\right)}{1-\sum_{t=1}^{t_{max}}pp_t(1-p)^{t-1}},
\]
and for \textbf{Assumption 2},
\[
\sigma = \ddfrac{\sum_{t=1}^{t_{max}}p_t\left(\frac{1-(1-p)^t}{p}\right) + p_{t^+}\left(\frac{1-(1-p)^{t_{max}}}{p}\right)}{1-\left(\sum_{t=1}^{t_{max}}pp_t(1-p)^{t-1} + p_{t^+}(1-p)^{t_{max}}\right)}.
\]
$\bar{\Delta}_0$ and $\bar{\Delta}_1$ are the expected AoII resulting from the adoption of the threshold policy with $\tau=0$ and $\tau=1$, respectively.
\end{condition}
\begin{theorem}\label{thm-optimalpolicy}
When Condition~\ref{con} is satisfied, the optimal policy for $\mathcal{M}$ is the threshold policy with $\tau=1$.
\end{theorem}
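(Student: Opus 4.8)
The plan is to apply the policy improvement theorem (Theorem~\ref{thm-PIT}) using the threshold policy with $\tau=1$ as the starting policy $A$. If I can show that the policy improvement step applied to $A$ returns a policy $B$ equivalent to $A$, then both are optimal and the claim follows. Concretely, I will first invoke the policy evaluation results: the threshold policy $\tau=1$ induces the DTMC whose stationary distribution is given by Corollary~\ref{cor-StationaryDistributionSpecial}, and its value function $V^A(\Delta)=V^A(\Delta,0,-1)$ together with $\theta^A=\bar{\Delta}_1$ solves the simplified Bellman-type evaluation equation \eqref{eq-CompactBellman} with the action dictated by $\tau=1$ (namely $a=0$ at $\Delta=0$ and $a=1$ at $\Delta\geq1$). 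The first substantive step is therefore to obtain a usable closed (or recursive) form for $V^A(\Delta)$ by solving
\[
V^A(\Delta)+\theta^A = C(\Delta,a^A(\Delta)) - \theta^A(a^A(\Delta)) + \sum_{\Delta'\geq0}P_{\Delta,\Delta'}(a^A(\Delta))V^A(\Delta'),
\]
pinning $V^A(0)=0$ as the reference. Because under $\tau=1$ the only ``decision'' state where $a=0$ is $\Delta=0$, and since $C(\Delta,a)$, $P_{\Delta,\Delta'}(a)$ have the shift/independence properties established in Lemmas~\ref{lem-MSTPAss1}--\ref{lem-CompactCost}, I expect $V^A(\Delta)$ to grow essentially affinely in $\Delta$ for large $\Delta$, with slope governed by $\sigma$; this is exactly where the quantity $\sigma$ in Condition~\ref{con} comes from.

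\textbf{The policy improvement step.} With $V^A$ in hand, the improved policy $B$ at state $(\Delta,0,-1)$ chooses
\[
\phi^B(\Delta)=\argmin_{a\in\{0,1\}}\left\lbrace C(\Delta,a) - \theta^A(a) + \sum_{\Delta'\geq0}P_{\Delta,\Delta'}(a)V^A(\Delta')\right\rbrace,
\]
so I must verify two families of inequalities. First, at $\Delta=0$: that staying idle is (weakly) no worse than transmitting, i.e. the $a=0$ term does not exceed the $a=1$ term; this inequality, after simplification using the evaluation equation for $V^A$, should reduce to $\bar\Delta_1\leq\bar\Delta_0$ — the comparison between the $\tau=1$ and $\tau=0$ policies, which is the first half of Condition~\ref{con}. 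Second, at every $\Delta\geq1$: that transmitting is (weakly) no worse than idling. For this I will use the monotonicity of $V^A$ in $\Delta$ (the analogue of Lemma~\ref{lem-Monotone}, which holds for the discounted problem and passes to the average-cost value function) plus the explicit shape of $V^A$; the binding case is $\Delta=1$, and the resulting inequality should collapse to $\bar\Delta_1\leq \frac{1+(1-p)\sigma}{2}$, the second half of Condition~\ref{con}. If both hold with the right (weak) direction, then $B$ agrees with $A$ on all states, $A$ and $B$ are equivalent, and Theorem~\ref{thm-PIT} yields optimality of the threshold policy with $\tau=1$.

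\textbf{Main obstacle.} The technical heart — and the step I expect to be hardest — is establishing that the $\Delta\geq1$ inequality is dominated by the $\Delta=1$ case, i.e. that once transmitting beats idling at $\Delta=1$ it beats idling at all larger $\Delta$. This requires more than plain monotonicity of $V^A$: it needs control of the \emph{increments} $V^A(\Delta+1)-V^A(\Delta)$ (a discrete convexity/concavity or a uniform bound on the slope), combined with property~3 of Lemma~\ref{lem-MSTPAss1}/\ref{lem-MSTPAss2} which confines the reachable states from $(\Delta,0,-1)$ under $a=1$ to $\{0,1,\dots,t_{max}-1\}\cup\{\Delta+1,\dots,\Delta+t_{max}\}$, and property~2 (the shift invariance $P_{\Delta,\Delta'}(1)=P_{\Delta+\delta,\Delta'+\delta}(1)$ for $\Delta'\geq t_{max}$). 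Exploiting shift invariance, the difference between the $a=1$ and $a=0$ costs at state $\Delta$ telescopes into a quantity that is monotone in $\Delta$, so checking $\Delta=1$ suffices; making this rigorous is the crux. The remaining steps — plugging in the Corollary~\ref{cor-StationaryDistributionSpecial} expressions and the Theorem~\ref{thm-Sigma} value of $\Sigma$ to re-express everything in terms of $\bar\Delta_0$, $\bar\Delta_1$, and $\sigma$, and confirming the algebra matches Condition~\ref{con} verbatim — are routine but lengthy, and I would relegate them to an appendix.
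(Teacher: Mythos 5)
Your proposal follows essentially the same route as the paper: evaluate the threshold policy with $\tau=1$, run one policy improvement step, check that it reproduces the same policy, and invoke Theorem~\ref{thm-PIT}, with the two halves of Condition~\ref{con} playing exactly the roles you assign them (the $\bar\Delta_1\leq\bar\Delta_0$ part handling state $(0,0,-1)$ and the $\bar\Delta_1\leq\frac{1+(1-p)\sigma}{2}$ part handling $\Delta=1$). The crux you flag is resolved in the paper precisely along the lines you anticipate: using the shift-invariance properties of $P_{\Delta,\Delta'}(1)$ and $C(\Delta,1)$, it shows (by induction on the iterative policy evaluation) that $V^{\phi}(\Delta+1)-V^{\phi}(\Delta)=\sigma$ exactly for all $\Delta\geq1$, so the improvement gap $\delta V^{\phi}(\Delta)$ is affine in $\Delta$ with slope $1-p\sigma\geq0$ (using $p\leq\tfrac12$), making the $\Delta=1$ case the binding one as you predicted.
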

\begin{proof}
The value iteration algorithm detailed in Section~\ref{sec-VIA} provides a good guess on the optimal policy. Then, we theoretically prove the optimality using the policy improvement theorem. The general procedure for the optimality proof can be summarized as follows.
\begin{enumerate}
\item \textit{Policy Evaluation}: We calculate the value function resulting from the adoption of the threshold policy with $\tau=1$.
\item \textit{Policy Improvement}: We obtain a new policy using the value function obtained in the previous step and verify that the new policy is the threshold policy with $\tau=1$.
\end{enumerate}
Then, the policy improvement theorem tells us that the threshold policy with $\tau=1$ is optimal. The complete proof can be found in Appendix~\ref{pf-optimalpolicy} of the supplementary material.
\end{proof}
\begin{remark}
When the system fails to satisfy Condition~\ref{con}, we can use the relative value iteration algorithm introduced in Section~\ref{sec-VIA} to obtain a good estimate of $\phi^*$.
\end{remark}

\section{Numerical Results}\label{sec-NumericalResults}
In this section, we numerically verify Condition~\ref{con} and analyze the performance of the optimal policy.
\subsection{Verification of Condition~\ref{con}}\label{sec-Verification}
As the closed-form expressions of $\bar{\Delta}_0$ and $\bar{\Delta}_1$ are given in Section~\ref{sec-PolicyPerformance}, the inequality in Condition~\ref{con} is easy to verify. We verify Condition~\ref{con} numerically for the following systems.
\begin{itemize}
\item System adopts \textbf{Assumption 1}/\textbf{Assumption 2} and the transmission delay follows the Geometric distribution with success probability $p_s$. More precisely, $p_t = (1-p_s)^{t-1}p_s$.
\item System adopts \textbf{Assumption 1} and the transmission delay follows the Zipf distribution with constant $a$. More precisely, $p_t = \frac{t^{-a}}{\sum_{i=1}^{t_{max}}i^{-a}},\ 1\leq t\leq t_{max}$.
\item System adopts \textbf{Assumption 1} and $p_t = \frac{1}{2}(\mathbbm{1}\{t=1\} + \mathbbm{1}\{t=t_{max}\})$.
\end{itemize}
For each of the above systems, the parameters take the following values.
\begin{itemize}
\item $0.05\leq p\leq 0.45$ with step size being equal to $0.05$.
\item $2\leq t_{max}\leq 15$ with step size being equal to $1$.
\item $0\leq p_s\leq 0.95$ with step size being equal to $0.05$.
\item $0\leq a\leq 5$ with step size being equal to $0.25$.
\end{itemize}
The numerical results show that all the above systems satisfy Condition~\ref{con}. Then, according to Theorem~\ref{thm-optimalpolicy}, we can conclude that the corresponding optimal policy is the threshold policy with $\tau=1$.
\begin{remark}
The Zipf distribution reduces to the uniform distribution when $a = 0$, and the Geometric transmission delay reduces to a deterministic transmission delay when $p_s = 0$. We ignore the case of $p=0$ because the dynamic source does not change state in this case. Similarly, we are not interested in the case of $p=0.5$ because the state of the dynamic source is independent of the previous state in this case. Also, we exclude the case of $p_s=1$ because, in this case, the transmission time is deterministic and equal to $1$ time slot. The corresponding optimal policies under various system settings are well studied in~\cite{b6,b7,b8,b9,b10}.
\end{remark}

\subsection{Optimal Policy Performance}\label{sec-Performance}
In this subsection, we analyze the performance of the optimal policy. To this end, we consider the system where the transmission delay follows a Geometric distribution with success probability $p_s$. Moreover, we compare the performance of the optimal policy with that of the threshold policies with $\tau=0$ and $\tau=\infty$. All the results are calculated using Section~\ref{sec-PolicyPerformance}.
\paragraph{The effect of $p$} In this case, we fix $t_{max}=5$ and $p_s=0.7$. Then, we vary $p$ and plot the corresponding results in Fig.~\ref{fig-p}. In the figure, to better show the performance of the optimal policy, we only show parts of the results for the threshold policy with $\tau=\infty$. We notice that, as $p$ increases, the expected AoIIs achieved by the threshold policies with $\tau=0$ and $\tau=1$ increase. This is because when $p$ is large, the dynamic source will be inclined to switch between states. Therefore, the state of the dynamic source is more unpredictable, leading to an increase in the achieved expected AoIIs. Meanwhile, the expected AoII achieved by the threshold policy with $\tau=\infty$ decreases as $p$ increases. To explain this, we first recall that, under the threshold policy with $\tau=\infty$, the receiver's estimate does not change. Also, when $p$ is large, the dynamic source will change states frequently. Therefore, the probability of a situation where the receiver's estimate is always incorrect is small, which makes the resulting AoII small. Also, we notice that \textbf{Assumption 1} and \textbf{Assumption 2} lead to almost the same performance. To explain this, we first note that the only difference between \textbf{Assumption 1} and \textbf{Assumption 2} is whether the update is delivered or discarded when the transmission lasts to the $t_{max}$th time slot after the start of the transmission. However, under our choices of $p_s$ and $t_{max}$, the transmission time of an update rarely reaches $t_{max}$ time slots. Even if it reaches $t_{max}$ time slots, delivery or discarding does not significantly impact the performance, as the receiver's estimate can be correct or incorrect regardless of whether the update is delivered. Therefore, \textbf{Assumption 1} and \textbf{Assumption 2} yield almost the same performance.
\begin{figure}[!t]
\centering
\begin{subfigure}{3in}
\centering
\includegraphics[width=3in]{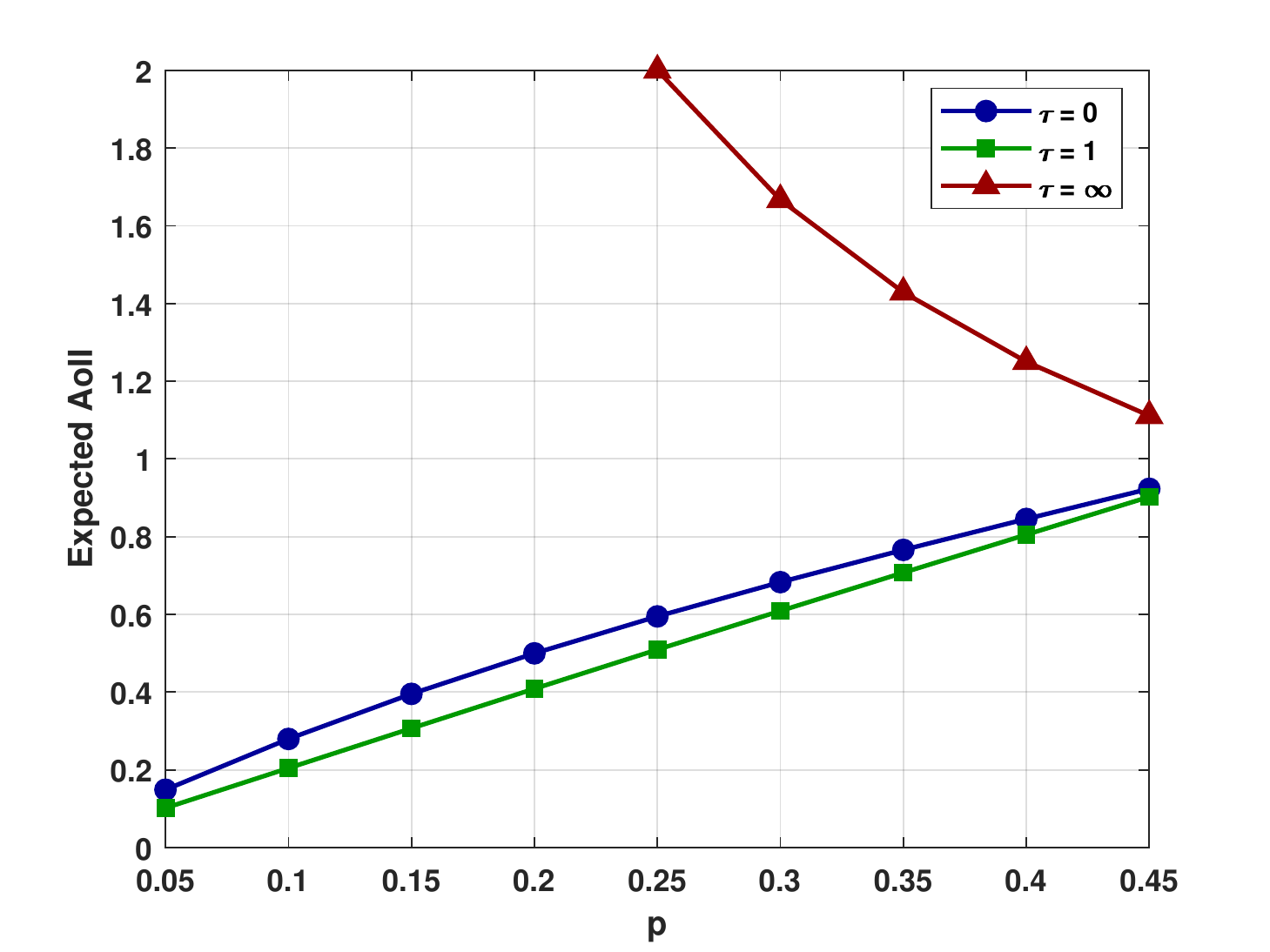}
\caption{Performance under \textbf{Assumption 1}.}
\label{fig-pAss1}
\end{subfigure}\hfill
\begin{subfigure}{3in}
\centering
\includegraphics[width=3in]{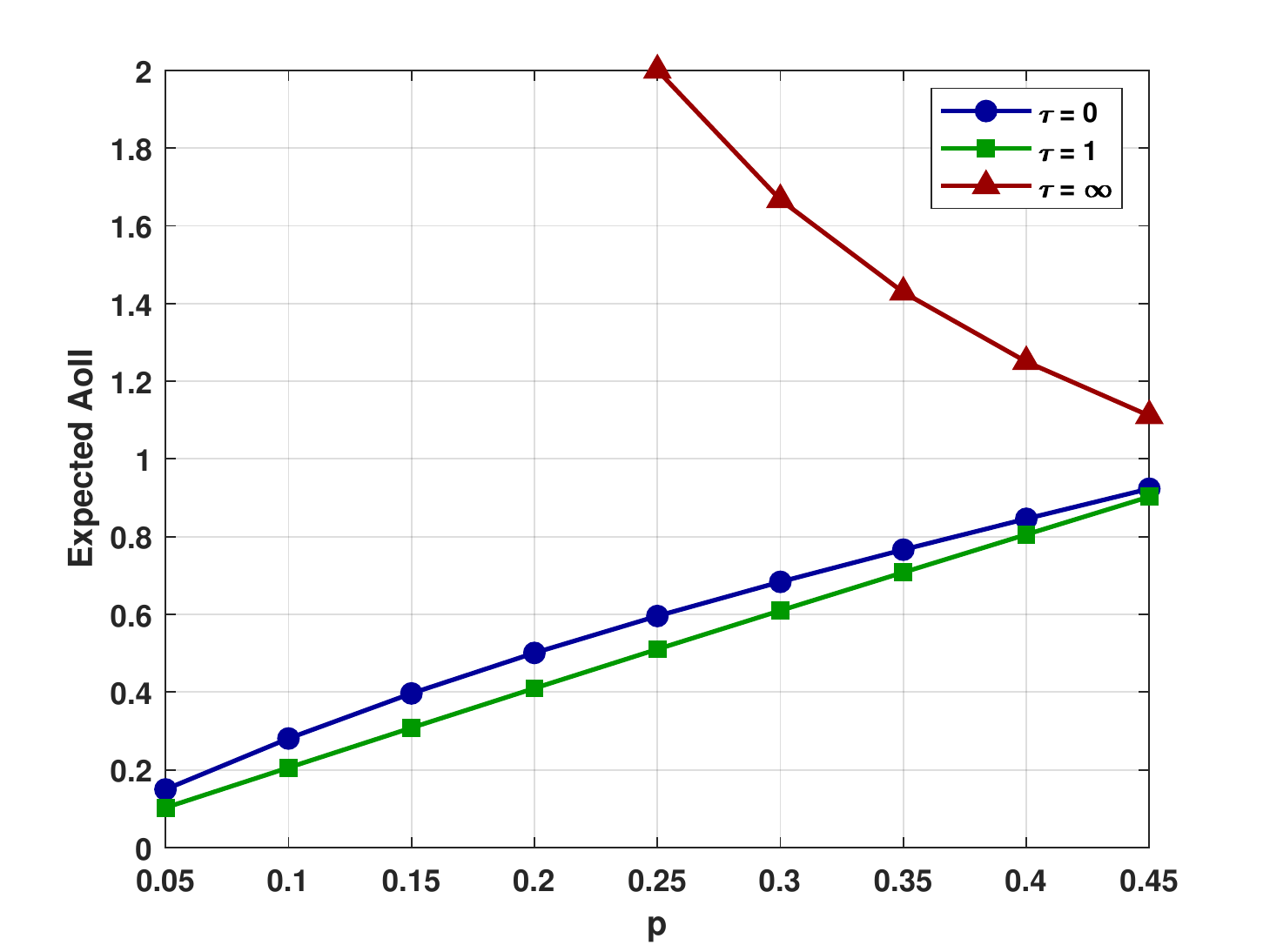}
\caption{Performance under \textbf{Assumption 2}.}
\label{fig-pAss2}
\end{subfigure}\hfill
\caption{Illustrations of the expected AoII as a function of $p$ and $\tau$. We set the upper limit of the transmission time $t_{max}=5$ and the success probability in the Geometric distribution $p_s = 0.7$.}
\label{fig-p}
\end{figure}

\paragraph{The effect of $p_s$} In this case, we fix $t_{max}=5$ and $p=0.35$. Then, we vary $p_s$ and plot the corresponding results in Fig.~\ref{fig-ps}. The figure shows that the expected AoIIs achieved by the threshold policies with $\tau=0$ and $\tau=1$ decrease as $p_s$ increases. The reason behind this is as follows. As $p_s$ increases, the expected transmission time of an update decreases, meaning that updates are more likely to be delivered within the first few time slots. As a result, the receiver receives fresher information, and thus the expected AoII decreases. Moreover, the performance gap between the threshold policies with $\tau=1$ and $\tau=0$ is small when $p_s$ is large. To explain this, we notice that the threshold policy with $\tau=0$ is not optimal because the updates transmitted when AoII is zero do not provide any new information to the receiver. Meanwhile, the transmission will occupy the channel for a few time slots. Therefore, such an action deprives the transmitter of the ability to send new updates for the next few time slots without providing the receiver with any new information. Hence, when $p_s$ is large, the expected transmission time of an update is small. Consequently, the transmission when AoII is zero becomes less costly. Hence, the gap narrows.
\begin{figure}[!t]
\centering
\begin{subfigure}{3in}
\centering
\includegraphics[width=3in]{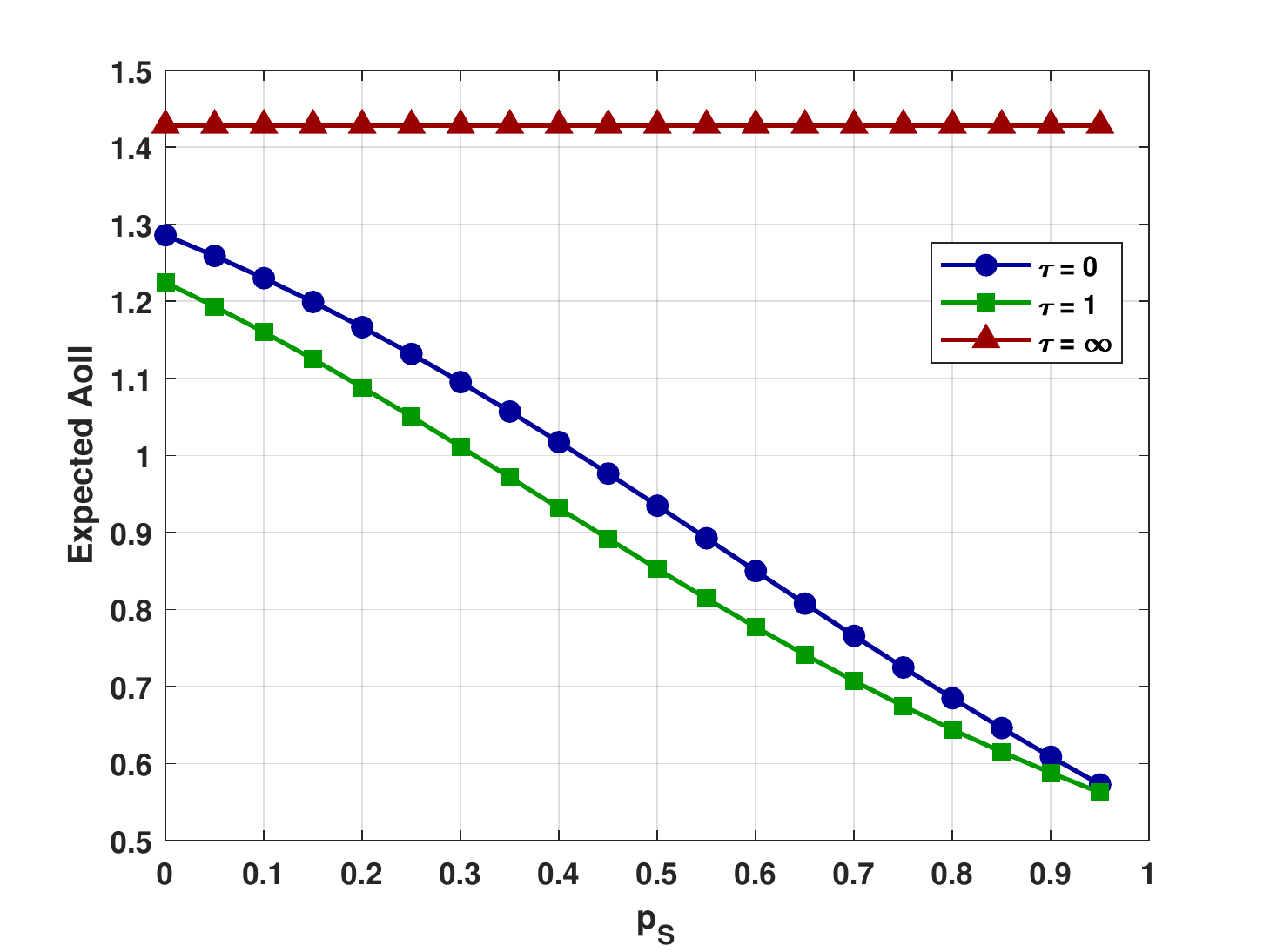}
\caption{Performance under \textbf{Assumption 1}.}
\label{fig-psAss1}
\end{subfigure}\hfill
\begin{subfigure}{3in}
\centering
\includegraphics[width=3in]{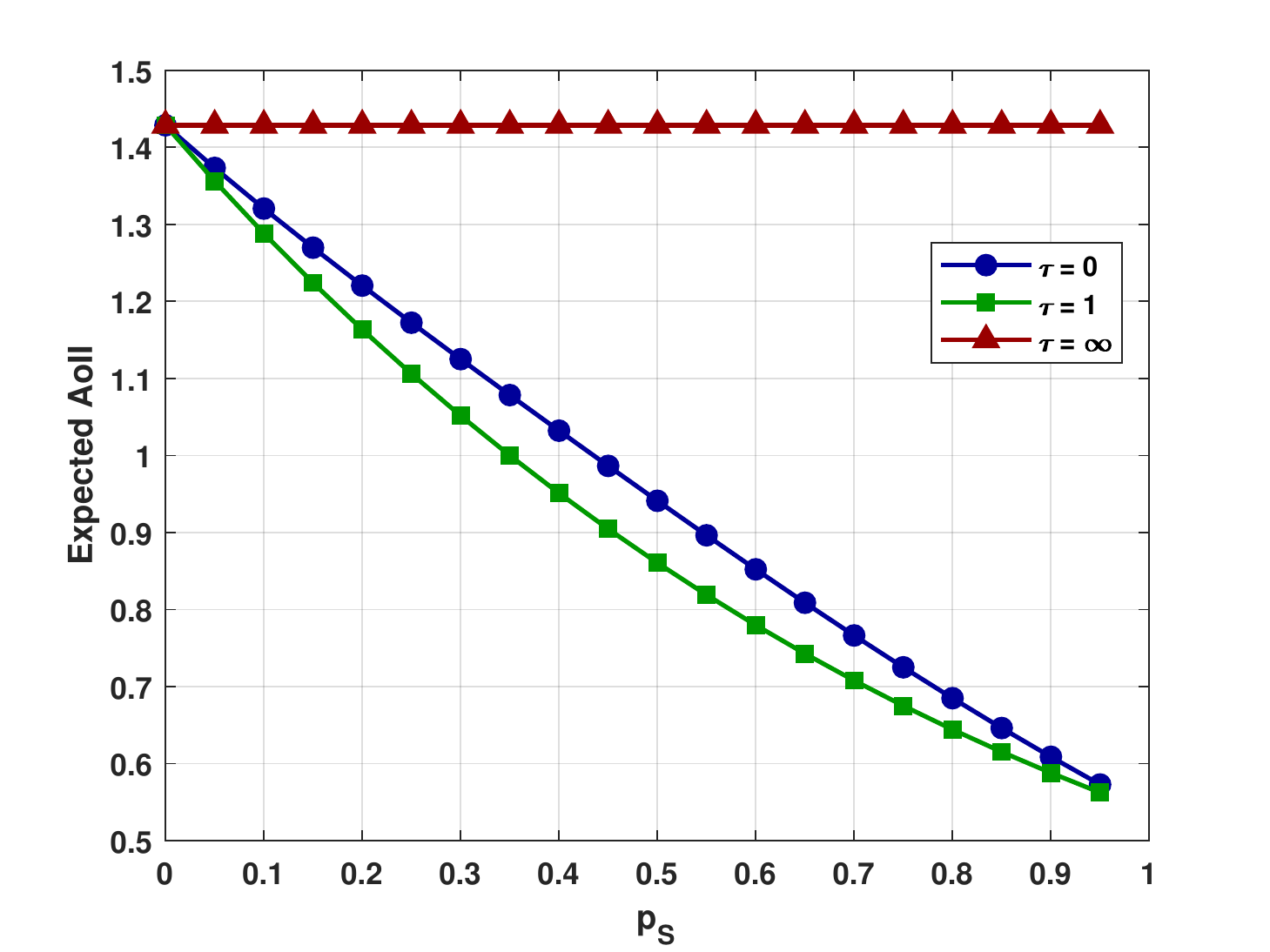}
\caption{Performance under \textbf{Assumption 2}.}
\label{fig-psAss2}
\end{subfigure}\hfill
\caption{Illustrations of the expected AoII as a function of $p_s$ and $\tau$. We set the upper limit of the transmission time $t_{max}=5$ and the source dynamic $p = 0.35$.}
\label{fig-ps}
\end{figure}

\paragraph{The effect of $t_{max}$} In this case, we fix $p_s=0.7$ and $p=0.35$. Then, we vary $t_{max}$ and plot the corresponding results in Fig.~\ref{fig-tmax}. From the figure, we can see that the effect of $t_{max}$ on the performances is only noticeable when $t_{max}$ is small. This is because, under our choice of $p_s$, most updates will be delivered within the first few time slots. Therefore, increasing $t_{max}$ will not significantly affect the performance.
\begin{figure}[!t]
\centering
\begin{subfigure}{3in}
\centering
\includegraphics[width=3in]{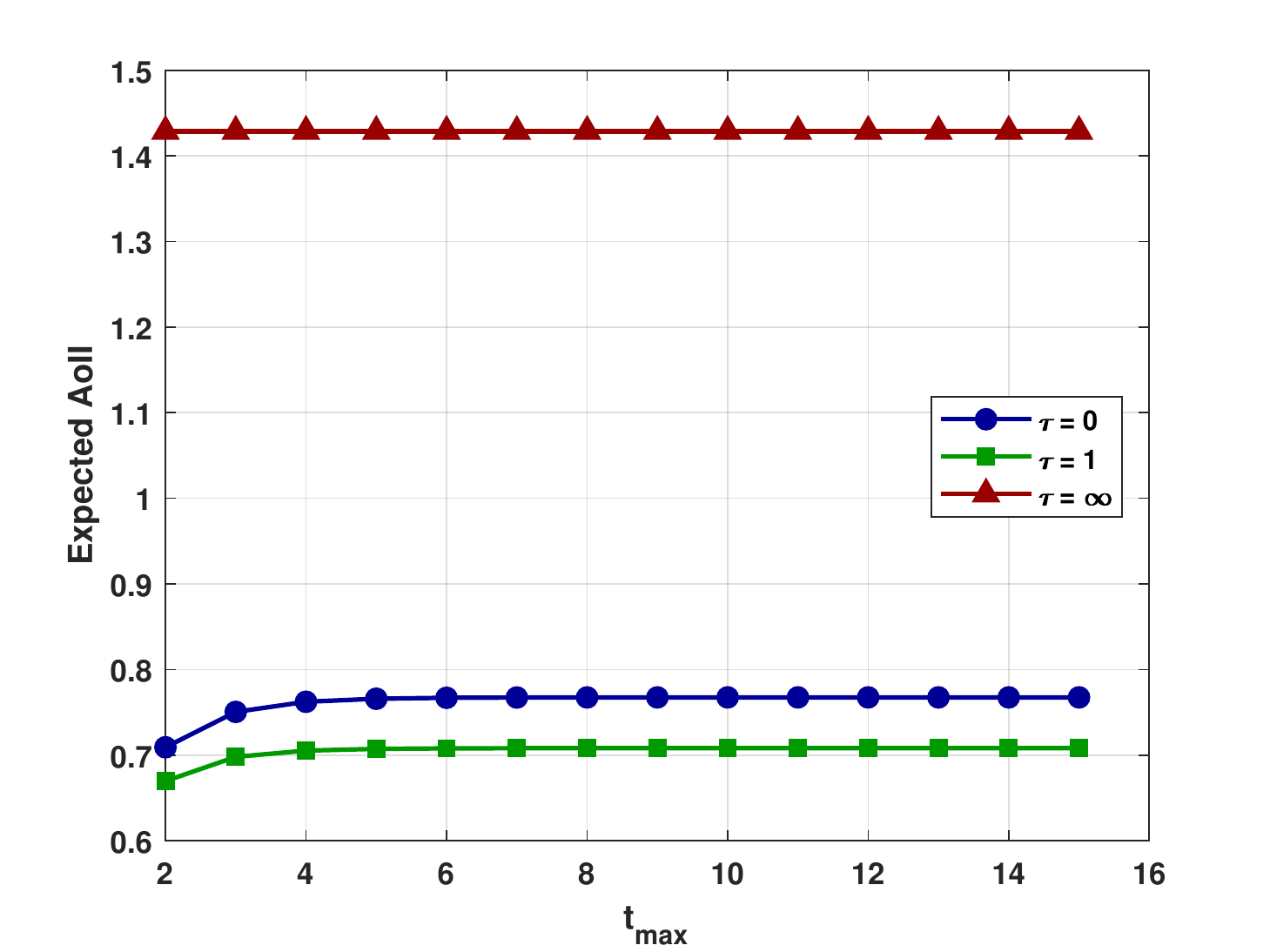}
\caption{Performance under \textbf{Assumption 1}.}
\label{fig-tmaxAss1}
\end{subfigure}\hfill
\begin{subfigure}{3in}
\centering
\includegraphics[width=3in]{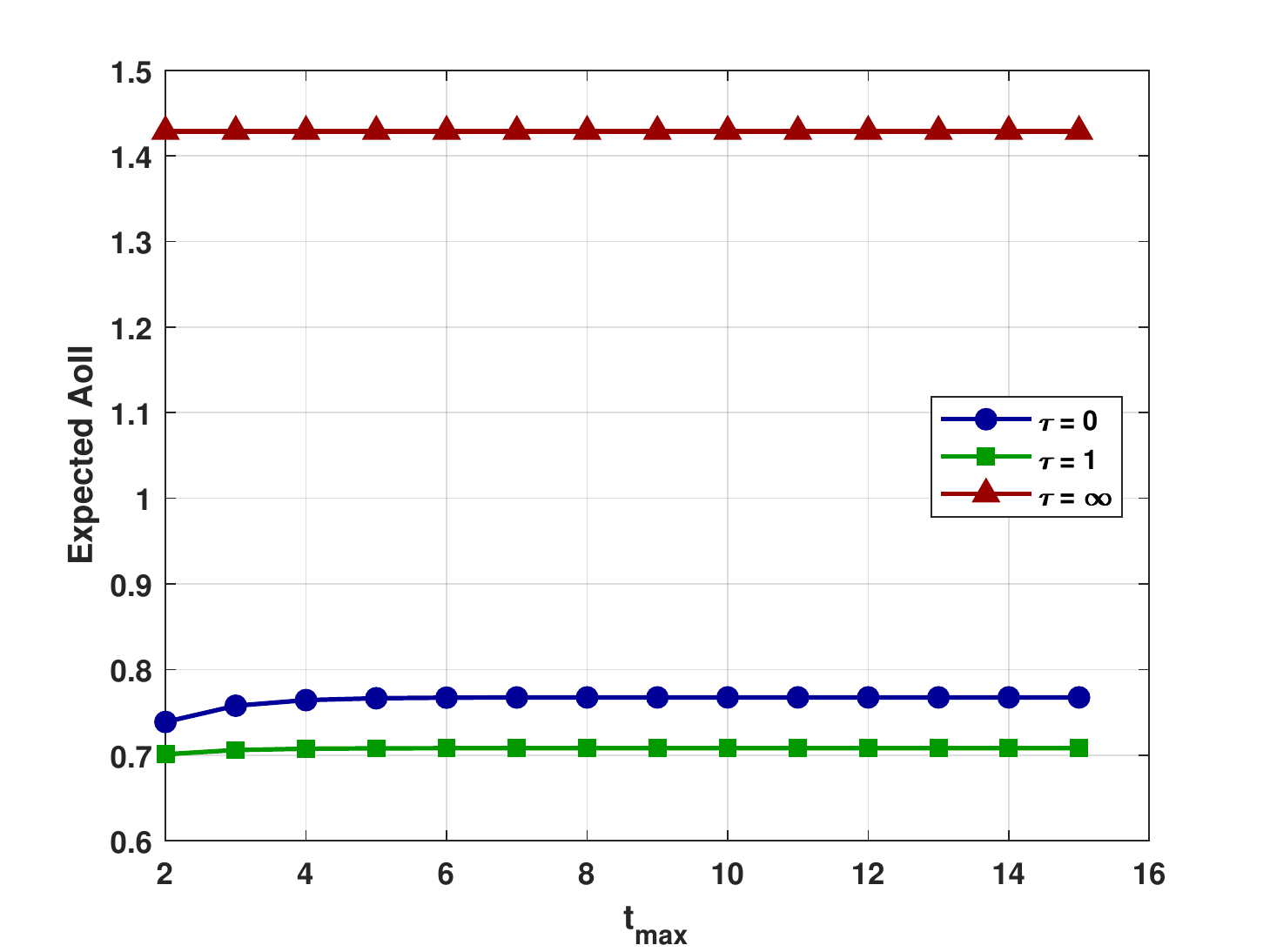}
\caption{Performance under \textbf{Assumption 2}.}
\label{fig-tmaxAss2}
\end{subfigure}\hfill
\caption{Illustrations of the expected AoII as a function of $t_{max}$ and $\tau$. We set the success probability in the Geometric distribution $p_s = 0.7$ and the source dynamic $p = 0.35$.}
\label{fig-tmax}
\end{figure}

\section{Conclusion}
In this paper, we investigate the problem of minimizing the Age of Incorrect Information over a channel with random delay. We study a slotted-time system where a transmitter observes a dynamic source and sends updates to a remote receiver over a channel with random delay. To facilitate the analysis, we consider two cases. The first case assumes that the transmission time has an upper bound and that the update will always be delivered. The second case assumes that the system automatically discards updates if the transmission lasts too long. We aim to find when the transmitter should initiate transmission to minimize the AoII. To this end, we first characterize the optimization problem using the Markov decision process and calculate the expected AoII achieved by the threshold policy precisely using the Markov chain. Next, we prove that the optimal policy exists and provide a computable relative value iteration algorithm to estimate the optimal policy. Then, with the help of the policy improvement theorem, we prove theoretically that, under Condition~\ref{con}, the optimal policy is the threshold policy with $\tau=1$. Finally, we numerically verify Condition~\ref{con} under various system parameters and analyze the performance of the optimal policy.

\bibliographystyle{IEEEtran}
\bibliography{mybib}
\newpage
\setcounter{page}{1}

\twocolumn[
\begin{@twocolumnfalse}
\begin{center}
\Huge Supplementary Material for the Paper "Minimizing Age of Incorrect Information over a Channel with Random Delay"
\end{center}
\end{@twocolumnfalse}
\vspace{3em}]

\appendices
\section{Details of State Transition Probability}\label{app-STP}
We first discuss the individual transition of $\Delta$. We divide our discussion into the following cases.
\begin{itemize}
\item $\Delta=0$ and the receiver's estimates are the same at state $s$ and $s'$. In this case, $\Delta'=0$ when the dynamic source remains in the same state. Otherwise, $\Delta'=1$.
\[
\Delta' = \begin{dcases}
0 & w.p.\ 1-p,\\
1 & w.p.\ p.
\end{dcases}
\]
\item $\Delta=0$ and the receiver's estimates are different at state $s$ and $s'$. In this case, $\Delta'=0$ when the dynamic source flips the state. Otherwise, $\Delta'=1$.
\[
\Delta' = \begin{cases}
0 & w.p.\ p,\\
1 & w.p.\ 1-p.
\end{cases}
\]
\item $\Delta>0$ and the receiver's estimates are the same at state $s$ and $s'$. In this case, $\Delta'=\Delta+1$ when the dynamic source remains in the same state. Otherwise, $\Delta'=0$.
\[
\Delta' = \begin{cases}
0 & w.p.\ p,\\
\Delta'+1 & w.p.\ 1-p.
\end{cases}
\]
\item $\Delta>0$ and the receiver's estimates are different at state $s$ and $s'$. In this case, $\Delta'=\Delta+1$ when the dynamic source flips the state. Otherwise, $\Delta'=0$.
\[
\Delta' = \begin{cases}
0 & w.p.\ 1-p,\\
\Delta+1 & w.p.\ p.
\end{cases}
\]
\end{itemize}
Hence, in the following, we only state whether the receiver's estimates are the same at state $s$ and $s'$ and omit the rest of the discussion on the transition of $\Delta$. To make the notation clearer, we write $P_{s,s'}(a)$ as $P[(\Delta',i',t')\mid (\Delta,t,i),a]$ and $Pr(T>t+1\mid t)$ as $q_{t+1}$ in this proof. Then, we distinguish between the following cases.
\begin{itemize}
\item $s=(0,0,-1)$. In this case, the channel is idle. Hence, the feasible action is $a\in\{0,1\}$. When the transmitter decides not to initiate a new transmission (i.e., $a=0$), $i'=0$ and $t'=-1$. Moreover, the receiver's estimate remains the same. Hence,
\[
Pr[(0,0,-1)\mid (0,0,-1),a=0] = 1-p.
\]
\[
Pr[(1,0,-1)\mid (0,0,-1),a=0] = p.
\]
When the transmitter decides to initiate a new transmission (i.e., $a=1$), the update will be delivered after a random amount of time $T$. When $T>1$, which happens with probability $q_1$, the channel will be busy at the next time slot and $t'=1$ as the transmission starts. Since $\Delta=0$ when the transmission starts, we know $i'=0$. Moreover, the receiver's estimate remains the same since no new update will be delivered. Hence,
\[
Pr[(0,1,0)\mid (0,0,-1),a=1] = q_1(1-p).
\]
\[
Pr[(1,1,0)\mid (0,0,-1),a=1] = q_1p.
\]
When $T=1$, which happens with probability $1-q_1$, the update will be delivered at the next time slot. Hence, the channel will be available for a new transmission at the next time slot, which means that $t'=0$ and $i'=-1$. Since $\Delta=0$ when the transmission starts, the newly arrived update brings no new information to the receiver. Hence, the receiver's estimate remains the same. Hence,
\[
Pr[(0,0,-1)\mid (0,0,-1),a=1] =(1-q_1)(1-p).
\]
\[
Pr[(1,0,-1)\mid (0,0,-1),a=1] =(1-q_1)p.
\]
\item $s = (0,t,0)$. In this case, the channel is busy. Hence, the feasible action is $a=0$. When the update will not arrive at the next time slot, which happens with probability $q_{t+1}$, $i'=i$ since both the transmitting update and the receiver's estimate remain the same. Apparently, $t'=t+1$ as the transmission continues. Moreover, the receiver's estimate remains the same. Hence,
\[
Pr[(0,t+1,0)\mid (0,t,0)] = q_{t+1}(1-p).
\]
\[
Pr[(1,t+1,0)\mid (0,t,0)] = q_{t+1}p.
\]
When the update arrives at the next time slot, which happens with probability $1-q_{t+1}$, $t'=0$ and $i'=-1$ by definition. Since $i=0$, the newly arrived update brings no new information to the receiver. Hence, the receiver's estimate remains the same. Hence,
\[
Pr[(0,0,-1)\mid (0,t,0)] = (1-q_{t+1})(1-p).
\]
\[
Pr[(1,0,-1)\mid (0,t,0)] = (1-q_{t+1})p.
\]
\item $s= (0,t,1)$. The analysis is very similar to that for $s = (0,t,0)$ except that when the update arrives, the receiver's estimate is flipped. Hence,
\[
Pr[(0,t+1,1)\mid (0,t,1)] = q_{t+1}(1-p).
\]
\[
Pr[(1,t+1,1)\mid (0,t,1)] = q_{t+1}p.
\]
\[
Pr[(0,0,-1)\mid (0,t,1)] = (1-q_{t+1})p.
\]
\[
Pr[(1,0,-1)\mid (0,t,1)] = (1-q_{t+1})(1-p).
\]
\item $s= (\Delta,0,-1)$ where $\Delta>0$. In this case, the analysis is very similar to that for $s = (0,0,-1)$, except that the receiver's estimate is incorrect at state $s$, and if the decision is made to transmit, the transmitted update differs from the receiver's estimate. Therefore, the details are omitted here.
\[
Pr[(\Delta+1,0,-1)\mid (\Delta,0,-1),a=0] = 1-p.
\]
\[
Pr[(0,0,-1)\mid (\Delta,0,-1),a=0] = p.
\]
\[
Pr[(\Delta+1,1,1)\mid (\Delta,0,-1),a=1] = q_1(1-p).
\]
\[
Pr[(0,1,1)\mid (\Delta,0,-1),a=1] = q_1p.
\]
\[
Pr[(\Delta+1,0,-1)\mid (\Delta,0,-1),a=1] = (1-q_1)p.
\]
\[
Pr[(0,0,-1)\mid (\Delta,0,-1),a=1] = (1-q_1)(1-p).
\]
\item $s= (\Delta,t,0)$ where $\Delta>0$. The analysis is very similar to that for $s = (0,t,0)$ except that the receiver's estimate is incorrect at state $s$. Hence,
\[
Pr[(\Delta+1,t+1,0)\mid (\Delta,t,0)] = q_{t+1}(1-p).
\]
\[
Pr[(0,t+1,0)\mid (\Delta,t,0)] = q_{t+1}p.
\]
\[
Pr[(\Delta+1,0,-1)\mid (\Delta,t,0)] = (1-q_{t+1})(1-p).
\]
\[
Pr[(0,0,-1)\mid (\Delta,t,0)] = (1-q_{t+1})p.
\]
\item $s=(\Delta,t,1)$ where $\Delta>0$. The analysis is very similar to that for $s = (\Delta,t,0)$ except that the transmitted update differs from the receiver's estimate. Hence,
\[
Pr[(\Delta+1,t+1,1)\mid (\Delta,t,1)] = q_{t+1}(1-p).
\]
\[
Pr[(0,t+1,1)\mid (\Delta,t,1)] = q_{t+1}p.
\]
\[
Pr[(\Delta+1,0,-1)\mid (\Delta,t,1)] = (1-q_{t+1})p.
\]
\[
Pr[(0,0,-1)\mid (\Delta,t,1)] = (1-q_{t+1})(1-p).
\]
\end{itemize}
Combing the above cases, we fully characterized the state transitions and the corresponding probabilities.

\section{Proof of Lemma~\ref{lem-MSTPAss1}}\label{pf-CompactTrans}
We recall that $P^t_{\Delta,\Delta'}(1)$ is the probability that action $a$ at state $s=(\Delta,0,-1)$ will lead to state $s'=(\Delta',0,-1)$, given that the transmission takes $t$ time slots. With this in mind, we first distinguish between different values of $\Delta$.
\begin{itemize}
\item When $\Delta=0$, the transmitted update is the same as the receiver's estimate. Hence, the receiver's estimate will not change due to receiving the transmitted update. Moreover, we recall that AoII will either increases by one or decreases to zero. Hence, $\Delta'\in\{0,1,...,t\}$. Then, we further distinguish our discussion into the following cases.
\begin{itemize}
\item $\Delta'=0$ happens when the receiver's estimate is correct as a result of receiving the update. Hence, the probability of this happening is $p^{(t)}$.
\item $\Delta'=k\in\{1,...,t\}$ happens when the receiver's estimate is correct at $(t-k)$th time slot after the transmission, which happens with probability $p^{(t-k)}$. Then, the estimate remains incorrect for the remainder of the transmission time. This happens when the source first changes state, then remains in the same state throughout the rest of the transmission. Hence, the probability of this happening is $p(1-p)^{k-1}$. Combining together, $\Delta'=k$ happens with probability $p^{(t-k)}p(1-p)^{k-1}$.
\end{itemize}
Combining together, we have
\[
P^{t}_{0,\Delta'}(1) = 
\begin{dcases}
p^{(t)} & \Delta'=0,\\
p^{(t-k)}p(1-p)^{k-1} & 1\leq\Delta'= k\leq t,\\
0 & otherwise.
\end{dcases}
\]
\item When $\Delta>0$, the transmitted update is different from the receiver's estimate. Hence, the receiver's estimate will flip as a result of receiving the transmitted update. Moreover, we know $\Delta'\in\{0,1,...,t-1,\Delta+t\}$. Hence, we further distinguish between the following cases.
\begin{itemize}
\item $\Delta'=0$ happens in the same case as discussed in the case of $\Delta=0$. Hence, the estimate is correct with probability $p^{(t)}$.
\item $\Delta'=1$ happens when the estimate is correct at $(t-1)$th time slot after the transmission, which happens with probability $1-p^{(t-1)}$. Then, the estimate becomes incorrect as a result of receiving the update. Since the estimate flips upon the arrival of the transmitted update, it happens when the source remains in the same state. Hence, the probability of this happening is $1-p$. Combing together, $\Delta'=1$ happens with probability $(1-p^{(t-1)})(1-p)$.
\item $\Delta'=k\in\{2,...,t-1\}$ happens when the estimate is correct at $(t-k)$th time slot after the transmission, which happens with probability $1-p^{(t-k)}$. Then, the estimate remains incorrect for the remainder of the transmission time. This happens when the dynamic source behaves the following way during the remaining transmission time. The dynamic source should first change state, then remain in the same state, and finally, change state again when the update arrives. This happens with probability $p^2(1-p)^{k-2}$. Hence, $\Delta'=k$ happens with probability $(1-p^{(t-k)})p^{2}(1-p)^{k-2}$.
\item $\Delta'=\Delta+t$ happens when the estimate is incorrect throughout the transmission. Since the estimate will flip when the update is received, this happens when the source stays in the same state until the update arrives. Hence, $\Delta'=\Delta+t$ happens with probability $p(1-p)^{t-1}$.
\end{itemize}
Combining together, for $\Delta>0$, we have
\begin{multline*}
P^{t}_{\Delta,\Delta'}(1) = \\
\begin{dcases}
p^{(t)} & \Delta'=0,\\
(1-p^{(t-1)})(1-p) & \Delta'=1,\\
(1-p^{(t-k)})p^{2}(1-p)^{k-2} & 2\leq\Delta'=k\leq t-1,\\
p(1-p)^{t-1} & \Delta'=\Delta+t,\\
0 & otherwise.
\end{dcases}
\end{multline*}
\end{itemize}
By analyzing the above expressions, we can easily conclude that $P^{t}_{\Delta,\Delta'}(1)$ possesses the following properties.
\begin{itemize}
\item $P^{t}_{\Delta,0}(1)$ and $P^{t}_{\Delta,\Delta+t}(1)$ are both independent of $\Delta$.
\item $P^{t}_{\Delta,\Delta'}(1)$ is independent of $\Delta$ when $\Delta>0$ and $0\leq\Delta'\leq t-1$.
\item $P^{t}_{\Delta,\Delta'}(1) = 0$ when $\Delta'>\Delta+t$ or when $t-1<\Delta'<\Delta+t$.
\end{itemize}
Leveraging the above properties, we can prove the second part of the lemma. The equivalent expression can be obtained easily, so the details are omitted. In the following, we focus on proving the properties of $P_{\Delta,\Delta'}(a)$.
\begin{itemize}
\item \textbf{property 1}: When $\Delta'=0$, $P_{\Delta,0}(1) = \sum_{t=1}^{t_{max}}p_tP^t_{\Delta,0}(1)$ for any $\Delta\geq0$. Since $P^{t}_{\Delta,0}(1)$ is independent of $\Delta$, property 1 holds in this case. Then, we consider the case of $1\leq\Delta'\leq t_{max}-1$ and $\Delta\geq\Delta'$. In this case,
\[
P_{\Delta,\Delta'}(1) = \sum_{t=\Delta'}^{t_{max}}p_tP^t_{\Delta,\Delta'}(1),
\]
where $P^t_{\Delta,\Delta'}(1)$ is independent of $\Delta$. Hence, $P_{\Delta,\Delta'}(1)$ is independent of $\Delta$. Combining together, property 1 holds.
\item \textbf{property 2}: We notice that, when $\Delta'\geq t_{max}$,
\[
P_{\Delta,\Delta'}(1) = p_{t'}P^{t'}_{\Delta,\Delta'}(1) = p_{t'}P^{t'}_{\Delta,\Delta+t'}(1).
\]
We recall that $P^{t'}_{\Delta,\Delta+t'}(1)$ is independent of $\Delta$. Then, we can conclude that $P_{\Delta,\Delta'}(1)$ depends only on $t'$. Thus, property 2 holds.
\item \textbf{property 3}: The equivalent expression in corollary indicates that the property holds when $\Delta'>\Delta+t_{max}$. In the case of $t_{max}-1<\Delta'<\Delta+1$, we have
\[
P_{\Delta,\Delta'}(1) = p_{t'}P^{t'}_{\Delta,\Delta'}(1),
\]
where $t'\leq0$. By definition, $P_{\Delta,\Delta'}(1)=0$. Hence, property 3 holds.
\end{itemize}

\section{Proof of Lemma~\ref{lem-MSTPAss2}}\label{pf-Case2TransProb}
The proof is similar to that of Lemma~\ref{lem-MSTPAss1}. We first derive the expressions of $P^t_{\Delta,\Delta'}(1)$ and $P^{t^{+}}_{\Delta,\Delta'}(1)$. To this end, we start with the case of $\Delta=0$. In this case, the transmitted update is the same as the receiver's estimate. With this in mind, we distinguish between different values of $t$.
\begin{itemize}
\item When $1\leq t<t_{max}$, the update is delivered after $t$ time slot. Hence, $\Delta'\in\{0,1,...,t\}$. Then, we further distinguish between different values of $\Delta'$.
\begin{itemize}
\item $\Delta'=0$ in the case where the receiver's estimate is correct when the update is delivered. Hence, $\Delta'=0$ happens with probability $p^{(t)}$.
\item $\Delta' = k \in\{1,2,...,t\}$ when the receiver's estimate is correct at the $(t-k)$th time slots after the transmission occurs. Then, the source flips the state and remains in the same state for the remainder of the transmission. Hence, $\Delta' = k \in\{1,2,...,t\}$ happens with probability $p^{(t-k)}p(1-p)^{k-1}$.
\end{itemize}
\item When $t = t_{max}$, the update either arrives or be discarded. In this case, $\Delta'\in\{0,1,...,t_{max}\}$. We recall that the update is the same as the receiver's estimate. Hence, the receiver's estimate will not change in both cases. Consequently, $P^{t_{max}}_{0,\Delta'}(1)=P^{t^+}_{0,\Delta'}(1)$, which can be obtained by setting the $t$ in the above case to $t_{max}$.
\end{itemize}
Combining together, for each $1\leq t\leq t_{max}$,
\[
P^{t}_{0,\Delta'}(1) = 
\begin{dcases}
p^{(t)} & \Delta'=0,\\
p^{(t-k)}p(1-p)^{k-1} & 1\leq\Delta'= k\leq t,\\
0 & otherwise.
\end{dcases}
\]
\[
P^{t^+}_{0,\Delta'}(1) = P^{t_{max}}_{0,\Delta'}(1).
\]
Then, we consider the case of $\Delta>0$. We notice that, in this case, the receiver's estimate will flip upon receiving the update. Then, we distinguish between different values of $t$.
\begin{itemize}
\item When $1\leq t<t_{max}$, the update is delivered after $t$ time slots, and the receiver's estimate will flip. Hence, $\Delta'\in\{0,1,...,t-1,\Delta+t\}$. Then, we further distinguish between different values of $\Delta'$.
\begin{itemize}
\item $\Delta'=0$ in the case where the receiver's estimate is correct when the update is received. Hence, $\Delta'=0$ happens with probability $p^{(t)}$.
\item $\Delta'=1$ when the receiver's estimate is correct at $(t-1)$th time slot after the transmission starts and becomes incorrect when the update arrives. Hence, $\Delta'=1$ happens with probability $(1-p^{(t-1)})(1-p)$.
\item $\Delta'=k\in\{2,3,...,t-1\}$ when the receiver's estimate is correct at $(t-k)$th time slot after the transmission starts. Then, the source changes state and remains in the same state. Finally, at the time slot when the update arrives, the source flips state again. Hence, $\Delta'=k\in\{2,3,...,t-1\}$ happens with probability $(1-p^{(t-k)})p^{2}(1-p)^{k-2}$.
\item $\Delta'=\Delta+t$ when the estimate is incorrect throughout the transmission. We recall that the receiver's estimate will flip when the update arrives. Hence, $\Delta'=\Delta+t$ when the source remains in the same state until the update arrives, which happens with probability $p(1-p)^{t-1}$.
\end{itemize}
\item When $t=t_{max}$ and the transmitted update is delivered, the receiver's estimate flips. In this case, $\Delta'\in\{0,1,...,t_{max}-1,\Delta+t_{max}\}$. Hence, $P^{t_{max}}_{\Delta,\Delta'}(1)$ can be obtained by setting the $t$ in the above case to $t_{max}$.
\item When $t=t_{max}$ and the transmitted update is discarded, the receiver's estimate remains the same. In this case, $\Delta'\in\{0,1,...,t_{max}-1,\Delta+t_{max}\}$. Then, we further divide our discussion into the following cases.
\begin{itemize}
\item $\Delta'=0$ when the receiver's estimate is correct at the $t_{max}$the time slot after the transmission starts, which happens when the state of the source at the time slot the update is discarded is different from that when the transmission started. Hence, $\Delta'=0$ happens with probability $1- p^{(t_{max})}$.
\item $\Delta' = k \in\{1,2,...,t_{max}-1\}$ when the receiver's estimate is correct at $(t_{max}-k)$th time slot after the transmission starts. Then, the source changes state and remains in the same state for the remainder of the transmission. Hence, $\Delta'=k\in\{1,2,...,t_{max}-1\}$ happens with probability $(1-p^{(t_{max}-k)})p(1-p)^{k-1}$.
\item $\Delta' = \Delta+t_{max}$ when the source remains in the same state throughout the transmission. Combining with the source dynamic, we can conclude that $\Delta' = \Delta+t_{max}$ happens with probability $(1-p)^{t_{max}}$.
\end{itemize}
\end{itemize}
Combining together, for $\Delta>0$ and each $1\leq t\leq t_{max}$,
\begin{multline*}
P^{t}_{\Delta,\Delta'} (1) =\\
\begin{dcases}
p^{(t)} & \Delta'=0,\\
(1-p^{(t-1)})(1-p) & \Delta'=1,\\
(1-p^{(t-k)})p^{2}(1-p)^{k-2} & 2\leq\Delta'=k\leq t-1,\\
p(1-p)^{t-1} & \Delta'=\Delta+t,\\
0 & otherwise.
\end{dcases}
\end{multline*}
\begin{multline*}
P^{t^+}_{\Delta,\Delta'}(1) = \\
\begin{dcases}
1-p^{(t_{max})} & \Delta'=0,\\
(1-p^{(t_{max}-k)})p(1-p)^{k-1} & 1\leq \Delta'= k\leq t_{max}-1,\\
(1-p)^{t_{max}} & \Delta' = \Delta+t_{max},\\
0 & otherwise.
\end{dcases}
\end{multline*}
By analyzing the above expressions, we can easily conclude that $P^t_{\Delta,\Delta'}(1)$ and $P^{t^{+}}_{\Delta,\Delta'}(1)$ possess the following properties.
\begin{itemize}
\item$P^{t}_{\Delta,\Delta+t}(1)$ and $P^{t^+}_{\Delta,\Delta+t_{max}}(1)$ are independent of $\Delta$ when $\Delta>0$.
\item $P^t_{\Delta,\Delta'}(1)$ is independent of $\Delta$ when $\Delta>0$ and $0\leq\Delta'\leq t-1$.
\item $P^t_{\Delta,\Delta'}(1)=0$ when $\Delta>0$ and $t-1<\Delta'< \Delta+t$.
\item $P^{t^+}_{\Delta,\Delta'}(1)$ is independent of $\Delta$ when $\Delta>0$ and $0\leq\Delta'\leq t_{max}-1$.
\item $P^{t^+}_{\Delta,\Delta'}(1)=0$ when $\Delta>0$ and $t_{max}-1<\Delta'< \Delta+t_{max}$.
\end{itemize}
Leveraging the properties above, we proceed with proving the second part of the lemma. The equivalent expression can be obtained easily by analyzing \eqref{eq-MSTPAss2}. Hence, the details are omitted. In the following, we focus on proving the presented properties.
\begin{itemize}
\item \textbf{property 1}: We notice that, when $0 \leq\Delta' \leq t_{max} -1$ and $\Delta\geq \max\{1,\Delta'\}$,
\[
P_{\Delta,\Delta'}(1) = \sum_{t=\Delta'}^{t_{max}}p_tP^t_{\Delta,\Delta'}(1)+ p_{t^+}P^{t^+}_{\Delta,\Delta'}(1).
\]
Then, we divide the discussion into the following two cases.
\begin{itemize}
\item $\Delta\geq\max\{1,\Delta'\}$ indicates that $\Delta>0$ and $\Delta'<\Delta+t_{max}$. Hence, $P^{t^+}_{\Delta,\Delta'}(1)$ is independent of $\Delta$.
\item $\Delta\geq\max\{1,\Delta'\}$ indicates that $\Delta>0$ and $\Delta'<\Delta+t$. Hence, $P^t_{\Delta,\Delta'}(1)$ is independent of $\Delta$ for any feasible $t$.
\end{itemize}
Combining together, we can conclude that property 1 holds.
\item \textbf{property 2}: We notice that, when $\Delta'\geq t_{max}$,
\[
P_{\Delta,\Delta'}(1) = p_{t'}P^{t'}_{\Delta,\Delta'}(1)+ p_{t^+}P^{t^+}_{\Delta,\Delta'}(1).
\]
Then, we divide the discussion into the following two cases.
\begin{itemize}
\item Since $t'=\Delta'-\Delta$, $P_{\Delta,\Delta'}^{t'}(1) = P^{t'}_{\Delta,\Delta+t'}(1)$. Then, we know that $P^{t'}_{\Delta,\Delta'}(1)$ is independent of $\Delta>0$ when $t'>0$ and $P^{t'}_{\Delta,\Delta'}(1)=0$ when $t'\leq0$ by definition. Hence, $P_{\Delta,\Delta'}^{t'}(1)$ depends on $t'$.
\item When $\Delta'\geq t_{max}$ and $\Delta'\neq\Delta+t_{max}$, $P^{t^+}_{\Delta,\Delta'}(1)=0$ for $\Delta>0$. Also, $P^{t^+}_{\Delta,\Delta'}(1)$ is independent of $\Delta>0$ when $\Delta'=\Delta+t_{max}$. Hence, $P^{t^+}_{\Delta,\Delta'}(1)$ depends only on $t'$.
\end{itemize}
Combining together, property 2 holds.
\item \textbf{property 3}: When $\Delta'>\Delta+t_{max}$, the property holds apparently. When $t_{max}-1<\Delta'<\Delta+1$, 
\[
P_{\Delta,\Delta'}(1) = p_{t'}P^{t'}_{\Delta,\Delta'}(1)+ p_{t^+}P^{t^+}_{\Delta,\Delta'}(1),
\]
where $t'\leq0$. Then, by definition, $P^{t'}_{\Delta,\Delta'}(1)=0$. Moreover, we recall that $t_{max}>1$, which indicates that $P^{t^+}_{\Delta,\Delta'}(1)=0$. Hence, property 3 holds.
\end{itemize}

\section{Proof of Theorem~\ref{prop-StationaryDistribution}}\label{pf-StationaryDistribution}
We recall that $\pi_{\Delta}$ satisfies \eqref{eq-CompactBalanceEq2} and \eqref{eq-TotalProbs}. Then, plugging in the probabilities yields the following system of linear equations.
\begin{equation}\label{eq-0util15}
\begin{split}
\pi_0 = & (1-p)\pi_0 + p\sum_{i=1}^{\tau-1}\pi_i+ \sum_{i=\tau}^{\infty}P_{i,0}(1)\pi_i \\
= & (1-p)\pi_0 + p\sum_{i=1}^{\tau-1}\pi_i+ P_{1,0}(1)\sum_{i=\tau}^{\infty}\pi_i.
\end{split}
\end{equation}
\begin{equation}\label{eq-0util16}
\pi_1 = p\pi_0 + \sum_{i=\tau}^{\infty}P_{i,1}(1)\pi_i = p\pi_0 + P_{1,1}(1)\sum_{i=\tau}^{\infty}\pi_i.
\end{equation}
For each $2\leq\Delta\leq t_{max}-1$,
\begin{equation}\label{eq-0util17}
\pi_\Delta = \begin{dcases}
(1-p)\pi_{\Delta-1} + P_{\tau,\Delta}(1)\sum_{i=\tau}^{\infty}\pi_i & \Delta-1<\tau,\\
\sum_{i=\tau}^{\Delta-1}P_{i,\Delta}(1)\pi_i + P_{\Delta,\Delta}(1)\sum_{i=\Delta}^{\infty}\pi_i & \Delta-1\geq\tau.
\end{dcases}
\end{equation}
For each $t_{max}\leq\Delta\leq \omega-1$,
\[
\pi_{\Delta} = \begin{dcases}
(1-p)\pi_{\Delta-1} & \Delta-1<\tau,\\
\sum_{i=\tau}^{\Delta-1}P_{i,\Delta}(1)\pi_i & \Delta-1\geq\tau.
\end{dcases}
\]
For each $\Delta\geq\omega$,
\begin{equation}\label{eq-0util6}
\pi_{\Delta}  = \sum_{i=\Delta-t_{max}}^{\Delta-1}P_{i,\Delta}(1)\pi_i.
\end{equation}
\[
\sum_{i=0}^{\tau-1}\pi_i + ET\sum_{i=\tau}^{\infty}\pi_i = 1.
\]
Note that we can pull the state transition probabilities in \eqref{eq-0util15}, \eqref{eq-0util16}, and \eqref{eq-0util17} out of the summation due to property 1 in Lemma~\ref{lem-MSTPAss1} and Lemma~\ref{lem-MSTPAss2}. Then, we sum \eqref{eq-0util6} over $\Delta$ from $\omega$ to $\infty$.
\begin{equation}\label{eq-0util7}
\sum_{i=\omega}^{\infty}\pi_i = \sum_{i=\omega}^{\infty}\sum_{k=i-t_{max}}^{i-1}P_{k,i}(1)\pi_k.
\end{equation}
We delve deep into the right hand side (RHS) of \eqref{eq-0util7}. To this end, we expand the first summation, which yields
\begin{align*}
RHS = & \sum_{k=\tau+1}^{\omega-1}P_{k,\omega}(1)\pi_k + \sum_{k=\tau+2}^{\omega}P_{k,\omega+1}(1)\pi_k + \cdots + \\
& \sum_{k=\omega-1}^{\omega+t_{max}-2}P_{k,\omega+t_{max}-1}(1)\pi_k +\\
& \sum_{k=\omega}^{\omega+t_{max}-1}P_{k,\omega+t_{max}}(1)\pi_k + \cdots
\end{align*}
Then, we rearrange the summation.
\begin{align*}
RHS = & P_{\tau+1,\omega}(1)\pi_{\tau+1} + \sum_{k=1}^{2}P_{\tau+2,\omega+k-1}(1)\pi_{\tau+2} + \cdots +\\
& \sum_{k=1}^{t_{max}}P_{\omega-1,\omega+k-1}(1)\pi_{\omega-1} +\\
&  \sum_{k=1}^{t_{max}}P_{\omega,\omega+k}(1)\pi_{\omega} +\sum_{k=1}^{t_{max}}P_{\omega+1,\omega+k+1}(1)\pi_{\omega+1} +\cdots
\end{align*}
Leveraging property 2 in Lemma~\ref{lem-MSTPAss1} and Lemma~\ref{lem-MSTPAss2}, we have
\begin{multline*}
RHS = \sum_{i=\tau+1}^{\omega-1}\left(\sum_{k=\tau+1}^iP_{i,t_{max}+k}(1)\right)\pi_i +\\
 \sum_{i=1}^{t_{max}}\bigg(P_{\omega,\omega+i}(1)\bigg)\left(\sum_{k=\omega}^{\infty}\pi_k\right).
\end{multline*}
We define $\Pi\triangleq \sum_{i=\omega}^{\infty}\pi_i$. Then, equation \eqref{eq-0util7} becomes the following.
\begin{equation}\label{eq-0util71}
\Pi = \sum_{i=\tau+1}^{\omega-1}\left(\sum_{k=\tau+1}^iP_{i,t_{max}+k}(1)\right)\pi_i + \sum_{i=1}^{t_{max}}\bigg(P_{\omega,\omega+i}(1)\bigg)\Pi.
\end{equation}
Finally, replacing \eqref{eq-0util6} with \eqref{eq-0util71} and applying the definition of $\Pi$ yield a system of linear equations with finite size as presented in the theorem.

\section{Proof of Corollary~\ref{cor-StationaryDistributionSpecial}}\label{pf-AoIISpecialCase1}
We start with $\tau=0$. In this case, $\omega = t_{max}+1$ and the system of linear equations becomes to the following.
\begin{multline}\label{eq-0util1}
\pi_{\Delta} = \sum_{i=0}^{\infty}P_{i,\Delta}(1)\pi_i =\\ 
\begin{dcases}
P_{0,0}(1)\pi_0 + P_{1,0}(1)\sum_{i=1}^{\infty}\pi_i & \Delta=0,\\
\sum_{i=0}^{\Delta-1}P_{i,\Delta}(1)\pi_i + P_{\Delta,\Delta}(1)\sum_{i=\Delta}^{\infty}\pi_i & 1\leq\Delta\leq t_{max}.
\end{dcases}
\end{multline}
\begin{equation}\label{eq-0util72}
\begin{split}
\Pi = & \sum_{i=1}^{t_{max}}\left(\sum_{k=1}^iP_{i,t_{max}+k}(1)\right)\pi_i + \\
& \sum_{i=1}^{t_{max}}P_{t_{max}+1,t_{max}+1+i}(1)\Pi.
\end{split}
\end{equation}
\begin{equation}\label{eq-0util2}
ET\sum_{i=0}^{\infty}\pi_i = 1.
\end{equation}
We first combine \eqref{eq-0util1} and \eqref{eq-0util2}, which yields \eqref{eq-EquivalentEq9}.
\begin{figure*}[!t]
\normalsize
\begin{equation}\label{eq-EquivalentEq9}
{\pi_{\Delta}} =
\begin{dcases}
P_{0,0}(1)\pi_0 + P_{1,0}(1)\left(\frac{1}{ET} -\pi_0\right) & \Delta=0,\\
\sum_{i=0}^{\Delta-1}P_{i,\Delta}(1)\pi_i + P_{\Delta,\Delta}(1)\left(\frac{1}{ET} - \sum_{i=0}^{\Delta-1}\pi_i\right) & 1\leq\Delta\leq t_{max}.
\end{dcases}
\end{equation}
\hrulefill
\vspace*{4pt}
\end{figure*}
Then, we have
\[
\pi_{0} = \frac{P_{1,0}(1)}{ET[1-P_{0,0}(1)+P_{1,0}(1)]}.
\]
According to \eqref{eq-0util72}, we obtain
\[
\Pi = \ddfrac{\sum_{i=1}^{t_{max}}\left(\sum_{k=1}^{i}P_{i,t_{max}+k}(1)\right)\pi_i}{1-\sum_{i=1}^{t_{max}}P_{t_{max}+1,t_{max}+1+i}(1)}.
\]

Then, we consider the case of $\tau=1$. In this case, $\omega = t_{max}+2$ and the system of linear equations reduces to the following.
\begin{equation}\label{eq-0util8}
\pi_0 = (1-p)\pi_0 + P_{1,0}(1)\sum_{i=1}^{\infty}\pi_i.
\end{equation}
\[
\pi_1 = p\pi_0 + P_{1,1}(1)\sum_{i=1}^{\infty}\pi_i.
\]
\[
\pi_\Delta = \sum_{i=1}^{\Delta-1}P_{i,\Delta}(1)\pi_i + P_{\Delta,\Delta}(1)\sum_{i=\Delta}^{\infty}\pi_i\quad 2\leq \Delta\leq t_{max}-1.
\]
\begin{equation}\label{eq-corollary1util1}
\pi_{\Delta} = \sum_{i=1}^{\Delta-1}P_{i,\Delta}(1)\pi_i\quad t_{max}\leq\Delta\leq t_{max}+1.
\end{equation}
\begin{equation}\label{eq-0util73}
\begin{split}
\Pi = & \sum_{i=2}^{t_{max}+1}\left(\sum_{k=2}^iP_{i,t_{max}+k}(1)\right)\pi_i +\\
& \sum_{i=1}^{t_{max}}P_{t_{max}+2,t_{max}+2+i}(1)\Pi.
\end{split}
\end{equation}
\begin{equation}\label{eq-0util9}
\pi_0 + ET\sum_{i=1}^{\infty}\pi_i = 1.
\end{equation}
We first combine \eqref{eq-0util8} and \eqref{eq-0util9}, which yields
\[
\pi_0 = (1-p)\pi_0 + P_{1,0}(1)\left(\frac{1-\pi_0}{ET}\right).
\]
Hence, we have
\[
\pi_0 = \frac{P_{1,0}(1)}{pET+P_{1,0}(1)}.
\]
Similarly,
\[
\pi_1 = \frac{pP_{1,0}(1)+pP_{1,1}(1)}{pET+P_{1,0}(1)}.
\]
For each $2\leq \Delta\leq t_{max}-1$,
\begin{equation}\label{eq-corollary1util0}
\pi_\Delta = \sum_{i=1}^{\Delta-1}P_{i,\Delta}(1)\pi_i + P_{\Delta,\Delta}(1)\left(\frac{1-\pi_0}{ET} - \sum_{i=1}^{\Delta-1}\pi_i\right).
\end{equation}
According to the property 3 in Lemma~\ref{lem-MSTPAss1} and Lemma~\ref{lem-MSTPAss2}, we know that $P_{\Delta,\Delta}(1) = 0$ when $t_{max}\leq\Delta\leq t_{max}+1$. Hence, we can combine \eqref{eq-corollary1util1} and \eqref{eq-corollary1util0}, for each $2\leq \Delta\leq t_{max}+1$, which yields
\[
\pi_\Delta = \sum_{i=1}^{\Delta-1}P_{i,\Delta}(1)\pi_i + P_{\Delta,\Delta}(1)\left(\frac{1-\pi_0}{ET} - \sum_{i=1}^{\Delta-1}\pi_i\right).
\]
Finally, according to \eqref{eq-0util73}, we obtain
\[
\Pi = \ddfrac{\sum_{i=2}^{t_{max}+1}\left(\sum_{k=2}^iP_{i,t_{max}+k}(1)\right)\pi_i}{1-\sum_{i=1}^{t_{max}}P_{t_{max}+2,t_{max}+2+i}(1)}.
\]

\section{Proof of Lemma~\ref{lem-CompactCost}}\label{pf-ExpectedCost}
We recall that $C^k(\Delta)$ is defined as the expected AoII $k$ time slots after the transmission starts at state $(\Delta,0,-1)$, given that the transmission is still in progress. With this in mind, we start with the case of $\Delta=0$. As AoII either increases by one or decreases to zero, we know $C^{k}(0)\in\{0,...,k\}$. Then, we distinguish between the following cases.
\begin{itemize}
\item $C^k(0) = 0$ when the receiver's estimate is correct $k$ time slots after the transmission starts. Since $\Delta = 0$, we can easily conclude that $C^k(0) = 0$ happens with probability $p^{(k)}$.
\item $C^k(0) = h$, where $1\leq h\leq k$, happens when the receiver's estimate is correct at the $(k-h)$th time slot after the transmission starts, then, the source flips the state and stays in the same state for the remaining $h-1$ time slots. Hence, $C^k(0) = h$, where $1\leq h\leq k$, happens with probability $p^{(k-h)}p(1 - p)^{h-1}$.
\end{itemize}
Combining together, we obtain
\[
C^k(0) = \sum_{h=1}^{k} hp^{(k-h)}p(1-p)^{h-1}.
\]
Then, we consider the case of $\Delta>0$. In this case, the transmission starts when the receiver's estimate is incorrect and $C^{k}(\Delta)\in\{0,1,...k-1,\Delta+k\}$. Then, we distinguish between the following cases.
\begin{itemize}
\item $C^k(\Delta) = 0$ when the receiver's estimate is correct at the $k$th time slot after the transmission starts, which happens with probability $(1-p^{(k)})$.
\item $C^k(\Delta) = h$, where $h\in\{1,2,...,k-1\}$, happens when the receiver's estimate is correct at the $(k-h)$th slot after the transmission starts. Then, the source flips the state and stays in the same state for the remaining $h-1$ time slots. Hence, $C^k(\Delta) = h$, where $h\in\{1,2,...,k-1\}$, happens with probability $(1-p^{(k-h)})p(1 - p)^{h-1}$.
\item $C^k(\Delta) = \Delta+k$ when the estimate at the receiver side is always wrong for $k$ time slots after the transmission starts. Since $\Delta > 0$ and the receiver's estimate will not change, $C^k(\Delta) = \Delta+k$ happens with probability $(1 - p)^k$. 
\end{itemize}
Combining together, for $\Delta>0$, we obtain
\[
C^k(\Delta) = \sum_{h=1}^{k-1} h(1-p^{(k-h)})p(1-p)^{h-1} + (\Delta+k)(1-p)^k.
\]

\section{Proof of Theorem~\ref{prop-LazyPerformance}}\label{pf-LazyPerformance}
We recall that when $\tau=\infty$, the transmitter will never initiate any transmissions. Hence, the receiver's estimate will never change. Without loss of generality, we assume the receiver's estimate $\hat{X}_k=0$ for all $k$. The first step in calculating the expected AoII achieved by the threshold policy with $\tau=\infty$ is to calculate the stationary distribution of the induced DTMC. We know that $\pi_{\Delta}$ satisfies the following equations.
\begin{equation}\label{eq-lazypolicy1}
\pi_0 = (1-p)\pi_0 + p\sum_{i=1}^{\infty}\pi_i.
\end{equation}
\[
\pi_1 = p\pi_0.
\]
\[
\pi_{\Delta} = (1-p)\pi_{\Delta-1}\quad \Delta\geq 2.
\]
\begin{equation}\label{eq-lazypolicy2}
\sum_{i=0}^{\infty}\pi_i = 1.
\end{equation}
Combining \eqref{eq-lazypolicy1} and \eqref{eq-lazypolicy2} yields
\[
\pi_0 = (1-p)\pi_0 + p(1-\pi_0).
\]
Hence, $\pi_0=\frac{1}{2}$. Then, we can get
\[
\pi_1 = \frac{p}{2},
\]
\[
\pi_{\Delta} = (1-p)^{\Delta-1}\pi_1 = \frac{p(1-p)^{\Delta-1}}{2}\quad \Delta\geq2.
\]
Combining together, we have
\[
\pi_0=\frac{1}{2},\quad \pi_{\Delta} = \frac{p(1-p)^{\Delta-1}}{2}\quad \Delta\geq1.
\]
Since the transmitter will never make any transmission attempts, the cost for being at state $(\Delta,0,-1)$ is nothing but $\Delta$ itself. Hence, the expected AoII is
\[
\bar{\Delta}_\infty = \sum_{\Delta=1}^{\infty}\Delta\frac{p(1-p)^{\Delta-1}}{2} = \frac{1}{2p}.
\]

\section{Proof of Theorem~\ref{thm-Sigma}}\label{pf-Performance}
We recall that, for $\Delta\geq\omega$, $\pi_\Delta$ satisfies
\begin{align*}
\pi_{\Delta} = & \sum_{i=\Delta-t_{max}}^{\Delta-1}P_{i,\Delta}(1)\pi_i \\
= & \sum_{i=1}^{t_{max}}P_{i-t_{max}+\Delta-1,\Delta}(1)\pi_{i-t_{max}+\Delta-1}\quad \Delta\geq\omega.
\end{align*}
We first focus on the system under \textbf{Assumption 1}. We know from by Lemma~\ref{lem-MSTPAss1} that $P_{\Delta,\Delta'}(1) = p_{t'}P^{t'}_{\Delta,\Delta'}(1)$ where $t' = \Delta'-\Delta$ when $\Delta'\geq\omega$. Hence, for each $\Delta\geq\omega$,
\[
\pi_{\Delta} = \sum_{i=1}^{t_{max}}p_{t_{max}+1-i}P^{t_{max}+1-i}_{i-t_{max}+\Delta-1,\Delta}(1)\pi_{i-t_{max}+\Delta-1}.
\]
Renaming the variables yields
\[
\pi_{\Delta} = \sum_{t=1}^{t_{max}}p_{t}P^{t}_{\Delta-t,\Delta}(1)\pi_{\Delta-t}\quad \Delta\geq\omega.
\]
To proceed, we define, for each $1\leq t\leq t_{max}$,
\begin{equation}\label{eq-0util41}
\pi_{\Delta,t} \triangleq p_{t}P^{t}_{\Delta-t,\Delta}(1)\pi_{\Delta-t}\quad\Delta\geq\omega.
\end{equation}
Note that $\sum_{t=1}^{t_{max}}\pi_{\Delta,t} = \pi_{\Delta}$. Then, for a given $1\leq t\leq t_{max}$, we multiple both side of  \eqref{eq-0util41} by $C(\Delta-t,1)$ and sum over $\Delta$ from $\omega$ to $\infty$. Hence, we have
\begin{equation}\label{eq-0util51}
\sum_{i=\omega}^{\infty}C(i-t,1)\pi_{i,t} = \sum_{i=\omega}^{\infty}C(i-t,1)p_tP^t_{i-t,i}(1)\pi_{i-t}.
\end{equation}
We define $\Delta_t' \triangleq C(\Delta,1) - C(\Delta-t,1)$ where $\Delta>t$. Then, according to \eqref{eq-CompactCostAss1}, we have
\[
\Delta_t' = \sum_{i=1}^{t_{max}}p_i\bigg(C^i(\Delta,1) - C^i(\Delta-t,1)\bigg).
\]
According to Lemma~\ref{lem-CompactCost}, we have
\begin{multline*}
C^{i}(\Delta-t,1) = \Delta - t + \sum_{h=1}^{i-1}\bigg(\sum_{k=1}^{h-1} k(1-p^{(h-k)})p(1-p)^{k-1} + \\
 (\Delta-t+h)(1-p)^h\bigg).
\end{multline*}
\begin{multline*}
C^{i}(\Delta,1) = \Delta+ \sum_{h=1}^{i-1}\bigg(\sum_{k=1}^{h-1} k(1-p^{(h-k)})p(1-p)^{k-1}+\\
(\Delta+h)(1-p)^h\bigg).
\end{multline*}
Subtracting the two equations yields
\begin{align*}
C^{i}(\Delta,1) - C^{i}(\Delta-t,1) = & t + \sum_{h=1}^{i-1}\bigg(t(1-p)^h\bigg) \\
= & \frac{t-t(1-p)^i}{p}.
\end{align*}
Then, we have
\[
\Delta_t' = \sum_{i=1}^{t_{max}}p_i\left(\frac{t-t(1-p)^i}{p}\right).
\]
We notice that $\Delta_t'$ is independent of $\Delta$ when $\Delta>t$. Hence, \eqref{eq-0util51} can be rewritten as
\[
\sum_{i=\omega}^{\infty}\bigg(C(i,1) - \Delta_t'\bigg)\pi_{i,t} = \sum_{i=\omega-t}^{\infty}C(i,1)p_tP^t_{i,i+t}(1)\pi_i.
\]
Then, we define $\Pi_t \triangleq \sum_{i=\omega}^{\infty}\pi_{i,t}$ and $\Sigma_t \triangleq \sum_{i=\omega}^{\infty}C(i,1)\pi_{i,t}$. We notice that $P^t_{\Delta,\Delta+t}(1)$ is independent of $\Delta$ when $\Delta>0$. Hence, we obtain 
\[
\sum_{i=\omega}^{\infty}C(i,1)\pi_{i,t} - \Delta_t'\sum_{i=\omega}^{\infty}\pi_{i,t} = p_tP^t_{1,1+t}(1)\sum_{i=\omega-t}^{\infty}C(i,1)\pi_i.
\]
Plugging in the definitions yields
\[
\Sigma_t - \Delta_t'\Pi_t = p_tP^t_{1,1+t}(1)\left(\sum_{i=\omega-t}^{\omega-1}C(i,1)\pi_i+\Sigma\right).
\]
Summing the above equation over $t$ from $1$ to $t_{max}$ yields
\begin{multline*}
\sum_{t=1}^{t_{max}}\bigg(\Sigma_t - \Delta_t'\Pi_t\bigg) = \\
\sum_{t=1}^{t_{max}}\left[p_tP^t_{1,1+t}(1)\left(\sum_{i=\omega-t}^{\omega-1}C(i,1)\pi_i+\Sigma\right)\right].
\end{multline*}
Rearranging the above equation yields
\begin{multline*}
\Sigma - \sum_{t=1}^{t_{max}}\Delta_t'\Pi_t = \sum_{t=1}^{t_{max}}\left[p_tP^t_{1,1+t}(1)\left(\sum_{i=\omega-t}^{\omega-1}C(i,1)\pi_i\right)\right] +\\
\sum_{t=1}^{t_{max}}\bigg(p_tP^t_{1,1+t}(1)\bigg)\Sigma.
\end{multline*}
Hence, the closed-form expression of $\Sigma$ is
\[
\Sigma = \ddfrac{\sum_{t=1}^{t_{max}}\left[p_tP^t_{1,1+t}(1)\left(\sum_{i=\omega-t}^{\omega-1}C(i,1)\pi_i\right) + \Delta_t'\Pi_t\right]}{1-\sum_{t=1}^{t_{max}}\bigg(p_tP^t_{1,1+t}(1)\bigg)}.
\]
In the following, we calculate $\Pi_t$. Combining the definition of $\Pi_t$ with \eqref{eq-0util41}, we have
\begin{align*}
\Pi_t\triangleq \sum_{i=\omega}^{\infty}\pi_{i,t} = & \sum_{i=\omega}^{\infty}\bigg(p_{t}P^{t}_{i-t,i}(1)\pi_{i-t}\bigg)\\
= & \sum_{i=\omega-t}^{\infty}\bigg(p_{t}P^{t}_{i,i+t}(1)\pi_{i}\bigg).
\end{align*}
Since $P^{t}_{\Delta,\Delta+t}(1)$ is independent of $\Delta$ when $\Delta>0$, we have
\[
\Pi_t = p_{t}P^{t}_{1,1+t}(1)\left(\sum_{i=\omega-t}^{\omega-1}\pi_{i} + \Pi\right).
\]
Combining together, we recover the results for \textbf{Assumptio 1} as presented in the first part of the theorem.

In the sequel, we focus on \textbf{Assumption 2}. To this end, we follow similar steps as detailed above. We recall from Lemma~\ref{lem-MSTPAss2}, $P_{\Delta,\Delta'}(1) = p_{t'}P^{t'}_{\Delta,\Delta'}(1) + p_{t^+}P^{t^+}_{\Delta,\Delta'}(1)$ where $t' = \Delta'-\Delta$ when $\Delta'\geq\omega$. Then, for each $\Delta\geq\omega$,
\begin{multline*}
\pi_\Delta = \sum_{i=1}^{t_{max}}\bigg(p_{t_{max}+1-i}P^{t_{max}+1-i}_{\Delta-t_{max}+i-1,\Delta}(1) + \\
p_{t^+}P^{t^+}_{\Delta-t_{max}+i-1,\Delta}(1)\bigg)\pi_{\Delta-t_{max}-1+i}.
\end{multline*}
Renaming the variables yields
\begin{align*}
\pi_\Delta = & \sum_{t=1}^{t_{max}}\bigg(p_tP^t_{\Delta-t,\Delta}(1) + p_{t^+}P^{t^+}_{\Delta-t,\Delta}(1)\bigg)\pi_{\Delta-t}\\
= & \sum_{t=1}^{t_{max}}\Upsilon(\Delta,t)\pi_{\Delta-t}\quad \Delta\geq\omega,
\end{align*}
where $\Upsilon(\Delta,t)\triangleq p_tP^t_{\Delta-t,\Delta}(1) + p_{t^+}P^{t^+}_{\Delta-t,\Delta}(1)$. We notice that $\Upsilon(\Delta,t)$ is independent of $\Delta$ when $\Delta \geq \omega$. To proceed, we define, for each $1\leq t\leq t_{max}$,
\[
\pi_{\Delta,t} \triangleq \Upsilon(\Delta,t)\pi_{\Delta-t}\quad\Delta\geq\omega.
\]
Note that $\sum_{t=1}^{t_{max}}\pi_{\Delta,t} = \pi_{\Delta}$. Then, for a given $1\leq t\leq t_{max}$, we have
\begin{equation}\label{eq-ZeroWaitPolicy4}
\sum_{i=\omega}^{\infty}C(i-t,1)\pi_{i,t} = \sum_{i=\omega}^{\infty}C(i-t,1)\Upsilon(i,t)\pi_{i-t}.
\end{equation}
We define $\Delta_t' \triangleq C(\Delta,1) - C(\Delta-t,1)$ where $\Delta>t$. Then, according to \eqref{eq-CompactCostAss2}, we have
\begin{multline*}
\Delta_t' = \sum_{i=1}^{t_{max}}p_i\bigg(C^i(\Delta,1) - C^i(\Delta-t,1)\bigg) + \\
p_{t^+}\bigg(C^{t_{max}}(\Delta,1) - C^{t_{max}}(\Delta-t,1)\bigg).
\end{multline*}
By Lemma~\ref{lem-CompactCost}, we have
\begin{multline*}
C^{i}(\Delta-t,1) = \Delta - t + \sum_{h=1}^{i-1}\bigg(\sum_{k=1}^{h-1} k(1-p^{(h-k)})p(1-p)^{k-1}+\\
(\Delta-t+h)(1-p)^h\bigg).
\end{multline*}
\begin{multline*}
C^{i}(\Delta,1) = \Delta+ \sum_{h=1}^{i-1}\bigg(\sum_{k=1}^{h-1} k(1-p^{(h-k)})p(1-p)^{k-1}+\\
(\Delta+h)(1-p)^h\bigg).
\end{multline*}
Subtracting the two equations yields
\begin{align*}
C^{i}(\Delta,1) - C^{i}(\Delta-t,1) & = t + \sum_{h=1}^{i-1}\bigg(t(1-p)^h\bigg)\\
& = \frac{t-t(1-p)^i}{p}.
\end{align*}
Then, for each $1\leq t\leq t_{max}$, we have
\[
\Delta_t' = \sum_{i=1}^{t_{max}}p_i\left(\frac{t-t(1-p)^i}{p}\right) + p_{t^+}\left(\frac{t-t(1-p)^{t_{max}}}{p}\right).
\]
We notice that $\Delta_t' = C(\Delta,1) - C(\Delta-t,1)$ is independent of $\Delta$ when $\Delta>t$. Hence, equation \eqref{eq-ZeroWaitPolicy4} can be written as
\[
\sum_{i=\omega}^{\infty}\bigg(C(i,1) - \Delta_t'\bigg)\pi_{i,t} = \sum_{i=\omega-t}^{\infty}C(i,1)\Upsilon(i+t,t)\pi_i.
\]
Then, we define $\Pi_t \triangleq \sum_{i=\omega}^{\infty}\pi_{i,t}$ and $\Sigma_t \triangleq \sum_{i=\omega}^{\infty}C(i,1)\pi_{i,t}$. We recall that $\Upsilon(\Delta,t)$ is independent of $\Delta$ when $\Delta\geq \omega$. Hence, plugging in the definitions yields
\[
\Sigma_t - \Delta_t'\Pi_t = \sum_{i=\omega-t}^{\omega-1}\Upsilon(i+t,t)C(i,1)\pi_i+\Upsilon(\omega+t,t)\Sigma.
\]
Summing the above equation over $t$ from $1$ to $t_{max}$ yields
\begin{multline*}
\sum_{t=1}^{t_{max}}\bigg(\Sigma_t - \Delta_t'\Pi_t\bigg) =\\
\sum_{t=1}^{t_{max}}\left(\sum_{i=\omega-t}^{\omega-1}\Upsilon(i+t,t)C(i,1)\pi_i+\Upsilon(\omega+t,t)\Sigma\right).
\end{multline*}
Rearranging the above equation yields
\begin{align*}
\Sigma - \sum_{t=1}^{t_{max}}\Delta_t'\Pi_t = & \sum_{t=1}^{t_{max}}\left(\sum_{i=\omega-t}^{\omega-1}\Upsilon(i+t,t)C(i,1)\pi_i\right) +\\
& \sum_{t=1}^{t_{max}}\Upsilon(\omega+t,t)\Sigma.
\end{align*}
Then, the closed-form expression of $\Sigma$ is
\[
\Sigma = \ddfrac{\sum_{t=1}^{t_{max}}\left[\left(\sum_{i=\omega-t}^{\omega-1}\Upsilon(i+t,t)C(i,1)\pi_i\right) + \Delta_t'\Pi_t\right]}{1-\sum_{t=1}^{t_{max}}\Upsilon(\omega+t,t)}.
\]
In the following, we calculate $\Pi_t$. We have
\[
\Pi_t\triangleq \sum_{i=\omega}^{\infty}\pi_{i,t} = \sum_{i=\omega}^{\infty}\Upsilon(i,t)\pi_{i-t} = \sum_{i=\omega-t}^{\infty}\Upsilon(i+t,t)\pi_i.
\]
Since $\Upsilon(\Delta,t)$ is independent of $\Delta$ if $\Delta \geq \omega$, we have
\[
\Pi_t = \sum_{i=\omega-t}^{\omega-1}\Upsilon(i+t,t)\pi_i + \Upsilon(\omega+t,t)\Pi\quad 1\leq t\leq t_{max}.
\]
Combining together, we recover the results for the system under \textbf{Assumption 2} as presented in the second half of the theorem.

\section{Proof of Lemma~\ref{lem-Monotone}}\label{pf-Monotone}
Leveraging Lemma~\ref{lem-Converge}, the result can be proved using mathematical induction. To start with, we initialize $V_{\gamma,0}(s) = 0$ for all $s$. Hence, the base case (i.e., $\nu=0$) is true. Then, we assume the monotonicity holds at iteration $\nu$, and check whether the monotonicity still holds at iteration $\nu+1$. We recall that the estimated value function $V_{\gamma,\nu+1}(s)$ is updated using \eqref{eq-ValueIterationGammaDiscount}. Hence, the structural property is embedded in the state transition probability $P_{s,s'}(a)$. Using the state transition probabilities in Appendix~\ref{app-STP}, equation \eqref{eq-ValueIterationGammaDiscount} for the state with $\Delta>0$ can be written as \eqref{eq-EquivalentEq5}.
\begin{figure*}[!t]
\normalsize
\begin{equation}\label{eq-EquivalentEq5}
\begin{split}
V_{\gamma,\nu+1}(\Delta,t,i) = & \min_{a\in\{0,1\}}\left\lbrace \Delta+\gamma\sum_{\Delta',t',i'}Pr[(\Delta',t',i')\mid (\Delta,t,i),a]V_{\gamma,\nu}(\Delta',t',i')\right\rbrace\\
= & \min_{a\in\{0,1\}}\bigg\lbrace \Delta+\gamma\sum_{t',i'}\bigg[Pr[(\Delta+1,t',i')\mid (\Delta,t,i),a]V_{\gamma,\nu}(\Delta+1,t',i') +\\
& \hspace{9em} Pr[(0,t',i')\mid (\Delta,t,i),a]V_{\gamma,\nu}(0,t',i')\bigg]\bigg\rbrace.
\end{split}
\end{equation}
\hrulefill
\vspace*{4pt}
\end{figure*}
Moreover, for any $\Delta_1>0$ and $\Delta_2>0$,
\begin{multline*}
Pr[(\Delta_1+1,t',i')\mid (\Delta_1,t,i),a] = \\
Pr[(\Delta_2+1,t',i')\mid (\Delta_2,t,i),a].
\end{multline*}
\[
Pr[(0,t',i')\mid (\Delta_1,t,i),a] = Pr[(0,t',i')\mid (\Delta_2,t,i),a].
\]
Let $V^a_{\gamma,\nu+1}(\Delta,t,i)$ be the resulting $V_{\gamma,\nu+1}(\Delta,t,i)$ when action $a$ is chosen. Then, we have \eqref{eq-EquivalentEq6} holds. 
\begin{figure*}[!t]
\normalsize
\begin{multline}\label{eq-EquivalentEq6}
V^a_{\gamma,\nu+1}(\Delta+1,t,i) - V^a_{\gamma,\nu+1}(\Delta,t,i) = \\
1 +\gamma\sum_{t',i'}\bigg\lbrace Pr[(\Delta+1,t',i')\mid (\Delta,t,i),a]\bigg(V_{\gamma,\nu}(\Delta+2,t',i') -V_{\gamma,\nu}(\Delta+1,t',i')\bigg)\bigg\rbrace.
\end{multline}
\hrulefill
\vspace*{4pt}
\end{figure*}
Combining with the assumption for iteration $\nu$, we can easily conclude that $V^a_{\gamma,\nu+1}(\Delta+1,t,i) \geq V^a_{\gamma,\nu+1}(\Delta,t,i)$ when $\Delta>0$ for $a\in\{0,1\}$. Since, $V_{\gamma,\nu+1}(\Delta,t,i) =\min_{a\in\{0,1\}}\{V^a_{\gamma,\nu+1}(\Delta,t,i)\}$, we know that $V_{\gamma,\nu+1}(\Delta+1,t,i) \geq V_{\gamma,\nu+1}(\Delta,t,i)$ when $\Delta>0$. Finally, by mathematical induction, we can conclude that Lemma~\ref{lem-Monotone} is true.

\section{Proof of Theorem~\ref{thm-optimalexist}}\label{pf-optimalexist}
We first define $h_{\gamma}(s)\triangleq V_{\gamma}(s)- V_{\gamma}(s^{ref})$ as the relative value function and choose the reference state $s^{ref}=(0,0,-1)$. For simplicity, we abbreviate the reference state $s^{ref}$ as $0$ for the remainder of this proof. Then, we show that $\mathcal{M}$ verifies the two conditions given in~\cite{b17}. As a result, the existence of the optimal policy is guaranteed.
\begin{enumerate}
\item \textit{There exists a non-negative $N$ such that $-N\leq h_{\gamma}(s)$ for all $s$ and $\gamma$}: Leveraging Lemma~\ref{lem-Monotone}, we can easily conclude that $h_{\gamma}(s)$ is also non-decreasing in $\Delta$ when $\Delta>0$. In the following, we consider the policy $\phi$ being the threshold policy with $\tau=0$. Then, we know that policy $\phi$ induces an irreducible ergodic Markov chain and the expected cost is finite. Let $c_{s,s'}(\phi)$ be the expected cost of a first passage from $s\in\mathcal{S}$ to $s'\in\mathcal{S}$ when policy $\phi$ is adopted.  Then, by~\cite[Proposition 4]{b17}, we know that $c_{s,0}(\phi)$ is finite. Meanwhile, $h_{\gamma}(s)\leq c_{s,0}(\phi)$ as is given in the proof of~\cite[Proposition 5]{b17}. Hence, we have $V_{\gamma}(0) - V_{\gamma}(s) \leq c_{0,s}(\phi)$ and $V_{\gamma}(0) - V_{\gamma}(s) = -h_{\gamma}(s)$. Hence, we have $h_{\gamma}(s) \geq -c_{0,s}(\phi)$. Combining with the monotonicity proved in Lemma~\ref{lem-Monotone}, we can choose $-N=\min_{s\in G}\{c_{0,s}(\phi)\}$, where $G=\{s:\Delta\in\{0,1\}\}$. This condition indicates that~\cite[Assumption 2]{b17} holds.
\item \textit{$\mathcal{M}$ has a stationary policy $\phi$ inducing an irreducible, ergodic Markov chain. Moreover, the resulting expected cost is finite}: We consider the policy $\phi$ being the threshold policy with $\tau=0$. Then, according to Section~\ref{sec-PolicyPerformance}, it induces an irreducible, ergodic Markov chain and the resulting expected cost is finite. Then, according to~\cite[Proposition 5]{b17}, we can conclude that~\cite[Assumptions 1 and 3]{b17} hold.
\end{enumerate}
As the two conditions are verified, the existence of the optimal policy is guaranteed by~\cite[Theorem]{b17}. Moreover, the minimum expected cost is independent of the initial state.

\section{Proof of Theorem~\ref{thm-ASMConvergence}}\label{pf-ASMConvergence}
We inherit the definitions and notations introduced in Section~\ref{sec-OptimalPolicyExistence}. We further define $v_{\gamma,n}(\cdot)$ as the minimum expected $\gamma$-discounted cost for operating the system from time $0$ to time $n-1$. It is known that $\lim_{n\rightarrow\infty}v_{\gamma,n}(s) = V_{\gamma}(s)$, for all $s\in\mathcal{S}$. We also define the expected cost under policy $\phi$ as
\[
J_{\phi}(s) = \limsup_{K\rightarrow\infty}\frac{1}{K}\mathbb{E}_{\phi}\left(\sum_{k=0}^{K-1}C(s_k)\mid s_0\right),
\]
and $J(s) \triangleq \inf_{\phi} J_{\phi}(s)$ is the best that can be achieved. $V_{\phi,\gamma}^{(m)}(s)$, $V_{\gamma}^{(m)}(s)$, $v_{\gamma,n}^{(m)}(s)$, $J^{(m)}_{\phi}(s)$, $J^{(m)}(s)$, and $h_{\gamma}^{(m)}(s)$ are defined analogously for $\mathcal{M}^{(m)}$. With the above definitions in mind, we show that our system verifies the two assumptions given in~\cite{b23}.
\begin{itemize}
\item \textit{Assumption 1}: There exists a non-negative (finite) constant $L$, a non-negative (finite) function $M(\cdot)$ on $\mathcal{S}$, and constants $m_0$ and $\gamma_0\in [0, 1)$, such that $-L \leq h_{\gamma}^{(m)}(s) \leq M(s)$, for $s\in \mathcal{S}^{(m)}$, $m \geq m_0$, and $\gamma\in (\gamma_0, 1)$: $L$ can be chosen in the same way as presented in the proof of Theorem~\ref{thm-ASMConvergence}. More precisely, $-L =\min_{s\in G}\{h_{\gamma}^{(m)}(s)\}$, where $G=\{s:\Delta\in\{0,1\}\}$. Let $c_{s,0}(\phi)$ be the expected cost of a first passage from $s\in\mathcal{S}$ to the reference state $0$ when policy $\phi$ is adopted and $c_{x,0}^{(m)}(\phi)$ is defined analogously for $\mathcal{M}^{(m)}$. In the following, we consider the policy $\phi$ being the threshold policy with $\tau=\infty$. We recall from Section~\ref{sec-PolicyPerformance}  that the policy $\phi$ induces an irreducible ergodic Markov chain, and the expected cost is finite. Hence, $h_{\gamma}^{(m)}(s)\leq c_{s,0}^{(m)}(\phi)$ by~\cite[Proposition 5]{b17} and $c_{x,0}(\phi)$ is finite by~\cite[Proposition 4]{b17}. We also know from the proof of ~\cite[Corollary 4.3]{b23} that $c_{s,0}(\phi)$ satisfies the following equation.
\begin{equation}\label{eq-c}
c_{s,0}(\phi) = C(s) + \sum_{s'\in\mathcal{S}-\{0\}}P^{\phi}_{ss'}c_{s',0}(\phi),
\end{equation}
where $P^{\phi}_{ss'}$ is the state transition probability from state $s$ to $s'$ under policy $\phi$ for $\mathcal{M}$. $P^{(m),\phi}_{ss'}$ is defined analogously for $\mathcal{M}^{(m)}$. We can verify in a similar way to the proof of Lemma~\ref{lem-Monotone} that $c_{s,0}(\phi)$ is non-decreasing in $\Delta>0$. The proof is omitted here for the sake of space. Then, we have \eqref{eq-ASMInequality} holds
\begin{figure*}[!t]
\normalsize
\begin{equation}\label{eq-ASMInequality}
\begin{split}
\sum_{y\in\mathcal{S}^{(m)}_{-1}}P_{sy}^{(m),\phi}c_{y,0}(\phi) & = \sum_{y\in\mathcal{S}^{(m)}_{-1}}P^{\phi}_{sy}c_{y,0}(\phi)+\sum_{y\in\mathcal{S}^{(m)}_{-1}}\left(\sum_{z\in\mathcal{S}\setminus\mathcal{S}^{(m)}}P^{\phi}_{sz}q_z(y)\right)c_{y,0}(\phi)\\
& = \sum_{y\in\mathcal{S}^{(m)}_{-1}}P^{\phi}_{sy}c_{y,0}(\phi)+\sum_{z\in\mathcal{S}\setminus\mathcal{S}^{(m)}}P^{\phi}_{sz}\left(\sum_{y\in\mathcal{S}^{(m)}_{-1}}q_z(y)c_{y,0}(\phi)\right)\\
& \leq \sum_{y\in\mathcal{S}^{(m)}_{-1}}P^{\phi}_{sy}c_{y,0}(\phi)+\sum_{z\in\mathcal{S}\setminus\mathcal{S}^{(m)}}P^{\phi}_{sz}c_{z,0}(\phi)\\
& = \sum_{y\in\mathcal{S}\setminus\{0\}}P^{\phi}_{sy}c_{y,0}(\phi).
\end{split}
\end{equation}
\hrulefill
\vspace*{4pt}
\end{figure*}
where $\mathcal{S}^{(m)}_{-1} = \mathcal{S}^{(m)}\setminus\{0\}$ and $q_{s'}(s)=\mathbbm{1}\{t'=t;i'=i\}$, which is an indicator function with value $1$ when the transitions to state $s'$ are redirected to state $s$. Otherwise, $q_{s'}(s)=0$. Moreover, $\sum_{s\in\mathcal{S}^{(m)}_{-1}}q_{s'}(s)=1$. Applying \eqref{eq-ASMInequality} to \eqref{eq-c} yields
\begin{equation*}
c_{s,0}(\phi) \geq C(s) + \sum_{y\in\mathcal{S}^{(m)}-\{0\}}P_{sy}^{(m),\phi}c_{y,0}(\phi).
\end{equation*}
Bearing in mind that $c_{s,0}^{(m)}(\phi)$ satisfies the following.
\begin{equation*}
c_{s,0}^{(m)}(\phi) = C(s) + \sum_{y\in\mathcal{S}^{(m)}-\{0\}}P_{sy}^{(m),\phi}c_{y,0}^{(m)}(\phi).
\end{equation*}
Hence, we can conclude that $c_{s,0}^{(m)}(\phi)\leq c_{s,0}(\phi)$. Then, we can choose $M(s)=c_{s,0}(\phi)<\infty$.

\item \textit{Assumption 2}: $\limsup_{m\rightarrow\infty}J^{(m)} \triangleq J^*<\infty$ and $J^*\leq J(s)$ for all $s\in\mathcal{S}$: We first show that~\cite[Proposition 5.1]{b23} is true. Since we redistribute the transitions in a way such that, for each $s'\in\mathcal{S}\setminus\mathcal{S}^{(m)}$,
\begin{equation*}
\sum_{y\in\mathcal{S}^{(m)}}q_{s'}(y)v_{\gamma,n}(y) = v_{\gamma,n}(s),
\end{equation*}
where $s=(m,t',i')$. Hence, we only need to verify that, for each $s'\in\mathcal{S}\setminus\mathcal{S}^{(m)}$ and $s=(m,t',i')$,
\begin{equation}\label{eq-inequalityv}
v_{\gamma,n}(s) \leq v_{\gamma,n}(s').
\end{equation}
To this end, we notice that $v_{\gamma,n}(s)$ satisfies the following inductive form~\cite{b23}.
\begin{equation*}
v_{\gamma,n+1}(s) = \min_{a}\left\lbrace C(s)+\gamma\sum_{s'\in\mathcal{S}}P_{s,s'}(a)v_{\gamma,n}(s')\right\rbrace.
\end{equation*}
By following similar steps to those in the proof of Lemma~\ref{lem-Monotone}, we can prove the monotonicity of $v_{\gamma,n}(s)$ for $\Delta>0$ and $n\geq0$. The proof is omitted for the sake of space. Hence, \eqref{eq-inequalityv} is true since $\Delta'>m>0$. Apparently, $J(s)$ is finite for $s\in\mathcal{S}$. Then, according to~\cite[Corollary 5.2]{b23}, assumption 2 is true.
\end{itemize}
Consequently, by~\cite[Theorem 2.2]{b23}, we know
\begin{itemize}
\item There exists an average cost optimal stationary policy for $\mathcal{M}^{(m)}$.
\item Any limit point of the sequence of optimal policies for $\mathcal{M}^{(m)}$ is optimal for $\mathcal{M}$.
\end{itemize}

\section{Proof of Theorem~\ref{thm-PIT}}\label{pf-PIT}
The proof is based on~\cite[pp.~42-43]{b18}. We consider a generic MDP $\mathcal{M} = (\mathcal{S},\mathcal{A},\mathcal{P},\mathcal{C})$. Let $C(s,A)$ be the instant cost for being at state $s\in\mathcal{S}$ under policy $A$. We also define $P^A_{s,s'}$ as the probability that applying policy $A$ at state $s$ will lead to state $s'$. Finally, $V^A(s)$ is defined as the value function resulting from the operation of policy $A$. Since $B$ is chosen over $A$, we have
\[
C(s,B) + \sum_{s'\in\mathcal{S}}P^B_{s,s'}V^A(s')\leq C(s,A) + \sum_{s'\in\mathcal{S}}P^A_{s,s'}V^A(s').
\]
Then, for each $s\in\mathcal{S}$, we define
\begin{multline*}
\gamma_s \triangleq C(s,B) + \sum_{s'\in\mathcal{S}}P^B_{s,s'}V^A(s') - \\
C(s,A) - \sum_{s'\in\mathcal{S}}P^A_{s,s'}V^A(s')\leq 0.
\end{multline*}
Meanwhile, both policies satisfy their own Bellman equation.
\[
V^A(s) + \theta^A = C(s,A) + \sum_{s'\in\mathcal{S}}P^A_{s,s'}V^A(s')\quad s\in\mathcal{S},
\]
\[
V^B(s) + \theta^B = C(s,B) + \sum_{s'\in\mathcal{S}}P^B_{s,s'}V^B(s')\quad s\in\mathcal{S},
\]
where $\theta^A$ and $\theta^B$ are the expected costs resulting from the operation of policy $A$ and policy $B$, respectively. Then, subtracting the two expressions and bringing in the expression for $\gamma_s$ yield
\[
V^B(s) - V^A(s) + \theta^B - \theta^A = \gamma_s + \sum_{s'\in\mathcal{S}}P^B_{s,s'}(V^B(s')-V^A(s')).
\]
Let $V^{\Delta}(s) \triangleq V^B(s) - V^A(s)$ and $\theta^{\Delta}\triangleq\theta^B - \theta^A$. Then, we have
\[
V^{\Delta}(s) + \theta^{\Delta} = \gamma_s + \sum_{s'\in\mathcal{S}}P^B_{s,s'}V^{\Delta}(s')\quad s\in\mathcal{S}.
\]
We know that
\[
\theta^{\Delta} = \sum_{s\in\mathcal{S}}\pi^B_s\gamma_s,
\]
where $\pi^B_s$ is the steady-state probability of state $s$ under policy $B$. Since $\pi^B_s$ is non-negative and $\gamma_s$ is non-positive, we can conclude that $\theta^{\Delta}\leq0$. Consequently, $\theta^B\leq\theta^A$.

Then, we prove that the resulting policy is optimal when the policy improvement step converges. We prove this by contradiction. We assume that there are two policies $A$ and $B$ that satisfy $\theta^B<\theta^A$. Meanwhile, the policy improvement step has converged to policy $A$. Since the policy has converged, we know that $\gamma_s\geq0$ for all $s\in\mathcal{S}$. Hence, $\theta^{\Delta}\geq0$. Then, according to the definition of $\theta^{\Delta}$, we have $\theta^B\geq\theta^A$, which contradicts the assumption. Hence, superior policies cannot remain undiscovered. Then, we can conclude that the resulting policy is optimal when the policy iteration algorithm converges.

\section{Proof of Theorem~\ref{thm-optimalpolicy}}\label{pf-optimalpolicy}
The general procedure for the optimality proof can be summarized as follows.
\begin{enumerate}
\item \textit{Policy Evaluation}: We calculate the value function resulting from the adoption of the threshold policy with $\tau=1$.
\item \textit{Policy Improvement}: We obtain a new policy using the value function obtained in the previous step and verify that the new policy is the threshold policy with $\tau=1$.
\end{enumerate}
In the following, we elaborate on these two steps.
\paragraph{Policy Evaluation} We first calculate the value function under the threshold policy with $\tau=1$. For simplicity of notation, we denote the policy as $\phi$. Let $V^\phi(\Delta)$ be the value function of state $(\Delta,0,-1)$ under the policy $\phi$. Then, combining \eqref{eq-CompactBellman} with the expression of $P_{\Delta,\Delta'}(a)$ in Lemma~\ref{lem-MSTPAss1} and Lemma~\ref{lem-MSTPAss2}, $V^\phi(\Delta)$ satisfies the following system of linear equations.
\begin{equation}\label{eq-valueiterationGen}
V^{\phi}(0) = - \theta^{\phi} + pV^{\phi}(1) + (1-p)V^{\phi}(0).
\end{equation}
For \textbf{Assumption 1} and each $\Delta\geq1$,
\begin{multline*}
V^{\phi}(\Delta) = C(\Delta,1) - ET\theta^{\phi} +\\
\sum_{t=1}^{t_{max}}\left[p_t\left( \sum_{k=0}^{t-1}P^t_{\Delta,k}(1)V^{\phi}(k) + P^t_{\Delta,\Delta+t}(1)V^{\phi}(\Delta+t)\right)\right],
\end{multline*}
and, for \textbf{Assumption 2} and each $\Delta\geq1$, we have \eqref{eq-EquivalentEq10} holds
\begin{figure*}[!t]
\normalsize
\begin{multline}\label{eq-EquivalentEq10}
V^{\phi}(\Delta) = C(\Delta,1) - ET\theta^{\phi} + \sum_{t=1}^{t_{max}}\left[p_t\left( \sum_{k=0}^{t-1}P^t_{\Delta,k}(1)V^{\phi}(k) + P^t_{\Delta,\Delta+t}(1)V^{\phi}(\Delta+t)\right)\right] +\\
p_{t^+}\left( \sum_{k=0}^{t_{max}-1}P^{t^+}_{\Delta,k}(1)V^{\phi}(k) + P^{t^+}_{\Delta,\Delta+t_{max}}(1)V^{\phi}(\Delta+t_{max})\right).
\end{multline}
\hrulefill
\vspace*{4pt}
\end{figure*}
where $\theta^{\phi}$ is the expected AoII resulting from the adoption of $\phi$. It is difficult to solve the above system of linear equations directly for the exact solution. However, as we will see later, some structural properties of the value function are sufficient. These properties are summarized in the following lemma.
\begin{lemma}\label{lem-VDeltaproperty}
$V^{\phi}(\Delta)$ satisfies the following equations.
\[
V^{\phi}(1) - V^{\phi}(0) = \frac{\theta^{\phi}}{p},
\]
\[
V^{\phi}(\Delta+1) - V^{\phi}(\Delta) = \sigma\quad \Delta\geq1,
\]
where for \textbf{Assumption 1},
\[
\sigma = \ddfrac{\sum_{t=1}^{t_{max}}p_t\left(\frac{1-(1-p)^{t}}{p}\right)}{1-\sum_{t=1}^{t_{max}}pp_t(1-p)^{t-1}},
\]
and, for \textbf{Assumption 2},
\[
\sigma = \ddfrac{\sum_{t=1}^{t_{max}}p_t\left(\frac{1-(1-p)^t}{p}\right) + p_{t^+}\left(\frac{1-(1-p)^{t_{max}}}{p}\right)}{1-\left(\sum_{t=1}^{t_{max}}pp_t(1-p)^{t-1} + p_{t^+}(1-p)^{t_{max}}\right)}.
\]
\end{lemma}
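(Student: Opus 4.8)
I would not attempt to solve the infinite linear system for $V^{\phi}$; instead I would extract the two identities directly from the reduced policy‑evaluation equations \eqref{eq-valueiterationGen} and \eqref{eq-EquivalentEq10} (and their Assumption~1 counterpart), using only the $\Delta$‑invariance and shift properties already recorded in Lemma~\ref{lem-MSTPAss1}, Lemma~\ref{lem-MSTPAss2} and Lemma~\ref{lem-CompactCost}. The first identity is essentially free: \eqref{eq-valueiterationGen} rearranges to $pV^{\phi}(0)=-\theta^{\phi}+pV^{\phi}(1)$, i.e.\ $V^{\phi}(1)-V^{\phi}(0)=\theta^{\phi}/p$, so no further work is needed there.

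\textbf{The second identity.} Set $d_{\Delta}\triangleq V^{\phi}(\Delta+1)-V^{\phi}(\Delta)$ for $\Delta\geq1$ and subtract the policy‑evaluation equation at $\Delta$ from the one at $\Delta+1$ (both are in the ``$a=1$'' regime since $\Delta\geq1$). Two structural facts collapse the difference. First, by Lemma~\ref{lem-CompactCost} and \eqref{eq-EquivalentEq3}, for $\Delta>0$ one has $C^{k}(\Delta+1)-C^{k}(\Delta)=(1-p)^{k}$, hence $C(\Delta+1,1)-C(\Delta,1)$ equals a $\Delta$‑independent constant $\alpha$ which is precisely the numerator in the definition of $\sigma$. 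Second, by the explicit formulas in Lemma~\ref{lem-MSTPAss1} (resp.\ Lemma~\ref{lem-MSTPAss2}) the transition probabilities $P^{t}_{\Delta,k}(1)$ into states $k\leq t_{max}-1$ are independent of $\Delta$ for $\Delta\geq1$, while the only $\Delta$‑dependent mass is the weight $p(1-p)^{t-1}$ sent to the shifted state $\Delta+t$ (and, under Assumption~2, the extra weight $p_{t^{+}}(1-p)^{t_{max}}$ sent to $\Delta+t_{max}$). Consequently every term involving $V^{\phi}(k)$ with $k\leq t_{max}-1$, as well as the $-ET\theta^{\phi}$ term, cancels, leaving
\[
d_{\Delta}=\alpha+\sum_{t=1}^{t_{max}}p_{t}p(1-p)^{t-1}d_{\Delta+t}\qquad(\Delta\geq1),
\]
with the convention that under Assumption~2 one adds $p_{t^{+}}(1-p)^{t_{max}}d_{\Delta+t_{max}}$ on the right. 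The total weight on the right‑hand tail is exactly the $\rho$ appearing in the denominator of $\sigma$, and the claimed $\sigma$ is by construction the unique constant solving $c=\alpha+\rho c$. Writing $e_{\Delta}\triangleq d_{\Delta}-\sigma$ kills the forcing term and gives the homogeneous identity $e_{\Delta}=\sum_{t}p_{t}p(1-p)^{t-1}e_{\Delta+t}$ (plus the $p_{t^{+}}(1-p)^{t_{max}}e_{\Delta+t_{max}}$ term under Assumption~2).

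\textbf{Closing the argument, and the main obstacle.} It remains to deduce $e_{\Delta}\equiv0$. If $|e_{\Delta}|\leq A\Delta+B$ for all $\Delta$, inserting this bound into the homogeneous identity yields $|e_{\Delta}|\leq \rho A\Delta+(\rho B+t_{max}\rho A)$ with $\rho<1$; iterating, the slope contracts like $\rho^{n}A\to0$ and the intercept like $\rho^{n}(B+t_{max}An)\to0$, so $e_{\Delta}=0$ for every $\Delta$. Thus everything reduces to the one genuinely delicate input: that $(e_{\Delta})$, equivalently $V^{\phi}$, grows at most linearly in $\Delta$. \emph{This is the step I expect to require the most care.} I would obtain it through the relative value function, bounding $0\leq V^{\phi}(\Delta)-V^{\phi}(0)\leq c_{(\Delta,0,-1),(0,0,-1)}(\phi)$: the lower bound is the monotonicity of $V^{\phi}$ in $\Delta$, proved by rerunning the induction of Lemma~\ref{lem-Monotone} for the fixed policy $\phi$; the upper bound is the finiteness of the first‑passage cost used in the proof of Theorem~\ref{thm-optimalexist}, and its linear dependence on $\Delta$ comes from the fact that under $\tau=1$ a transmission lasts at most $t_{max}$ slots and clears the mismatch with probability bounded below by a constant independent of $\Delta$, so the number of re‑transmissions before the mismatch is corrected is stochastically dominated by a geometric variable and the accumulated cost is $O(\Delta)$.

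\textbf{Assumption~2.} The derivation is identical after replacing $p_{t}P^{t}_{\Delta,\Delta'}(1)$ by the combined kernel $\Upsilon(\Delta,t)=p_{t}P^{t}_{\Delta-t,\Delta}(1)+p_{t^{+}}P^{t^{+}}_{\Delta-t,\Delta}(1)$ of Theorem~\ref{thm-Sigma}, using the Assumption~2 parts of Lemma~\ref{lem-MSTPAss2} and Lemma~\ref{lem-CompactCost}; the forcing constant becomes the Assumption~2 numerator of $\sigma$ and the contraction factor becomes $\rho=\sum_{t}p_{t}p(1-p)^{t-1}+p_{t^{+}}(1-p)^{t_{max}}$, matching the denominator in the Assumption~2 expression for $\sigma$.
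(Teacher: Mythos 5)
Your proposal is correct, but it takes a genuinely different route from the paper (the first identity, rearranging \eqref{eq-valueiterationGen}, is the same trivial step in both). The paper proves the constancy of $V^{\phi}(\Delta+1)-V^{\phi}(\Delta)$ by running iterative policy evaluation from $V^{\phi}_{0}\equiv 0$ and inducting over iterations: each iterate has a $\Delta$-independent increment $\sigma_{\nu}$ — using exactly the two structural facts you invoke, namely the $\Delta$-independence of $P^{t}_{\Delta,k}(1)$ for $k\le t-1$ (and of the shifted mass $p(1-p)^{t-1}$, resp. $p_{t^{+}}(1-p)^{t_{max}}$) and the $\Delta$-independence of $C(\Delta+1,1)-C(\Delta,1)$ — and the property is passed to the limit $V^{\phi}$, after which the fixed-point equation $\sigma=\alpha+\rho\sigma$ gives the closed form, identical to your construction. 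You instead difference the stationary policy-evaluation (Poisson) equations directly, obtain $d_{\Delta}=\alpha+\sum_{t}p_{t}p(1-p)^{t-1}d_{\Delta+t}$ (plus the $p_{t^{+}}$ term under Assumption 2), and force $d_{\Delta}\equiv\sigma$ by a contraction argument that requires an a priori at-most-linear growth bound on $V^{\phi}$, supplied via the first-passage-cost representation of the relative value function (geometric domination of the number of transmissions under $\tau=1$ gives $c_{(\Delta,0,-1),(0,0,-1)}(\phi)=O(\Delta)$, and $\rho\le p\le\tfrac12<1$ makes the iteration contract). Your route is arguably more careful on a point the paper glosses over: on an infinite state space the system \eqref{eq-policyevaluate} is underdetermined and the convergence of iterative policy evaluation is asserted rather than proved, whereas you make explicit the growth condition that selects the intended solution. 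The price is the extra linear-growth lemma, which in your write-up is still only sketched (the geometric-domination cost bound and the monotonicity/lower-bound claim for the fixed policy would need to be written out, along the lines of the first-passage estimates already used in the proof of Theorem~\ref{thm-optimalexist}); the paper's induction avoids any growth estimate by working with the algorithmic iterates. Since both arguments rest on the same two computational inputs, the expressions for $\sigma$ under Assumptions 1 and 2 come out identically.
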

\begin{proof}
First of all, from \eqref{eq-valueiterationGen}, we can easily obtain
\[
V^{\phi}(1) - V^{\phi}(0) = \frac{\theta^{\phi}}{p}.
\]
Then, we show that $V^{\phi}(\Delta+1) - V^{\phi}(\Delta)$ is constant for $\Delta\geq1$. We start with \textbf{Assumption 1}. According to Theorem~\ref{thm-optimalexist}, the optimal policy exists. Hence, the iterative policy evaluation algorithm~\cite[pp.74]{b19} can be used to solve the system of linear equations for $V^{\phi}(\Delta)$. Let $V_{\nu}^{\phi}(\Delta)$ be the estimated value function at iteration $\nu$ of the iterative policy evaluation algorithm. Without loss of generality, we initialize $V^{\phi}_0(\Delta)=0$ for all $\Delta$. Then, for each $\Delta\geq1$, the value function is updated in the following way.
\begin{multline*}
V^{\phi}_{\nu+1}(\Delta) = C(\Delta,1) - ET\theta^{\phi} +\\
\sum_{t=1}^{t_{max}}\left[p_t\left( \sum_{k=0}^{t-1}P^t_{\Delta,k}(1)V^{\phi}_\nu(k) + P^t_{\Delta,\Delta+t}(1)V^{\phi}_\nu(\Delta+t)\right)\right].
\end{multline*}
Then, we have $\lim_{\nu\rightarrow\infty}V^\phi_\nu(\Delta) = V^\phi(\Delta)$. Hence, we can prove the desired results using mathematical induction. The base case $\nu=0$ is true by initialization. Then, we assume $V^\phi_\nu(\Delta+1) - V^\phi_\nu(\Delta)=\sigma_\nu$ where $\sigma_\nu$ is independent of $\Delta\geq1$. Then, we will exam whether $V^\phi_{\nu+1}(\Delta+1) - V^\phi_{\nu+1}(\Delta)$ is independent of $\Delta\geq1$. Leveraging the properties in Lemma~\ref{lem-MSTPAss1}, we have \eqref{eq-EquivalentEq8} holds.
\begin{figure*}[!t]
\normalsize
\begin{equation}\label{eq-EquivalentEq8}
\begin{split}
V^\phi_{\nu+1}(\Delta+1) - & V^\phi_{\nu+1}(\Delta)\\
= & C(\Delta+1,1) - ET\theta^\phi + \sum_{t=1}^{t_{max}}\left[p_t\left(\sum_{k=0}^{t-1}P^t_{\Delta+1,k}(1)V^\phi_\nu(k) + P^t_{\Delta+1,\Delta+1+t}(1)V^\phi_\nu(\Delta+t+1)\right)\right] -\\
& C(\Delta,1) + ET\theta^\phi - \sum_{t=1}^{t_{max}}\left[p_t\left(\sum_{k=0}^{t-1}P^t_{\Delta,k}(1)V^\phi_\nu(k) + P^t_{\Delta,\Delta+t}(1)V^\phi_\nu(\Delta+t)\right)\right]\\
= & C(\Delta+1,1) - C(\Delta,1) + \sum_{t=1}^{t_{max}}\bigg(p_tP^t_{\Delta,\Delta+t}(1)\sigma_\nu\bigg).
\end{split}
\end{equation}
\hrulefill
\vspace*{4pt}
\end{figure*}
According to Lemma~\ref{lem-CompactCost}, we have
\[
C(\Delta+1,1) - C(\Delta,1) = \sum_{t=1}^{t_{max}}\bigg(C^{t}(\Delta+1,1) - C^{t}(\Delta,1)\bigg).
\]
In the case of $\Delta\geq1$, we have
\begin{multline*}
C^{t}(\Delta+1,1) - C^{t}(\Delta,1) = \\
1 + \sum_{k=1}^{t-1}\bigg((k+\Delta+1)(1-p)^k - (k+\Delta)(1-p)^k\bigg)\\
= \frac{1-(1-p)^{t}}{p}\quad  1\leq t\leq t_{max}.
\end{multline*}
Combining together, we obtain
\[
C(\Delta+1,1) - C(\Delta,1) = \sum_{t=1}^{t_{max}}\left(p_t\frac{1-(1-p)^{t}}{p}\right).
\]
Hence, we can conclude that $V^\phi_{\nu+1}(\Delta+1) - V^\phi_{\nu+1}(\Delta)$ is independent of $\Delta$ when $\Delta\geq1$. Then, by mathematical induction, $V^\phi(\Delta) - V^\phi(\Delta+1)$ is independent of $\Delta$ when $\Delta\geq1$. We denote by $\sigma$ the constant. Then, $\sigma$ satisfies the following equation.
\begin{align*}
\sigma  =& V^\phi(\Delta) - V^\phi(\Delta+1) \\
= & \sum_{t=1}^{t_{max}}\left(\frac{p_t-p_t(1-p)^{t}}{p} + p_tp(1-p)^{t-1}\sigma\right).
\end{align*}
After some algebraic manipulations, we obtain
\[
\sigma = \ddfrac{\sum_{t=1}^{t_{max}}p_t\left(\frac{1-(1-p)^{t}}{p}\right)}{1-\sum_{t=1}^{t_{max}}pp_t(1-p)^{t-1}}.
\]
Then, we show that $V^{\phi}(\Delta+1) - V^{\phi}(\Delta)$ is independent of $\Delta\geq1$ under \textbf{Assumption 2}. Following the same steps, we can prove the desired results by mathematical induction. We first notice that, for each $\Delta\geq1$, the estimated value function is updated following \eqref{eq-EquivalentEq11}.
\begin{figure*}[!t]
\normalsize
\begin{multline}\label{eq-EquivalentEq11}
V^{\phi}_{\nu+1}(\Delta) = C(\Delta,1) - ET\theta^{\phi} +\sum_{t=1}^{t_{max}}\left[p_t\left( \sum_{k=0}^{t-1}P^t_{\Delta,k}(1)V_{\nu}^{\phi}(k) + P^t_{\Delta,\Delta+t}(1)V_{\nu}^{\phi}(\Delta+t)\right)\right] + \\
p_{t^+}\left( \sum_{k=0}^{t_{max}-1}P^{t^+}_{\Delta,k}(1)V_{\nu}^{\phi}(k) + P^{t^+}_{\Delta,\Delta+t_{max}}(1)V_{\nu}^{\phi}(\Delta+t_{max})\right).
\end{multline}
\hrulefill
\vspace*{4pt}
\end{figure*}
Meanwhile, the base case $\nu=0$ is true by initialization. Then, we assume $V^\phi_\nu(\Delta+1) - V^\phi_\nu(\Delta)=\sigma_\nu$ where $\sigma_\nu$ is independent of $\Delta\geq1$, and exam whether $V^\phi_{\nu+1}(\Delta+1) - V^\phi_{\nu+1}(\Delta)$ is independent of $\Delta\geq1$. Leveraging the properties in Lemma~\ref{lem-MSTPAss2}, we have
\begin{multline*}
V^{\phi}_{\nu+1}(\Delta+1) - V^{\phi}_{\nu+1}(\Delta)= C(\Delta+1,1) - C(\Delta,1) + \\
\left(\sum_{t=1}^{t_{max}}p_tP^t_{\Delta,\Delta+t}(1) + p_{t^+}P^{t^+}_{\Delta,\Delta+t_{max}}(1)\right)\sigma_{\nu}^{\phi}.
\end{multline*}
Moreover, according to the expressions in Lemma~\ref{lem-MSTPAss2}, we obtain
\begin{multline*}
\sum_{t=1}^{t_{max}}p_tP^t_{\Delta,\Delta+t}(1)+ p_{t^+}P^{t^+}_{\Delta,\Delta+t_{max}}(1) =\\
\sum_{t=1}^{t_{max}}p_tp(1-p)^{t-1} + p_{t^+}(1-p)^{t_{max}},
\end{multline*}
which is independent of $\Delta\geq1$. Leveraging the expression of $C(\Delta,1)$ in Lemma~\ref{lem-CompactCost}, we obtain
\begin{multline*}
C(\Delta,1) - C(\Delta-1,1) =\\
\sum_{t=1}^{t_{max}}p_t\bigg(\frac{1-(1-p)^t}{p}\bigg) + p_{t^+}\bigg(\frac{1-(1-p)^{t_{max}}}{p}\bigg).
\end{multline*}
We notice that $C(\Delta,1) - C(\Delta-1,1)$ is also independent of $\Delta\geq1$. Consequently, we can conclude that $V^{\phi}_{\nu+1}(\Delta+1) - V^{\phi}_{\nu+1}(\Delta)$ is independent of $\Delta\geq1$. Then, by mathematical induction, $V^{\phi}(\Delta+1) - V^{\phi}(\Delta)$ is independent of $\Delta\geq1$. We denote the constant by $\sigma$, which satisfies the following equation.
\begin{multline*}
\sigma = \sum_{t=1}^{t_{max}}p_t\bigg(\frac{1-(1-p)^t}{p}\bigg) + p_{t^+}\bigg(\frac{1-(1-p)^{t_{max}}}{p}\bigg) +\\
\left(\sum_{t=1}^{t_{max}}p_tp(1-p)^{t-1} + p_{t^+}(1-p)^{t_{max}}\right)\sigma.
\end{multline*}
After some algebraic manipulations, we obtain
\[
\sigma = \ddfrac{\sum_{t=1}^{t_{max}}p_t\left(\frac{1-(1-p)^t}{p}\right) + p_{t^+}\left(\frac{1-(1-p)^{t_{max}}}{p}\right)}{1-\left(\sum_{t=1}^{t_{max}}p_tp(1-p)^{t-1} + p_{t^+}(1-p)^{t_{max}}\right)}.
\]
\end{proof}
With Lemma~\ref{lem-VDeltaproperty} in mind, we can continue to the next step.

\paragraph{Policy Improvement} Here, we show that the new policy induced from the $V^\phi(\Delta)$ obtained in the previous step and $\theta^\phi$ is the threshold policy with $\tau=1$. To this end, we define $\delta V^\phi(\Delta)\triangleq V^{\phi,0}(\Delta)-V^{\phi,1}(\Delta)$, where $V^{\phi,a}(\Delta)$ is the value function resulting from taking action $a$ at state $(\Delta,0,-1)$. Then, the suggested action at state $(\Delta,0,-1)$ is $a=1$ if $\delta V^\phi(\Delta)\geq0$. Otherwise, $a=0$ is suggested. In the following, we investigate the expression of $\delta V^\phi(\Delta)$. We first notice that, for $\Delta\geq1$, $V^{\phi}(\Delta) = V^{\phi,1}(\Delta)$. Then, using Lemma~\ref{lem-VDeltaproperty}, we obtain
\begin{align*}
\delta V^\phi(\Delta) = & \Delta - \theta^\phi + (1-p)V^\phi(\Delta+1) + pV(0) - V^{\phi,1}(\Delta)\\
= & \Delta - \theta^\phi + (1-p)V^\phi(\Delta+1) + pV(0) - V^\phi(\Delta)\\
= & \Delta - 2\theta^\phi + [(1-p)-p(\Delta-1)]\sigma,
\end{align*}
where $\Delta\geq1$. We notice that
\[
\delta V^\phi(\Delta+1) - \delta V^\phi(\Delta) = 1 - p\sigma.
\]
For \textbf{Assumption 1}, plugging in the expression of $\sigma$ yields
\begin{align*}
1 - p\sigma = & 1- \ddfrac{\sum_{t=1}^{t_{max}}(p_t-p_t(1-p)^{t})}{1-\sum_{t=1}^{t_{max}}p_tp(1-p)^{t-1}}\\
= & \ddfrac{1-\sum_{t=1}^{t_{max}}p_tp(1-p)^{t-1} - \sum_{t=1}^{t_{max}}(p_t-p_t(1-p)^{t})}{1-\sum_{t=1}^{t_{max}}p_tp(1-p)^{t-1}}\\
= & \ddfrac{(1-2p)\sum_{t=1}^{t_{max}}p_t(1-p)^{t-1}}{1-\sum_{t=1}^{t_{max}}p_tp(1-p)^{t-1}}\geq0.
\end{align*}
For \textbf{Assumption 2}, we have
\begin{align*}
1 - p\sigma = & 1- \ddfrac{\sum_{t=1}^{t_{max}}p_t(1-(1-p)^t) + p_{t^+}(1-(1-p)^{t_{max}})}{1-\left(\sum_{t=1}^{t_{max}}p_tp(1-p)^{t-1} + p_{t^+}(1-p)^{t_{max}}\right)}\\
\geq & 1- \ddfrac{\sum_{t=1}^{t_{max}}p_t(1-(1-p)^t) + p_{t^+}(1-(1-p)^{t_{max}})}{1-\left(\sum_{t=1}^{t_{max}}p_t(1-p)^{t} + p_{t^+}(1-p)^{t_{max}}\right)}\\
=& 0.
\end{align*}
Consequently, when $\Delta\geq1$, $\delta V^\phi(\Delta+1) \geq \delta V^\phi(\Delta)$ for both assumptions. We notice that $\delta V^\phi(1) = 1 - 2\theta^\phi + (1-p)\sigma$. According to Condition~\ref{con}, $\theta^{\phi} = \bar{\Delta}_1\leq\frac{1+(1-p)\sigma}{2}$. Hence, we have
\[
\delta V^\phi(1) = 1 - 2\bar{\Delta}_1 + (1-p)\sigma\geq0.
\]
Combining together, we have
\[
\delta V^{\phi}(\Delta) \geq \delta V^{\phi}(1) \geq0\quad\Delta\geq1.
\]
Hence, the suggested action at state $(\Delta,0,-1)$ where $\Delta\geq1$ is to initiate the transmission (i.e., $a=1$). Now, the only missing part is the action at state $(0,0,-1)$. To determine the action, we recall from Theorem~\ref{thm-PIT} that the new policy will always be no worse than the old one. Meanwhile, by Condition~\ref{con}, $\bar{\Delta}_1\leq\bar{\Delta}_0$. Hence, the suggested action at state $(0,0,-1)$ is to stay idle (i.e., $a=0$). Combining with the suggested actions at other states, we can conclude that the policy improvement step yields the threshold policy with $\tau=1$.

Consequently, the policy iteration algorithm converges. Then, according to Theorem~\ref{thm-PIT}, we can conclude that the threshold policy with $\tau=1$ is optimal.

\end{document}